\documentclass[a4paper,twocolumn,11pt,accepted=2026-07-22]{quantumarticle}
\pdfoutput=1

\usepackage[utf8]{inputenc}
\usepackage[english]{babel}
\usepackage[T1]{fontenc}
\usepackage{amsmath,amssymb,amsfonts,amsthm,mathtools}
\usepackage{bm}
\usepackage{dsfont}
\usepackage{mathbbol}
\usepackage{graphicx}
\usepackage[numbers]{natbib}
\usepackage{hyperref}
\hypersetup{hidelinks}

\theoremstyle{definition}

\newtheorem{corollary}{Corollary}
\newtheorem{definition}{Definition}
\newtheorem{conjecture}{Conjecture}
\newtheorem{lemma}{Lemma}
\newtheorem{proposition}{Proposition}
\newtheorem{theorem}{Theorem}

\newtheorem{example}{Example}

\newcommand{\mbf}{\mathbf}
\newcommand{\mbb}{\mathbb}
\newcommand{\mc}{\mathcal}
\newcommand{\msf}{\mathsf}

\newcommand{\tr}{\operatorname{Tr}}

\newcommand{\ket}[1]{|#1\rangle}

\newcommand{\op}[2]{|#1\rangle\!\langle#2|}

\newcommand{\norm}[1]{\left\lVert#1\right\rVert}

\begin{document}

\title{Quantifiers and witnesses for the nonclassicality of measurements and of states}

\author{Yujie Zhang}
\email{yujie4physics@gmail.com}
\affiliation{Institute for Quantum Computing, University of Waterloo, Waterloo, Ontario Canada N2L 3G1}
\affiliation{Department of Physics \& Astronomy, University of Waterloo, Waterloo, Ontario, Canada, N2L 3G1}
\affiliation{Perimeter Institute for Theoretical Physics, 31 Caroline Street North, Waterloo, Ontario Canada N2L 2Y5}
\author{Y{\`i}l{\`e} Y{\=\i}ng}
\email{yile.ying@gmail.com}
\affiliation{Perimeter Institute for Theoretical Physics, 31 Caroline Street North, Waterloo, Ontario Canada N2L 2Y5}
\affiliation{Department of Physics \& Astronomy, University of Waterloo, Waterloo, Ontario, Canada, N2L 3G1}
\author{David Schmid} 
\email{davidschmid10@gmail.com}
\affiliation{Perimeter Institute for Theoretical Physics, 31 Caroline Street North, Waterloo, Ontario Canada N2L 2Y5}

\maketitle

\begin{abstract}
 
In recent work~\cite{zhang2024parellel}, we proposed a unified notion of nonclassicality that applies to arbitrary processes in quantum theory, including individual quantum states, measurements, and sets thereof. This notion is derived from the principle of generalized noncontextuality, but in a novel manner that applies to individual processes rather than full experiments or theories.
In the present work, we develop semidefinite-programming-based certificates and witnesses for the nonclassicality of states, sources, measurements, and sets thereof. These theory-dependent methods complement theory-independent approaches based on noncontextuality inequalities. We demonstrate the framework through a variety of explicit examples.
\end{abstract}
\tableofcontents 

\section{Introduction}

A principled way to demonstrate that a theory or experiment resists classical explanation is to prove that it cannot be represented in any generalized-noncontextual ontological model~\cite{gencontext,Schmid2024structuretheorem}. This approach can be motivated by a methodological version of Leibniz's principle of the identity of indiscernibles~\cite{Leibniz}. It is also equivalent to the impossibility of a positive quasiprobability representation~\cite{negativity,SchmidGPT,Schmid2024structuretheorem}. Furthermore, it coincides with the natural notion of classical explainability in GPTs~\cite{SchmidGPT,Schmid2024structuretheorem} arising in the framework of generalized probabilistic theories (GPTs)~\cite{Hardy,GPT_Barrett}.

In most works, the principle of generalized noncontextuality is applied to {\em experimental phenomena}, as a rigorous criterion for determining whether or not they are classically explainable. For any observed phenomenon, one can determine whether or not it admits of a noncontextual explanation. Aided by a growing toolkit of analytical methods~\cite{Schmid2018,Chaturvedi2021characterising,schmid2024PTM,Schmid2024structuretheorem,SchmidGPT,catani2024resource,PuseydelRio,schmid2024shadows,schmid2024addressing,mazurek2016experimental,grabowecky2021experimentally}, such investigations have been done in the areas of
quantum computation~\cite{Schmid2022Stabilizer,shahandeh2021quantum}, state discrimination~\cite{schmid2018contextual,flatt2021contextual,mukherjee2021discriminating,Shin2021}, interference~\cite{Catani2023whyinterference,catani2022reply,catani2023aspects,giordani2023experimental}, compatibility~\cite{Selby2023,selby2023accessible,Tavakoli2020}, uncertainty relations~\cite{catani2022nonclassical}, metrology~\cite{contextmetrology}, thermodynamics~\cite{contextmetrology,comar2024contextuality,lostaglio2018}, weak values~\cite{AWV, KLP19}, coherence~\cite{rossi2023contextuality,Wagner2024coherence,wagner2024inequalities}, quantum Darwinism~\cite{baldijao2021noncontextuality}, information processing and communication~\cite{POM,RAC,RAC2,Saha_2019,Yadavalli2020,PhysRevLett.119.220402,fonseca2024robustness}, cloning~\cite{cloningcontext}, broadcasting~\cite{jokinen2024nobroadcasting}, pre- and post-selection paradoxes~\cite{PP1}, randomness certification~\cite{Roch2021}, psi-epistemicity~\cite{Leifer}, and Bell~\cite{Wright2023invertible} and Kochen-Specker scenarios~\cite{operationalks,kunjwal2018from,Kunjwal16,Kunjwal19,Kunjwal20,specker,Gonda2018almostquantum}. 
Generalized noncontextuality also gives a rigorous criterion for determining whether or not {\em a full theory} is classically explainable or not. Some work has been done to characterize what can be said about generalized noncontextuality at the level of full theories~\cite{Schmid2024structuretheorem,SchmidGPT,Schmid2024reviewreformulation,schmid2020unscrambling}.

In Ref.~\cite{zhang2024parellel}, we showed how noncontextuality can also be used to induce a notion of nonclassicality {\em at the level of an individual quantum process}---for example, for a single state, measurement, or channel. The basic idea is simple: a quantum process is nonclassical if and only if it can be leveraged in a nontrivial way within {\em some} quantum circuit to generate data that cannot be reproduced in any generalized-noncontextual ontological model. 

The resulting boundary between classical and nonclassical does not always coincide with what one might intuitively expect based on traditional notions of classicality found in the literature. For instance, we showed in Ref.~\cite{zhang2024parellel} that, while it is true that all entangled states, incompatible sets of measurements, and entanglement-non-breaking channels are nonclassical according to our proposal, some separable states, compatible sets of measurements, and entanglement-breaking channels are {\em also} nonclassical. This raises the question of characterizing the classical-nonclassical boundary for individual quantum processes. First results regarding this boundary are given in Ref.~\cite{zhang2024parellel}.

One would moreover like to have {\em quantitative} methods for certifying and quantifying the nonclassicality of any given process. Here, we provide several semidefinite programs (SDPs) for this purpose. By considering the duals to these SDPs, we construct witnesses for the nonclassicality of a measurement or of a state. However, these SDP-based witnesses rely on the validity of quantum theory, analogous to how entanglement witnesses are not device-independent~\cite{Brunner2014}. This complements the derivation of theory-independent means of certifying the nonclassicality of a given measurement or state using noncontextuality inequalities~\cite{Schmid2018, mazurek2016experimental,schmid2024PTM}, analogous to how Bell inequalities are device-independent. Using known tools for deriving such inequalities, we show how one can certify entanglement of bipartite states that do not violate any steering (or Bell) inequalities. 

We apply these methods to many specific examples, and in particular to study the nonclassicality of some quantum measurements and collections of states that traditionally would have been considered to be classical.

\section{Preliminaries}
\label{sec:II}

We give a brief introduction to generalized noncontextuality in the simplest case of prepare-and-measure scenarios~\cite{gencontext}, adapted from the formulation in our companion work~\cite{zhang2024parellel}. As in that work, here we will also focus only on quantum theory, even though our approach can be immediately generalized to any generalized probabilistic theory~\cite{Hardy,GPT_Barrett}. As shown in Ref.~\cite{SchmidGPT}, there exists a noncontextual model for a prepare-and-measure scenario if and only if there exists a linear and diagram-preserving ontological model for the GPT representation of that scenario. 
The GPT representation of quantum theory is just the familiar one from quantum information theory, where a \textit{preparation} is associated with a quantum state $\rho$ and a measurement is associated with a positive operator-valued measure (POVM) $\msf{M}\coloneqq\{M_b\}_b$. 

In this work, we consider also preparation processes such as sources (probabilistic ensembles of states), denoted $\msf{P}\coloneqq\{p(a)\rho_a\}_a$, multi-states (sets of states), denoted $\msf{P}\coloneqq\{\rho_x\}_x$, and most generally multi-sources (sets of ensembles of states), denoted $\msf{P}\coloneqq\{\{p(a|x)\rho_{a|x}\}_{a}\}_{x}$; similarly, we consider measurements, denoted $\msf{M}\coloneqq\{M_{b}\}_{b}$, and sets of measurements, which we call multi-measurements, denoted $\msf{M}\coloneqq\{\{M_{b|y}\}_{b}\}_{y}$. These types of processes are pictured and discussed further in Ref.~\cite{zhang2024parellel}. 

\begin{figure}[htb!]
\centering
\includegraphics[width=0.4\textwidth]{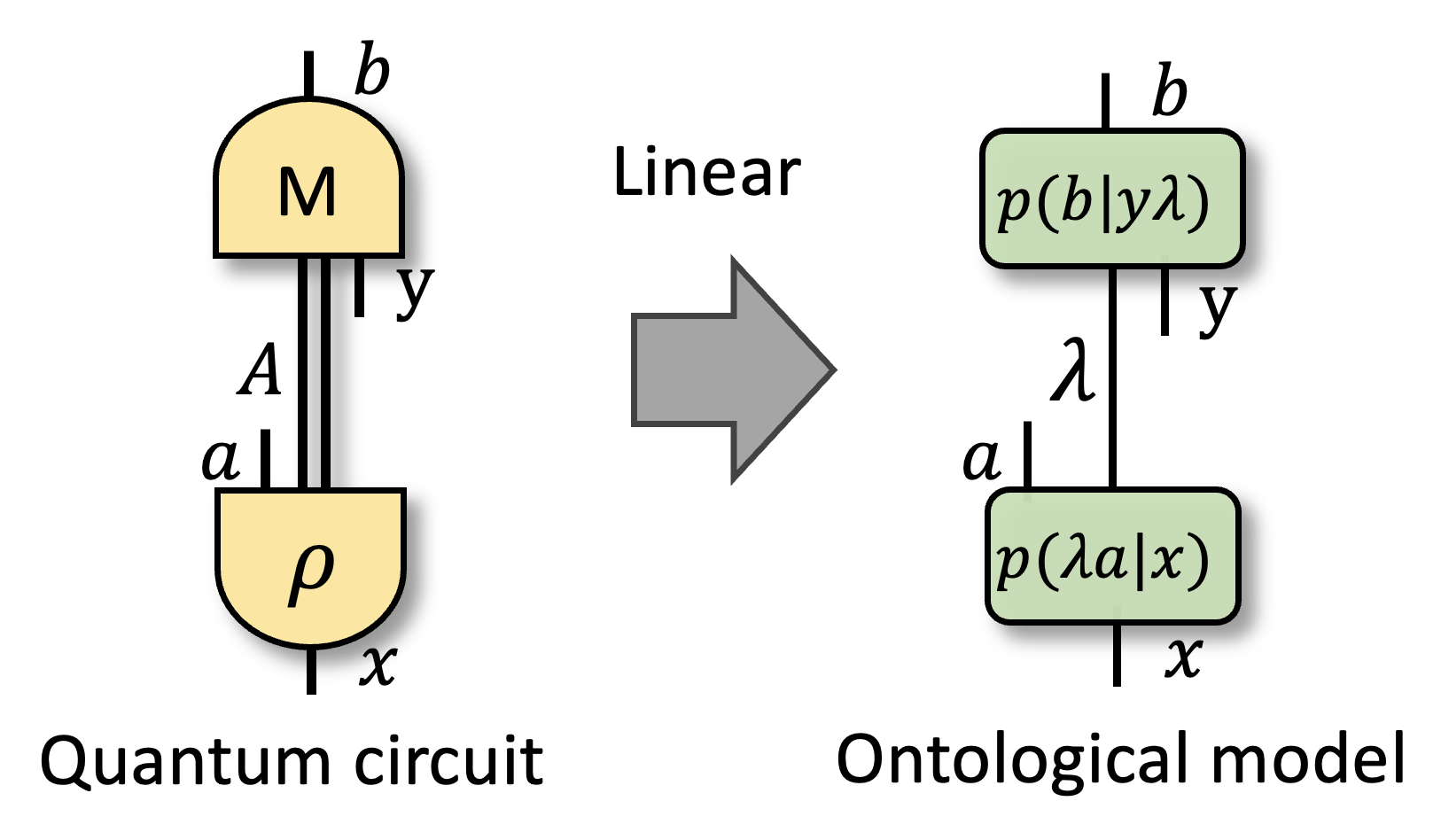}
\caption{ A prepare-and-measure circuit (left) involving the composition of a multi-source and a multi-measurement, and an ontological model for it (right).}
\label{PMfig}
\end{figure}

The most general prepare-and-measure scenario in quantum theory involves implementing a multi-measurement on the output of a multi-source, as shown in Figure~\ref{PMfig}. The quantum predictions for such a scenario are given by
\begin{equation}
p(ab|xy)=p(a|x)\tr[ M_{b|y}\rho_{a|x}].
\label{eq:pmstats}
\end{equation}

 In such a scenario, diagram-preservation implies that each normalized state $\rho_{a|x}$ is represented by a probability distribution $p(\lambda|a,x)$ over a set $\Lambda$ of ontic states, while each effect $M_{b|y}$ is represented by a response function $p(b|y,\lambda)$. Equivalently, the subnormalized state $p(a|x)\rho_{a|x}$ is represented by the joint distribution
\begin{equation}
p(a,\lambda|x)\coloneqq p(a|x)p(\lambda|a,x).
\label{eq:jointprep}
\end{equation}
These ontological quantities must reproduce the quantum statistics according to
\begin{equation}
p(ab|xy)=\sum_{\lambda} p(b|y,\lambda)p(a,\lambda|x).
\label{eq:PMNC}
\end{equation}

The states and effects in such a circuit can satisfy certain nontrivial linear relations, which we call \emph{operational identities}. We identify these operational identities by their coefficients, and denote the set of all operational identities for the preparation procedures and that for the measurement procedures, respectively, by
\begin{subequations}
\begin{align}
 \mc{O}(\msf{P})
 &\coloneqq
 \Bigl\{\{\alpha_{a,x}\}_{a,x}\,\Big|\, \sum_{a,x}\alpha_{a,x}\,p(a|x)\rho_{a|x}=0\Bigr\},
 \label{eq:op-prep} \\
 \mc{O}(\msf{M})
 &\coloneqq
 \Bigl\{\{\beta_{b,y}\}_{b,y}\,\Big|\, \sum_{b,y}\beta_{b,y}\,M_{b|y}=0\Bigr\}.
 \label{eq:op-meas}
\end{align}
\end{subequations}
If the ontological representation is linear, then these operational identities must also be respected at the ontological level. That is, for every ontic state $\lambda$, we must have
\begin{subequations}
\begin{align}
\sum_{a,x}\alpha_{a,x}\,p(a,\lambda|x)&=0
\qquad \forall\,\{\alpha_{a,x}\}_{a,x}\in \mc{O}(\msf{P}),
\label{eq:constrain-prep}\\
\sum_{b,y}\beta_{b,y}\,p(b|y,\lambda)&=0
\qquad \forall\,\{\beta_{b,y}\}_{b,y}\in \mc{O}(\msf{M}).
\label{eq:constrain-meas}
\end{align}
\end{subequations}

\begin{definition}
\label{def:stat}
A prepare-and-measure experiment with a quantum multi-source $\msf{P}=\{\{p(a|x)\rho_{a|x}\}_{a}\}_{x}$ and a quantum multi-measurement $\msf{M}=\{\{M_{b|y}\}_{b}\}_{y}$ is classically explainable if and only if there exists an ontological model with ontic state space $\Lambda$, preparation assignments $p(a,\lambda|x)$, and response functions $p(b|y,\lambda)$ such that Eqs.~\eqref{eq:PMNC}, \eqref{eq:constrain-prep}, and~\eqref{eq:constrain-meas} are satisfied. 
\end{definition}

 In other words, a prepare-and-measure experiment is classically explainable precisely when its statistics can be reproduced by a linear, diagram-preserving ontological representation~\cite{Schmid2024structuretheorem,zhang2024parellel}. 
\section{Nonclassicality of a multi-measurement and a measurement}
\label{sec:III}
Let us begin by defining the nonclassicality of a quantum measurement, or of a set of quantum measurements, following Ref.~\cite{zhang2024parellel}. 
\begin{definition}
\label{def: nonclassical measurement}
A multi-measurement is classical if and only if the statistics generated by the set of circuits where it is contracted with {\em any} state (i.e., where the set ranges over all states) are classically explainable in the sense of Def.~\ref{def:stat}.
\end{definition}

Let us stress the conservative, compositional character of this definition. The sense in which a measurement process is classical here is that it is stable under arbitrary composition with processes of the dual type: no
choice of quantum state can make it participate in operational statistics that fail to be classically explainable. Thus, if a multi-source $\msf P$ and a multi-measurement $\msf M$ generate prepare-and-measure statistics that are not classically explainable, then $\msf M$ is necessarily nonclassical in the present sense. If $\msf M$ were classical, its composition with every quantum state, and hence with the states arising from any source, would be classically explainable. The same nonclassical statistics necessarily also witness
the nonclassicality of $\msf P$ according to the dual definition below. This is not an artificial feature of the definition, but rather the intended consequence of defining process classicality through stability under
arbitrary composition with dual processes.

To begin characterizing nonclassicality of measurements, consider first the noncontextual measurement-assignment polytope for a given set $\{\{M_{b|y}\}_b\}_y$ of measurements satisfying operational identities $\mc{O}({\msf{M}})$:
\begin{definition}
\label{def: op_polytope}
The noncontextual measurement-assignment polytope $\mbb{P}_{\msf{M}}$ associated with a multi-measurement $\msf{M}=\{\{M_{b|y}\}_b\}_y$ is the set of points $\{p(b|y)\}_{b,y}$ that satisfy the following constraints:
\begin{subequations}
\begin{align}
\text{(i)}~~&p(b|y) \geq 0~~~~\forall b,y \label{eq:cons_pos}\\
\text{(ii)}~~&\sum_b p(b|y)=1~~~~\forall y \label{eq:cons_norm}\\
\text{(iii)}~~&\sum_{b,y} \beta_{b,y} p(b|y)=0 \quad\quad \forall \{\beta_{b,y}\}\in \mc{O}({\msf{M}})
\end{align}
\end{subequations}
where $\mc{O}({\msf{M}})$ is defined by the operational identities holding among effects, as in Eq.~\eqref{eq:op-meas}. 
\end{definition}

 Then, following Theorem~3 of Ref.~\cite{zhang2024parellel}, the characterization of classical sets of measurements can be equivalently rewritten in terms of the noncontextual measurement-assignment polytope $\mathbb{P}_{\mathsf M}$ defined in Def.~\ref{def: op_polytope} as follows. 
\begin{theorem}\label{theoremmeas}
A set of measurements $\msf{M}=\{\{M_{b|y}\}_b\}_y$ is classical if and only if
its effects can be decomposed as
 \begin{align}
 &M_{b|y}=\sum_{\lambda} p(b|y,\lambda)G_{\lambda} ~~~~\forall b,y \label{eq_Edecom1} \\ 
 \text{with}~~&\{p(b|y,\lambda)\}_{b,y}\in \mbb{P}_{\msf{M}}~~~~\forall \lambda \label{eq_Edecom2}
 \end{align}
 where $\{G_{\lambda}\}_{\lambda}$ is a POVM and $\mbb{P}_{\msf{M}}$ is the polytope defined in Def.~\ref{def: op_polytope}.
\end{theorem}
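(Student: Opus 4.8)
The plan is to prove the two directions of the biconditional separately, leveraging Definition~\ref{def: nonclassical measurement} (classicality means noncontextual explainability when the multi-measurement is contracted with \emph{every} state) and Definition~\ref{def:stat} (the precise notion of an ontological model for a prepare-measure experiment). The key structural observation is that a decomposition of the form~\eqref{eq_Edecom1}--\eqref{eq_Edecom2} is essentially a readout of the ontological model: the POVM $\{G_\lambda\}_\lambda$ plays the role of the ``ontic-state retrodiction'' for an arbitrary input state, while $p(b|y,\lambda)$ is the response function, and the constraint~\eqref{eq_Edecom2} that $\{p(b|y,\lambda)\}_{b,y}\in\mbb{P}_{\msf{M}}$ encodes exactly the ontological identities~\eqref{eq:constrain-meas} forced by the operational identities $\mc{O}_{\msf{M}}$ among the effects.

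For the ``if'' direction, I would assume the decomposition exists and construct an explicit noncontextual model for the experiment in which $\msf{M}$ is contracted with an arbitrary multi-source $\msf{P}=\{\{p(a|x)\rho_{a|x}\}_a\}_x$. Take the ontic space $\Lambda$ to be the index set of the POVM $\{G_\lambda\}_\lambda$; assign to each state the distribution $p(\lambda|a,x) = \tr[G_\lambda \rho_{a|x}]$ (nonnegative and normalized since $\{G_\lambda\}_\lambda$ is a POVM), and to each effect the response function $p(b|y,\lambda)$ supplied by the decomposition. Then
\begin{align}
\sum_\lambda p(b|y,\lambda)\, p(\lambda|a,x)
&= \sum_\lambda p(b|y,\lambda)\,\tr[G_\lambda \rho_{a|x}] \notag\\
&= \tr\Big[\Big(\sum_\lambda p(b|y,\lambda) G_\lambda\Big)\rho_{a|x}\Big] \notag\\
&= \tr[M_{b|y}\rho_{a|x}],
\end{align}
which is the reproducibility condition~\eqref{eq: noncontexual}. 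Linearity of the map $\rho\mapsto p(\lambda|a,x)$ is immediate from linearity of the trace, so the preparation-side ontological identities~\eqref{eq:constrainp-prep} hold automatically; the measurement-side identities~\eqref{eq:constrain-meas} hold precisely because each $\{p(b|y,\lambda)\}_{b,y}$ lies in $\mbb{P}_{\msf{M}}$, whose defining constraint~(iii) is exactly~\eqref{eq:constrain-meas}. Hence the experiment is classically explainable for every choice of source, so $\msf{M}$ is classical.

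For the ``only if'' direction, I would start from the assumption that $\msf{M}$ is classical, i.e., that there is a noncontextual model reproducing the statistics of $\msf{M}$ contracted with every state. The standard move (as in the structure theorems of Refs.~\cite{SchmidGPT,Schmid2024structuretheorem}) is that, since the model must be linear on the GPT vector space spanned by all quantum states --- which is the full space of Hermitian operators --- the assignment $\rho \mapsto p(\lambda \,|\, \rho)$ extends to a linear functional on Hermitian operators, hence is of the form $\rho\mapsto \tr[G_\lambda \rho]$ for some Hermitian $G_\lambda$; positivity on all states forces $G_\lambda \geq 0$, and normalization $\sum_\lambda p(\lambda|\rho)=1$ for all $\rho$ (in particular for all density operators) forces $\sum_\lambda G_\lambda = \mathds{1}$, so $\{G_\lambda\}_\lambda$ is a POVM. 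Reproducibility $\tr[M_{b|y}\rho] = \sum_\lambda p(b|y,\lambda)\tr[G_\lambda\rho]$ for all $\rho$ then gives $M_{b|y} = \sum_\lambda p(b|y,\lambda)G_\lambda$, and the noncontextuality constraint~\eqref{eq:constrain-meas} on the response functions is precisely membership in $\mbb{P}_{\msf{M}}$. The main obstacle --- and the step deserving the most care --- is justifying this linear-extension argument: one must argue that the operational equivalences relevant to the hypothetical experiment are generated by the operational identities $\mc{O}_{\msf{M}}$ together with the tomographic completeness of the set of all quantum states, so that no additional constraints on the $G_\lambda$ or on the response functions arise. This is exactly where one invokes that the set of circuits ranges over \emph{all} states, and where the equivalence between noncontextuality of the prepare-measure scenario and existence of an ontological model for its GPT representation (Ref.~\cite{SchmidGPT}) does the heavy lifting; I would cite Ref.~\cite{zhang2024parellel} for the precise form of this reduction rather than reprove it here.
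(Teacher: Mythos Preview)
The paper does not actually prove this theorem: it is stated with an explicit citation to the companion work Ref.~\cite{zhang2024parellel} and introduced by ``as shown in Ref.~\cite{zhang2024parellel}'', so there is no in-paper proof to compare against. Your proposal is therefore being assessed on its own merits.

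Your argument is essentially correct and is the natural one. The ``if'' direction is clean: defining $p(\lambda|a,x)=\tr[G_\lambda\rho_{a|x}]$ and using the given response functions yields a model that reproduces the Born rule, and linearity of $p(a|x)\rho_{a|x}\mapsto p(a|x)\tr[G_\lambda\rho_{a|x}]$ ensures the preparation-side ontological identities~\eqref{eq:constrainp-prep} are inherited, while membership in $\mbb{P}_{\msf{M}}$ is by construction exactly the measurement-side identities~\eqref{eq:constrain-meas}. For the ``only if'' direction you have correctly identified both the mechanism (tomographic completeness of the set of all states forces the epistemic-state assignment to be represented by a POVM via Riesz-type duality) and the place where care is needed (that no \emph{additional} operational identities beyond $\mc{O}_{\msf{M}}$ constrain the response functions once the preparations range over all states). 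Deferring that structural step to Refs.~\cite{SchmidGPT,Schmid2024structuretheorem,zhang2024parellel} is appropriate, since that is precisely the content of the GPT/noncontextuality equivalence invoked in the preliminaries; a fully self-contained proof would simply unpack that citation.
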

 
\begin{proof}
 For completeness, we include a brief proof, adapted from the proof of Theorem~3 in Ref.~\cite{zhang2024parellel}.

If the multi-measurement $\msf{M}$ is classical, then by Def.~\ref{def: nonclassical measurement}, for every quantum state $\rho$,
\begin{align}
\tr[M_{b|y}\rho]=\sum_{\lambda} p(b|y,\lambda)p(\lambda|\rho)
\qquad \forall b,y,\rho,
\end{align}
where, for each $\lambda$, the conditional probabilities $\{p(b|y,\lambda)\}_{b,y}$ satisfy the ontological identities associated with $\mc{O}(\msf{M})$. Since $p(\lambda|\rho)$ is linear in $\rho$ for all $\rho$, by the Riesz representation theorem, there exists a positive semidefinite operator $G_\lambda$ such that
\begin{align}
p(\lambda|\rho)=\tr[G_\lambda\rho]
\qquad \forall \rho.
\end{align}
Since $\sum_\lambda p(\lambda|\rho)=1$ for all $\rho$, it follows that $\sum_\lambda G_\lambda=\mbb{1}$, so $\{G_\lambda\}_\lambda$ is a POVM. Therefore,
\begin{align}
\tr[M_{b|y}\rho]
=\tr\!\left[\Bigl(\sum_\lambda p(b|y,\lambda)G_\lambda\Bigr)\rho\right]
\qquad \forall b,y,\rho,
\end{align}
and hence
\begin{align}
M_{b|y}=\sum_\lambda p(b|y,\lambda)G_\lambda
\qquad \forall b,y.
\end{align}
Since each $\{p(b|y,\lambda)\}_{b,y}$ is nonnegative, normalized, and respects $\mc{O}(\msf{M})$, we have $\{p(b|y,\lambda)\}_{b,y}\in\mbb{P}_{\msf{M}}$ for all $\lambda$.

Conversely, if Eqs.~\eqref{eq_Edecom1}--\eqref{eq_Edecom2} hold, define
\begin{align}
p(\lambda|\rho)\coloneqq \tr[G_\lambda\rho].
\end{align}
Then $p(\lambda|\rho)$ is a valid probability distribution, linear in $\rho$, and
\begin{align}
\tr[M_{b|y}\rho]
=\sum_\lambda p(b|y,\lambda)p(\lambda|\rho).
\end{align}
Because $\{p(b|y,\lambda)\}_{b,y}\in\mbb{P}_{\msf{M}}$, the ontological identities associated with $\mc{O}(\msf{M})$ are satisfied. Hence, the statistics generated by contracting $\msf{M}$ with arbitrary states are classically explainable, so $\msf{M}$ is classical.
\end{proof}

One can equivalently 
rewrite this condition in terms of the extreme points 
$D_{\mbb{P}_{\msf{M}}}\in \mbb{P}_{\msf{M}}$ of the polytope in Def.~\ref{def: op_polytope}. (These extreme points can be explicitly computed by solving the vertex
enumeration problem~\cite{Christof1997, Avis1997}.)
In particular, a set of measurements $\msf{M}=\{\{M_{b|y}\}_b\}_y$ is classical if and only if its effects can be decomposed as
\begin{align}\label{usefulrechar}
M_{b|y}=\sum_{\lambda} D_{\mbb{P}_{\msf{M}}}(b|y,\lambda)G_{\lambda},
\end{align}
where $\{G_{\lambda}\}_{\lambda}$ is a POVM. 
This is because we can decompose each individual response function in terms of extremal points of the polytope as $p(b|y,\lambda)=\sum_{\lambda'}D_{\mbb{P}_{\msf{M}}}(b|y,\lambda')p(\lambda'|\lambda)$, so that
\begin{align} \label{eqdetm}
M_{b|y}&=\sum_{\lambda} p(b|y,\lambda)G_{\lambda}\\
&=\sum_{\lambda} \sum_{\lambda'}D_{\mbb{P}_{\msf{M}}}(b|y,\lambda')p(\lambda'|\lambda)G_{\lambda}\\
&=\sum_{\lambda'}D_{\mbb{P}_{\msf{M}}}(b|y,\lambda')G'_{\lambda'},
\end{align}
where we defined $G'_{\lambda'}=\sum_{\lambda}p(\lambda'|\lambda)G_{\lambda} $ (which also constitutes a POVM).

As noted in corollary~5 of Ref.~\cite{zhang2024parellel}, Theorem~\ref{theoremmeas} immediately implies that every set of incompatible measurements is nonclassical, since Eq.~\eqref{eq_Edecom1} describes the usual condition for a set of measurements to be compatible---namely, that there exists a single POVM $\{G_{\lambda}\}_{\lambda}$ which can be postprocessed to recover any of those measurements. But in Theorem~\ref{theoremmeas}, one does not allow arbitrary postprocessings, but only those consistent with Eq.~\eqref{eq_Edecom2}.
So, unlike incompatibility, this notion of nonclassicality is nontrivial even for a single measurement $\{M_b\}_b$.

\begin{corollary}
\label{coro: nc-meas}
A single measurement $\msf{M}=\{M_{b}\}_b$ is classical if and only if its effects can be decomposed as
 \begin{align}
 \label{eq_Edecom}
 &M_{b}=\sum_{\lambda} D_{\mbb{P}_{\msf{M}}}(b|\lambda)G_{\lambda},
 \end{align}
 where $\{G_{\lambda}\}_{\lambda}$ is a POVM and $D_{\mbb{P}_{\msf{M}}}(b|\lambda)$ is an extreme point of the polytope defined in Def.~\ref{def: op_polytope} (for the case when $y$ is trivial).
\end{corollary}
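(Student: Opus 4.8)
The plan is to obtain this as an immediate specialization of Theorem~\ref{theoremmeas}, together with the extreme-point reformulation in Eq.~\eqref{usefulrechar}, to the case in which the index $y$ labeling which measurement is performed takes a single value. First I would observe that setting $y$ trivial collapses the multi-measurement $\msf{M}=\{\{M_{b|y}\}_b\}_y$ to a single POVM $\{M_b\}_b$, collapses the operational-identity set $\mc{O}_{\msf{M}}$ of Eq.~\eqref{eq:op-meas} to the set of coefficient tuples $\{\beta_b\}$ with $\sum_b \beta_b M_b=0$, and correspondingly reduces the noncontextual measurement-assignment polytope $\mbb{P}_{\msf{M}}$ of Definition~\ref{def: op_polytope} to the set of distributions $\{p(b)\}_b$ that are nonnegative, normalized, and annihilate every such operational identity.

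Next I would invoke Theorem~\ref{theoremmeas} in this trivial-$y$ case: it states that $\{M_b\}_b$ is classical if and only if there exist a POVM $\{G_\lambda\}_\lambda$ and, for each $\lambda$, a point $\{p(b|\lambda)\}_b\in\mbb{P}_{\msf{M}}$ such that $M_b=\sum_\lambda p(b|\lambda)G_\lambda$. I would then reuse verbatim the argument already given below Eq.~\eqref{usefulrechar}: since each $\{p(b|\lambda)\}_b$ lies in the polytope $\mbb{P}_{\msf{M}}$, it can be written as a convex mixture $p(b|\lambda)=\sum_{\lambda'}D_{\mbb{P}_{\msf{M}}}(b|\lambda')p(\lambda'|\lambda)$ of the finitely many extreme points of $\mbb{P}_{\msf{M}}$; absorbing the stochastic matrix $p(\lambda'|\lambda)$ into the POVM via $G'_{\lambda'}:=\sum_\lambda p(\lambda'|\lambda)G_\lambda$ yields the decomposition $M_b=\sum_\lambda D_{\mbb{P}_{\msf{M}}}(b|\lambda)G'_\lambda$ of Eq.~\eqref{eq_Edecom}, with $\{G'_{\lambda'}\}_{\lambda'}$ again a valid POVM. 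The converse is trivial, since every extreme point of $\mbb{P}_{\msf{M}}$ belongs to $\mbb{P}_{\msf{M}}$, so any decomposition of the form in Eq.~\eqref{eq_Edecom} is in particular of the form in Eqs.~\eqref{eq_Edecom1} and \eqref{eq_Edecom2}, and hence certifies classicality by Theorem~\ref{theoremmeas}.

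The only step requiring any care—and the closest thing to an obstacle—is justifying that the extreme-point reformulation genuinely applies: one must note that $\mbb{P}_{\msf{M}}$ is a bounded polytope, being the intersection of the probability simplex over the outcomes $b$ with finitely many linear equality constraints from $\mc{O}_{\msf{M}}$, so that it has finitely many extreme points and every point of it is a finite convex combination of them. Granting this, the corollary introduces no new ideas beyond instantiating the already-established general results at $y$ trivial.
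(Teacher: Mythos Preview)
Your proposal is correct and follows exactly the route the paper takes: the corollary is not given a separate proof in the paper but is simply the specialization of Theorem~\ref{theoremmeas} together with the extreme-point reformulation of Eq.~\eqref{usefulrechar} to the case of trivial $y$, which is precisely what you do.
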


In fact, the qualitative nonclassicality of any set of measurements can be deduced by studying a single related measurement, using the idea of flag-convexification introduced in Ref.~\cite{Selby2023}.

In flag-convexification, the classical input variable encoding one's choice of measurement has its value sampled according to some full-support probability distribution, and then this value is copied and encoded in a new classical output variable. We depict an example of this on the right-hand side of Figure~\ref{flag_conv}. (In this example and henceforth, we take the probability distribution in question to be the uniform distribution, although our results hold just as well for any other full-support distribution.)
\begin{figure}[htb!]
\centering
\includegraphics[width=0.45\textwidth]{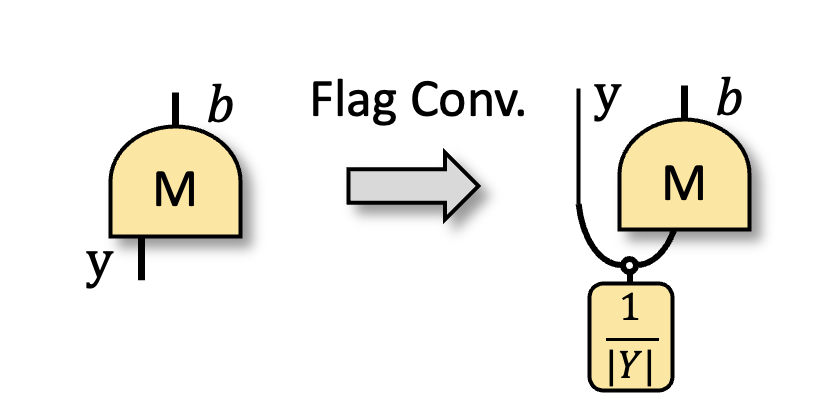}
\caption{A multi-measurement (left) and its flag-convexification with a uniform distribution $\frac{1}{|Y|}$ (right).}
\label{flag_conv}
\end{figure}

\begin{proposition}
A set of measurements $\{\{M_{b|y}\}_{b}\}_{y}$ is nonclassical if and only if its flag-convexification $\{\tilde{M}_{b,y}\coloneqq \frac{1}{|Y|}M_{b|y}\}_{b,y}$ is nonclassical. \label{prop:to_single_model}
\end{proposition}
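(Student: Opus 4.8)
The plan is to reduce both notions of classicality to the decomposition criterion of Theorem~\ref{theoremmeas} and then to convert a decomposition of $\msf{M}$ into one of its flag-convexification $\tilde{\msf M}\coloneqq\{\tilde M_{b,y}\}_{b,y}$ (and back) via the reparametrization of response functions $\tilde p(b,y|\lambda)=\tfrac{1}{|Y|}p(b|y,\lambda)$, keeping the \emph{same} underlying POVM $\{G_\lambda\}_\lambda$. Two preliminary observations set this up. First, $\{\tilde M_{b,y}\}_{b,y}$ is a genuine POVM, since $\sum_{b,y}\tilde M_{b,y}=\tfrac{1}{|Y|}\sum_y\sum_b M_{b|y}=\mathbb{1}$. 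Second, the operational-identity set of $\tilde{\msf M}$, viewed as a single measurement with outcome label $(b,y)$, equals $\mc{O}_{\msf M}$, because the global factor $1/|Y|$ in $\tilde M_{b,y}=\tfrac{1}{|Y|}M_{b|y}$ cancels from any linear combination of the effects that equals the zero operator. Consequently the positivity and operational-identity constraints of $\mbb{P}_{\tilde{\msf M}}$ are, up to an overall rescaling, exactly those of $\mbb{P}_{\msf M}$; the two polytopes differ only in that $\mbb{P}_{\msf M}$ imposes $\sum_b p(b|y)=1$ for each $y$ separately while $\mbb{P}_{\tilde{\msf M}}$ imposes only the single normalization $\sum_{b,y}\tilde p(b,y)=1$.

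The ``only if'' direction is then routine: given $M_{b|y}=\sum_\lambda p(b|y,\lambda)G_\lambda$ with $\{p(b|y,\lambda)\}_{b,y}\in\mbb{P}_{\msf M}$ for all $\lambda$, I would set $\tilde p(b,y|\lambda)\coloneqq\tfrac{1}{|Y|}p(b|y,\lambda)$ and check that $\tilde M_{b,y}=\sum_\lambda\tilde p(b,y|\lambda)G_\lambda$ with $\{\tilde p(b,y|\lambda)\}_{b,y}\in\mbb{P}_{\tilde{\msf M}}$: nonnegativity is immediate, the single normalization is $\sum_{b,y}\tilde p(b,y|\lambda)=\tfrac{1}{|Y|}\sum_y\sum_b p(b|y,\lambda)=1$, and the operational-identity constraints transfer by the second observation above. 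Theorem~\ref{theoremmeas}, specialized to a single measurement as in Corollary~\ref{coro: nc-meas}, then yields that $\tilde{\msf M}$ is classical.

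The ``if'' direction carries the real content and is the main obstacle. Starting from $\tilde M_{b,y}=\sum_\lambda\tilde p(b,y|\lambda)G_\lambda$ with $\{\tilde p(b,y|\lambda)\}_{b,y}\in\mbb{P}_{\tilde{\msf M}}$, I would set $p(b|y,\lambda)\coloneqq|Y|\,\tilde p(b,y|\lambda)$, so that $M_{b|y}=\sum_\lambda p(b|y,\lambda)G_\lambda$. The delicate step is recovering the per-$y$ normalizations $\sum_b p(b|y,\lambda)=1$ from the single available identity $\sum_{b,y}\tilde p(b,y|\lambda)=1$; equivalently, one must show the flag-marginal $q(y|\lambda)\coloneqq\sum_b\tilde p(b,y|\lambda)$ is uniform. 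This is where the operational identities do the work: since each $\{M_{b|y}\}_b$ is a POVM, for any $y,y'$ the $b$-independent coefficient vector $\beta_{b,y''}=\delta_{y,y''}-\delta_{y',y''}$ satisfies $\sum_{b,y''}\beta_{b,y''}M_{b|y''}=\mathbb{1}-\mathbb{1}=0$ and hence lies in $\mc{O}_{\msf M}=\mc{O}_{\tilde{\msf M}}$; feeding it into constraint~(iii) of $\mbb{P}_{\tilde{\msf M}}$ gives $q(y|\lambda)=q(y'|\lambda)$ for all $y,y'$, which together with $\sum_y q(y|\lambda)=1$ forces $q(y|\lambda)=\tfrac{1}{|Y|}$. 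With the normalizations in hand, positivity and the operational-identity constraints for $\{p(b|y,\lambda)\}_{b,y}\in\mbb{P}_{\msf M}$ follow by rescaling, and Theorem~\ref{theoremmeas} certifies that $\msf{M}$ is classical. Taking the contrapositive of the two implications gives the claimed equivalence of nonclassicality.
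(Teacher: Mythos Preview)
Your argument is correct. The key insight---that the per-$y$ normalization in $\mbb{P}_{\msf M}$ is recovered from the single normalization in $\mbb{P}_{\tilde{\msf M}}$ via the operational identity $\sum_b M_{b|y}=\sum_b M_{b|y'}$---is exactly what makes the ``if'' direction work, and you have identified and executed it cleanly.

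The paper does not actually prove this proposition; it simply cites Ref.~\cite{Selby2023} (and Ref.~\cite{selby2023accessible} for the non-uniform case). So your proof is not a different route so much as a route where the paper gives none. What your approach buys is self-containment: you derive the result entirely from Theorem~\ref{theoremmeas} and the structure of the two measurement-assignment polytopes, without appealing to any external machinery. The only minor remark is that your claim $\mc{O}_{\msf M}=\mc{O}_{\tilde{\msf M}}$ should be read as an equality of the \emph{sets} of coefficient vectors $\{\beta_{b,y}\}$ (which is indeed true since the factor $1/|Y|$ is a nonzero scalar), not of the operational identities themselves as linear relations among effects; you use it in exactly this sense, so there is no gap.
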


 This was implicitly established in Ref.~\cite{Selby2023}, and the analogous result for non-uniform flag-convexification was proven in Ref.~\cite{selby2023accessible}. Specifically, Ref.~\cite{Selby2023} showed that, in a prepare-and-measure scenario, the statistics generated by a measurement set $\{\{M_{b|y}\}_b\}_y$ together with an arbitrary preparation set $\mathsf P$ are classically explainable if and only if the statistics generated by the flag-convexified measurement $\{\tilde M_{b,y}:=\frac{1}{|Y|}M_{b|y}\}_{b,y}$ together with $\mathsf P$ are classically explainable. Hence, by Def.~\ref{def: nonclassical measurement}, qualitative nonclassicality is preserved under flag-convexification.

Proposition~\ref{prop:to_single_model} does not say anything about the {\em quantitative} amount of nonclassicality in a multi-measurement under flag-convexification. We show in Appendix~\ref{appfc1} and Appendix~\ref{appfc2} that many of the measures we introduce in this work do not change under flag-convexification.
We do not expect such a quantitative equivalence to hold for general measures of nonclassicality.

\section{Certifying and quantifying the nonclassicality of a measurement}
\label{sec:IV}
In this section, we will focus solely on certifying and quantifying the nonclassicality of a single measurement. These methods can {\em also} be applied directly to quantify arbitrary sets of measurements (i.e., multi-measurements), since (as we will show in Appendix~\ref{appfc1} and Appendix~\ref{appfc2}) the quantifiers we introduce here give the same values for a multi-measurement as for the single measurement generated by that multi-measurement under flag-convexification. 

 The methods introduced in the following first two subsections are theory-dependent: they certify nonclassicality relative to the quantum description of the processes involved, rather than in a device-independent manner. Theory-independent certification will be discussed separately in Sec.~\ref{sec:IVC}.

Similar to the characterization methods used extensively to study quantum steering and standard measurement incompatibility~\cite{Cavalcanti2017}, the nonclassicality of a given measurement $\{M_b\}_b$ can be certified numerically by solving a semidefinite program. This program follows immediately from Corollary~\ref{coro: nc-meas}: 
\begin{align}
\label{sdp1}
\max_{\{G_{\lambda}\}_{\lambda}}&~\mu \notag \\
\text{s.t. }&\sum_{\lambda}D_{\mbb{P}}(b|\lambda)G_{\lambda}=M_b\quad\quad \forall b \notag \\
&G_{\lambda}\ge \mu\mbb{1} \quad \forall \lambda.
\end{align}
 (Note that the condition that $\sum_\lambda G_\lambda = \mbb{1}$ follows from the first constraint in the SDP, as one can see by summing both sides over $b$.)
If the program returns an optimal value $\mu$ that is negative, then it follows that there is no decomposition as in Corollary \ref{coro: nc-meas} where $G_\lambda \geq 0$, and so $\{M_b\}_b$ is nonclassical.
If, however, the given measurement $\{M_b\}_b$ is classical, the solution to the SDP gives the parent POVM $\{G_{\lambda}\}_{\lambda}$ and response functions for its simulation. 

\subsection{Robustness-based quantifier for nonclassical measurement}

One can also {\em quantify} the nonclassicality of a given measurement, using techniques like those used for entanglement \cite{Guhne2009}, quantum steering \cite{Uola2020review}, and many other quantum resources \cite{Chitambar2019}. In what follows, we will discuss both robustness-based and weight-based quantifiers. We leave the more ambitious task of developing a full resource-theoretic framework for understanding nonclassicality for future work.
The programs we introduce here also yield optimal linear witnesses for the certification of resources.

Quantifiers of nonclassicality based on robustness ask how much noise must be added to a given measurement for it to become classical. Depending on the noise model, one can define different measures such as the generalized robustness, standard robustness, and random robustness~\cite{Cavalcanti2017,Chitambar2019}. Here, we consider the white-noise robustness (which is also termed as resource random robustness~\cite{Chitambar2019}), which asks how much white noise would need to be added for the nonclassicality of a given measurement to be completely destroyed. Given a measurement $\{M_b\}_b$, one defines a noisy POVM with the effects
\begin{equation}
M_{b}^{\eta} \coloneqq \eta M_{b} + (1 - \eta) \frac{\tr[M_{b}]}{d}\mathbb{1},
\end{equation}
where $d$ is the dimension of the Hilbert space; the critical parameter $\eta$ at which the transition to classicality occurs is the white-noise robustness. 
Notice that---following an awkward but standard convention---lower values correspond to higher resourcefulness. All classical measurements have the maximum value $1$. 

The white-noise robustness can be computed (following, e.g., Refs.~\cite{Heinosaari2015,Designolle2019}) using the following semidefinite program: 
\begin{align}
\eta_{\{M_b\}}= \max_{\{G_{\lambda}\}_{\lambda}} &\eta \notag \\
\text{s.t. }&\sum_{\lambda}D_{\mbb{P}}(b|\lambda)G_{\lambda}=M^{\eta}_b\quad\quad \forall b \notag \\
& \eta\le 1,~G_{\lambda}\ge \mbf{0}\quad \forall \lambda \quad
\label{pri:SDP}
\end{align}
where the $D_{\mbb{P}}(b|\lambda)$ are the extreme points in the polytope $\mbb{P}$ of Def.~\ref{def: op_polytope} (for the case when $y$ is trivial).

This SDP can be efficiently computed. One can also get an analytical upper bound on this quantifier by studying the dual of the primal problem above, which is
\begin{align}
\eta_{\{M_b\}}
&= \min_{\{X_b\}_b}\left(1+\sum_b\tr[X_bM_b]\right) \notag\\
\text{s.t.}\quad
&1+\sum_b\tr[X_bM_b] \notag\\
&\qquad\geq \frac{1}{d}\sum_b \tr X_b\,\tr M_b, \notag\\
&\sum_b D_{\mbb P}(b|\lambda)X_b\geq 0
\quad \forall\lambda.
\end{align}
By a standard result in semidefinite programming, the solution to this dual is greater than or equal to the solution $\eta$ of the primal program. It follows that any specific feasible solution $\{X_{b}\}_b$ of the above dual problem provides an upper bound on $\eta_{\{M_b\}}$. By optimizing a cleverly chosen family of dual variables of the form $X_b=\alpha\mbb{1}-\beta M_b$ (where $\alpha$ and $\beta$ are constrained so that $X_b$ is a feasible solution), one can obtain an analytical upper bound on the white-noise robustness~\cite{Designolle2019}, namely 
\begin{align}
 \eta_{\{M_b\}}\le \frac{d^2\Lambda-\sum_b (\tr{M_b})^2}{\sum_b [d{\tr M_b^2}- ({\tr M_b})^2]},
\end{align}
where $\Lambda=\max_{\lambda}\norm{\sum_b D_{\mbb{P}}(b|\lambda)M_b}_{2}$ and where $\norm{X}_{2}$ is the spectral norm defined as the largest absolute value of the eigenvalues of $X$. For a rank-1 POVM $\{M_{b}\}_{b=1}^{k}$ with $k$ effects having equal trace $\tr M_b=\frac{d}{k}$, this further simplifies to 
\begin{align}
\eta\le \frac{k\Lambda - 1}{d-1}.
\label{eq:etaupper}
\end{align}
We expand on this argument in Appendix~\ref{appendix: upperbound}.

We now give some examples and compute their white-noise robustness.
Surprisingly, we found the upper bound just given in Eq.~\eqref{eq:etaupper} to be tight for all of these measurements, a fact
we showed numerically by lower-bounding it with the primal SDP in Eq.~\eqref{pri:SDP}.

\begin{example}
\label{bb84}
Consider a measurement composed of $k$ effects arranged symmetrically in a plane, namely the measurement $\msf{M}_k\coloneqq\{M_b\}_{b=1}^k$ with effects 
\begin{equation}
 M_b=\frac{1}{k}[\mbb{1}+\cos\theta_b\sigma_x+\sin\theta_b\sigma_z], \quad \theta_b=\frac{2\pi b}{k}.
 \label{eq: planar}
\end{equation}
The white-noise robustness $\eta_k$ of this measurement 
is
\begin{align}
 \eta_{k}= \frac{1}{2\cos(\pi/k)} \quad\quad \forall k\ge 4. 
\end{align}
(When $k<4$, such a measurement is always classical.) This value can be found using the upper bound in Eq.~\eqref{eq:etaupper}, and then checked to be tight by constructing an explicit simulation model in Eq.~\eqref{pri:SDP}. For small $k$, one can check explicitly that the optimal $\{G_{\lambda}\}_\lambda$ for simulating the noisy $\{M_b\}_{b=1}^k$ is given by the measurement itself when $k$ is odd 
\begin{align}
G_{\lambda}=\frac{1}{k}[\mbb{1}+\cos\theta_{\lambda}\sigma_x+\sin\theta_{\lambda}\sigma_z], \quad \theta_{\lambda}=\frac{2\pi {\lambda}}{k}
\end{align}
 and by
\begin{align}
G_{\lambda}
&=\frac{1}{k}[\mbb{1}+\cos\theta_{\lambda}\sigma_x
+\sin\theta_{\lambda}\sigma_z],\notag\\
\theta_{\lambda}&=\frac{2\pi\lambda+\pi}{k}.
\end{align}
when $k$ is even.\par 
Both the pentagon measurement in~\cite{Selby2023} and the BB84 measurement are special cases of such a $k$-outcome symmetric planar measurement.
\label{example1}
\end{example}

\begin{table}[htb!]
\centering
\caption{White-noise robustness $\eta_k$ for various $k$-outcome symmetric planar qubit measurements, as defined in Eq.~\eqref{eq: planar}.}
\label{tab:eta_values}
\begin{tabular}{rr}
\hline
\hline
$k$ &\quad $\eta_k$ \\ 
\hline
3 & \quad 1\\ 
4 &\quad $\frac{\sqrt{2}}{2} \approx 0.707$ \\
5 & \quad$\frac{\sqrt{5}-1}{2} \approx 0.618$ \\
6 &\quad $\frac{\sqrt{3}}{3} \approx 0.577$ \\
7 & \quad$\frac{1}{2\cos(\frac{\pi}{7})} \approx 0.555$ \\
8 & \quad$\sqrt{\frac{2-\sqrt{2}}{2}} \approx 0.541$ \\
\hline
\hline
\end{tabular}
\end{table}

\begin{example}
Consider a measurement composed of $k$ effects that are the vertices of a platonic solid embedded in the qubit (centered, and with effects rescaled appropriately)~\cite{zhang2025cost}. The white-noise robustness $\eta$ of such measurements was computed in the same manner as for Example~\ref{bb84}, and is shown in Table~\ref{tab:eta_Platonic}. The optimal $\{G_{\lambda}\}_{\lambda}$ for simulating different noisy platonic measurements is given by the polytope dual\footnote{ \label{dual} The vertices of the dual polytope are given by the set of unit vectors from the origin that are normal to each face of the given polytope.}; e.g., the optimal $\{G_{\lambda}\}_{\lambda}$ for simulating a noisy cubic POVM is the octahedron POVM. 
\label{example2}
\end{example}

\begin{table}[h]
\centering
\caption{White-noise robustness $\eta_k$ for various $k$-outcome Platonic solid measurements, as defined in Example~\ref{example2}. Codes are available at \cite{zhang2024gita}.}
\label{tab:eta_Platonic}
\begin{tabular}{cr}
\hline
\hline
 $\#$ of Vertices & $\eta_v^{\text{Plat}}\quad$ \\ 
\hline
 4& \quad $1$ \quad\quad \\
\hline
 6& \quad $\frac{\sqrt{3}}{3}\approx 0.577$ \\
\hline
8 &\quad $\frac{\sqrt{3}}{3}\approx 0.577$ \\
\hline
 12 &\quad $ \sqrt{\frac{5-2\sqrt{5}}{3}}\approx 0.4195 $ \\
\hline
 20 &\quad $\sqrt{\frac{5-2\sqrt{5}}{3}} \approx 0.4195$\\
 \hline
\hline
\end{tabular}
\end{table}

\begin{example}
Consider a set of measurements in mutually unbiased bases, for dimensions $d \in \{2,3,4\}$. The white-noise robustness $\eta$ of each such set of measurements was computed in the same manner as for Example~\ref{bb84}, and examples are shown in Table~\ref{tab:eta_MUB}. Note that, as a consequence of Proposition~\ref{prop:to_single_model}, one would in each case obtain the same white-noise robustness for the single measurement constructed by flag-convexifying these multi-measurements.
\end{example}
\begin{table}[htb!]
\centering
\caption{ White-noise robustness for the complete set of $d+1$ MUB measurements in dimension $d$.}
\label{tab:eta_MUB}
\begin{tabular}{cr}
\hline
\hline
Dim & $\eta_{d}$\quad \\ 
\hline
$2$ & $\frac{\sqrt{3}}{3}\approx0.577$ \\
\hline
$3$ & $\frac{1+3\sqrt{5}}{16}\approx 0.4818$ \\
\hline
$4$ & $\frac{3+2\sqrt{3}}{15}\approx 0.4309 $ \\

\hline
\hline
\end{tabular}
\end{table}

\subsection{The nonclassical fraction of a measurement}\label{ncfraction}

A second class of measures commonly used to quantify quantum resources consists of \textit{weight-based quantifiers}~\cite{Skrzypczyk2014,Lewenstein1998,Chitambar2019}.
Such measures are defined as the minimal cost of generating the measurement in question by mixing together an arbitrary measurement with an arbitrary classical measurement, such that the measurements being mixed satisfy the same operational identities as the given measurement. That is, the effects in any given measurement $\{M_b\}_b$ can be written as
\begin{equation}
 M_b=\omega N_b+ (1-\omega)K_b, 
\end{equation}
where $0\le \omega\le 1$,
where $\{N_b\}_b$ is an arbitrary measurement while $\{K_b\}_b$ is classical, and where both $\{N_b\}_b$ and $\{K_b\}_b$ satisfy the same operational identities as $\{M_b\}_b$.
The minimal value of $\omega$ that can be used in such a decomposition is a weight-based quantifier of nonclassicality of a measurement, and we term it the {\em nonclassical fraction} (analogous to, e.g., the nonlocal fraction~\cite{Wolfe2020quantifyingbell}). Higher values correspond to higher nonclassicality, with a maximum of $1$ (and all classical resources have value $0$). 

Note that we do not allow $\{N_b\}_b$ and $\{K_b\}_b$ to be an arbitrary measurement and an arbitrary classical measurement, respectively. 
This is because the nonclassical set of measurements is nonconvex, and very little is known about weight-based quantifiers in nonconvex contexts~\cite{Kuroiwa2024,Kuroiwa2024b}. By stipulating that the sets of measurements to be mixed satisfy the same operational equivalences as the measurement to be decomposed, we return to a convex setting.
As such, it is better viewed as a collection of measures, rather than a single measure on the space of all multi-measurements. In any case, our main purpose for introducing this quantity is for the construction of nonclassicality witnesses, as we will discuss.
Whether or not the quantity defined {\em without} this stipulation is useful and interesting remains an open question.

The nonclassical fraction can be computed using the following SDP:
\begin{align}
\omega=\min_{\{{G}_{\lambda}\}_{\lambda}} &1-\frac{\tr\sum_{\lambda}{{G}_{\lambda}}}{d} \notag \\
\text{s.t. }&M_b \ge \sum_{\lambda} D_{\mbb{P}}(b|\lambda){G}_{\lambda}\quad\quad \forall b \notag \\
&{G}_{\lambda}\ge \mbf{0}\quad \forall \lambda.
\end{align}

The nonclassical fraction 
takes its maximum value, 1, for any nonclassical measurement with rank-one effects (as such effects cannot be nontrivially decomposed in terms of other effects). In fact, {\em all} of the examples we gave in the previous section are rank-one measurements, and so this quantifier has no power to discriminate which of those measurements is more or less nonclassical (other than the fact that it assigns value 0 to the two measurements we considered that were classical).

The dual formulation of this SDP is
\begin{align}
\omega=\max_{\{F_{b}\}_{b}}&~1-\tr\sum_bF_bM_b \notag \\
\text{s.t. }&\sum_{b}D_{\mbb{P}}(b|\lambda)F_b\ge \frac{\mbb{1}}{d}\quad\quad \forall \lambda \notag \\
&F_b\ge \mbf{0} \quad\quad \forall b.
\label{eq: weight-dual}
\end{align}
This can be used to construct an optimal linear witness for a given nonclassical measurement $\{M_b\}_b$, in a manner analogous to the construction of steering witnesses~\cite{Skrzypczyk2014}. In particular, consider any set of Hermitian matrices $\{F_b\}_b$ that is a feasible solution to the SDP. Since $F_b\ge \mbf{0}$, these can be renormalized to generate a set of density operators $\{\rho_b^F:=\frac{1}{f_b}F_b\}_b$.

In the prepare-measure scenario defined by measuring $\{M_b\}_b$ on the states $\{\rho_b^F\}_b$, the violation of inequality $\sum_bf_b\tr[\rho_b^FM_b] \ge 1$ serves as a witness for the nonclassicality of the measurement $\{M_b\}_b$. This is because any {\em classical} measurement $\{K_b\}_b$ (with nonclassical fraction $\omega_{\{K_b\}} =0$) satisfies
\begin{align}
0=\omega_{\{K_b\}}
&:=\max_{\{F'_b\}_b}\left(1-\tr\sum_bF'_bK_b\right)\notag\\
&\geq 1-\tr\left[\sum_b f_b\rho_b^F K_b\right],\notag\\
\Rightarrow\quad
&\sum_b f_b\tr[\rho_b^F K_b]\geq 1.
\end{align}
So if one finds a measurement for which the inequality is violated, then one can conclude that the measurement is nonclassical. Moreover, the witness constructed using the optimal solution to the SDP is an optimal witness.

 One should emphasize, however, that the inequalities obtained from the dual SDP are not noncontextuality inequalities in the usual theory-independent sense~\cite{Schmid2018}. Rather, they are theory-dependent witnesses: violations of such inequalities certify nonclassicality only relative to the quantum description assumed in the derivation, in the same sense that entanglement witnesses are not device-independent. In particular, a violation of such an inequality constitutes a proof of nonclassicality of one's measurement only under the assumption that the states used in the experiment are genuinely the specific quantum states $\{\rho_b^F\}_b$ appearing in the derivation above. In contrast, standard theory-independent noncontextuality inequalities are derived without making any assumptions about the nature of the states in the experiment beyond the operational identity relations that hold among them. Moreover, violations of these witness-based inequalities are not impossible in all noncontextual ontological models. Indeed, in Appendix~\ref{appendix:GPTmodel}, we give an example of a nonclassical measurement, a witness that certifies its nonclassicality if quantum theory were assumed, and a noncontextual ontological model that nevertheless reproduces all the statistics of the prepare-and-measure scenario defined by that witness together with the measurement. 

This is analogous to how entanglement witnesses are device-dependent---they certify that one's state is entangled {\em provided} that one has access to well-characterized quantum measurements. If one wishes to certify nonclassicality of measurements in a device-independent manner, one must instead use a different approach, like the one we discuss in the next section. 

\subsection{Theory-independent certification of nonclassicality of measurements}\label{thryindep}
\label{sec:IVC}
If one wishes to find theory-independent witnesses of nonclassicality, one need look no further than standard noncontextuality inequalities, as studied in, for example, Refs.~\cite{Schmid2018, mazurek2016experimental,schmid2024PTM}. Prior works used violations of noncontextuality inequalities to witness the nonclassicality of an entire scenario rather than of any given single process. However, it is evident that one can use such inequalities to witness the nonclassicality of individual processes as well. The only challenge in doing so is that one cannot assume that every individual process in a given circuit that violates a noncontextuality inequality is itself nonclassical; {\em some} components of the circuit are necessarily nonclassical, but not necessarily {\em all} of them are nonclassical. We elaborate on the question of when one can be certain that a given process is implicated in a proof of nonclassicality in Ref.~\cite[Sec. III]{zhang2024parellel}. For instance, in the simple context of a prepare-measure experiment on a unipartite system, one can conclude from any violation of a theory-independent noncontextuality inequality that the set of measurements in the scenario is nonclassical and that the set of states in the scenario is nonclassical. 

Our work in Ref.~\cite{zhang2024parellel} also leads naturally to an interesting new class of questions regarding what operational means are necessary and/or sufficient to certify the nonclassicality of a given process in a theory-independent way.

Consider the analogous questions in the context of Bell nonclassicality. It is well-known that the entanglement of a given state cannot always be detected via violations of a standard Bell inequality; for example, local measurements on some entangled Werner states do not lead to the violations of any Bell inequalities. So this approach to theory-independent certification of entanglement does not allow one to identify the boundary between entangled (nonfree in the resource theory of Local Operations and Shared Randomness, or LOSR~\cite{sq,Schmid2023understanding,Schmid2020typeindependent}) and separable (LOSR-free). Only by introducing more complicated causal structures can one find theory-independent witnesses of the entanglement of an arbitrary entangled state (see, for example, Ref.~\cite{bowles2018device}, or Section 8 of Ref.~\cite{Schmid2023understanding} for more details).

Here, it is clear that one does not need any causal structure beyond that of a prepare-measure scenario to witness (in a theory-independent way) the nonclassicality present in a given measurement, by the very fact that nonclassicality is defined with respect to the statistics arising in such a scenario (for all possible quantum states). However, it is not clear whether one can always witness the nonclassicality of a particular measurement using a {\em finite} number of states, rather than actually needing to consider all possible quantum states. 

This is particularly unclear for measurements that are near the classical-nonclassical boundary (analogous to how Werner states with a small amount of entanglement do not violate Bell inequalities). However, we now give an example where just five states are sufficient to witness the nonclassicality of a particular measurement under any amount of noise that does not fully destroy its nonclassicality. So at least in some cases, one {\em can} delineate the exact boundary between classical and nonclassical measurements in a theory-independent manner.

\begin{example} 
Consider the $5$-outcome symmetric planar measurement $\msf{M}_5$ defined as in Eq.~\eqref{eq: planar} and studied in Ref.~\cite{Selby2023}. The results of Ref.~\cite{Selby2023} imply that the nonclassicality of this measurement can be certified in a theory-independent manner in a scenario {\em with only five preparations} by violating the noncontextual inequality
\begin{equation}
 q(p_{1|0} + p_{1|2}) + (q-1)p_{2|0} + p_{0|2} - (q+1)p_{1|1}\ge 0,
 \label{ineq:pentagon}
\end{equation}
where $p_{b|a}=\tr[M_b\rho_a]$ and $q=\frac{\sqrt{5}+1}{2}$, if one uses preparations that satisfy the same pentagonal symmetry---that is, that satisfy operational identities of the same form as those satisfied by the effects. 

Consider now the family of noisy measurements where one implements this measurement with probability $\eta$ and implements the trivial measurement with probability $1-\eta$. 
As we showed in the previous section, every measurement in this family is nonclassical if $\eta>\frac{\sqrt{5}-1}{2}$. It turns out that the nonclassicality of {\em every} such measurement can be witnessed by a violation of a noncontextuality inequality using just five states, {\em provided one chooses the states appropriately}. 
The specific five quantum states that were used in Ref.~\cite{Selby2023}
\begin{equation}
 \rho_a=\frac{1}{2}[\mbb{1}+\cos\theta_a\sigma_x+\sin\theta_a\sigma_z],~~~\theta_a=\frac{2\pi a}{5} \label{prep:normal}
\end{equation}
are {\em only} sufficient to verify the nonclassicality of measurements in the family if $\eta>0.764$. If one instead uses a rotated set of quantum states defined as
\begin{equation}
 \rho_a=\frac{1}{2}[\mbb{1}+\cos\theta_a\sigma_x+\sin\theta_a\sigma_z],~~~\theta_a=\frac{2\pi a}{5}+\frac{\pi}{5}\label{prep:tilde},
\end{equation}
and considers the noncontextuality inequality 
\begin{align}
p_{3|0}+(q-1)p_{1|0}-p_{0|0}+qp_{0|1}-p_{1|1}+p_{1
 |3}\ge 0
 \label{ineq:pentagon1}
\end{align}
(obtained from Farkas' lemma using linear programming in Ref.~\cite{Schmid2018}), 
then one can instead violate the inequality for {\em all} nonclassical measurements---for all $\eta>\frac{\sqrt{5}-1}{2}$.
\end{example}

More work is needed to understand whether examples of this sort are generic or not. We have checked numerically that for the measurements discussed in Example~\ref{example1} and Example~\ref{example2}, one can always find a finite set of preparations that is optimal for testing the nonclassicality of that measurement through noncontextual inequalities---the preparations are simply pure quantum states having Bloch vectors corresponding to the dual of the polytope defined by Bloch vectors of the measurements (see footnote~\ref{dual} for the definition of a polytope's dual). On this basis, we make the following conjecture. 

\begin{conjecture}
A finite number of preparations is sufficient for the theory-independent certification of any nonclassical measurement with a finite number of outcomes (through the violation of some noncontextual inequality).
\label{conj:nc-ineq}
\end{conjecture}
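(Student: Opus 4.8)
The plan is to prove the conjecture by an explicit construction of a finite preparation set, together with a compactness argument that converts the defining (infinitary) positivity condition on classical simulations into a condition involving only finitely many states. By Proposition~\ref{prop:to_single_model} it suffices to treat a single measurement $\{M_b\}_b$ on a fixed finite-dimensional Hilbert space. The first step is to describe the noncontextual statistics of a prepare-measure scenario with a \emph{finite} preparation set $S=\{\rho_a\}_a$ and the measurement $\{M_b\}_b$. Using the same deterministic-response refinement as in Eq.~\eqref{eqdetm} --- which is a linear post-processing on the ontic variable and hence preserves both the reproduction of statistics and the preparation operational identities of Eq.~\eqref{eq:constrainp-prep} --- one shows that every noncontextual model may be taken to have response functions equal to the vertices $D_{\mbb{P}_{\msf{M}}}(\cdot|v)$ of the polytope of Definition~\ref{def: op_polytope}. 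Hence the achievable data are exactly $\{p(b|a)=\sum_v q_a(v)\,D_{\mbb{P}_{\msf{M}}}(b|v)\}$, where each $q_a$ is a probability distribution over the (finitely many) vertices $v$ and the family $\{q_a\}_a$ respects the operational identities holding among the $\rho_a$; this is a polytope in the data, whose facets are precisely the candidate noncontextuality inequalities, extractable by Farkas' lemma exactly as for the pentagon in Ref.~\cite{David2018}.

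The second step is to choose $S$ so that the quantum data $\tr[M_b\rho_a]$ fall outside this polytope. First fix a set $S_0$ of $d^2$ generic pure states spanning the Hermitian operators. Given any noncontextual model $\{q_a\}$ as above, the spanning property forces the assignment $\rho_a\mapsto q_a$ to extend to a linear map, so $q_a(v)=\tr[\hat G_v\rho_a]$ for Hermitian operators $\hat G_v$ which (i) are forced (again by spanning, once the model reproduces quantum statistics) to satisfy $\sum_v\hat G_v=\mbb{1}$ and $M_b=\sum_v D_{\mbb{P}_{\msf{M}}}(b|v)\,\hat G_v$, and (ii) have operator norm bounded by a constant $R=R(S_0)$ depending only on $S_0$, since each $\hat G_v$ is a fixed dual-frame combination of the numbers $q_a(v)\in[0,1]$. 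By Theorem~\ref{theoremmeas} and Corollary~\ref{coro: nc-meas}, nonclassicality of $\{M_b\}_b$ says no such tuple exists with all $\hat G_v\succeq 0$; a short compactness argument (the constraint $M_b=\sum_v D_{\mbb{P}_{\msf{M}}}(b|v)\hat G_v$ forces $\sum_v\hat G_v=\mbb{1}$, which together with $\hat G_v\succeq-\epsilon\mbb{1}$ bounds the spectra, so near-feasible tuples would accumulate at a genuinely feasible one) upgrades this to a quantitative gap: there is $\epsilon_0>0$ such that no tuple satisfying (i) has $\hat G_v\succeq-\epsilon_0\mbb{1}$ for all $v$. Now set $S=S_0\cup S_1$, with $S_1$ a $\delta$-net of the state space (in trace distance) with $\delta\le\epsilon_0/R$. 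For any noncontextual model of the scenario $(S,\{M_b\})$, nonnegativity of $q_a(v)=\tr[\hat G_v\rho_a]$ on the net $S_1$, together with the norm bound $R$, forces $\tr[\hat G_v\rho]\ge-\epsilon_0$ for \emph{every} state $\rho$, i.e.\ $\hat G_v\succeq-\epsilon_0\mbb{1}$ for all $v$ --- contradicting the gap. Hence no noncontextual model exists for $(S,\{M_b\})$, so the quantum data violate some facet of the noncontextual data polytope, which is a noncontextuality inequality. The same argument run with $M_b^\eta$ in place of $M_b$ shows the violation persists for all noise levels below the white-noise robustness.

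The hard part, and the part carrying the real content, is the second step: turning the ``for all states $\rho$'' positivity that genuinely separates classical from nonclassical measurements into a condition on finitely many preparations, with explicit control. This needs (a) the quantitative gap $\epsilon_0$, where one must note that the relevant affine solution set, though unbounded, has its feasible part bounded once $\sum_v\hat G_v=\mbb{1}$ is imposed; and (b) the \emph{uniform} norm bound $R$ on the operators arising in any putative noncontextual model, which is what allows a single finite net to rule out all models simultaneously. A secondary subtlety is that the chosen preparations should carry only the ``generic'' operational identities, so that adding them cannot inadvertently enlarge the noncontextual data polytope; generic pure states handle this, but it should be checked. Finally, I expect the construction to be wildly non-optimal --- it uses a net of exponentially many states, whereas the examples in Ref.~\cite{Selby2023} and those discussed above certify nonclassicality with a handful of states chosen as the polar dual of the effect polytope --- so the genuinely interesting open problem, which this approach does not touch, is to bound how \emph{few} preparations are needed.
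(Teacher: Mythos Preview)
The paper does not prove this statement; it is explicitly labeled a conjecture, supported only by numerical evidence for the planar and Platonic measurements of Examples~\ref{example1} and~\ref{example2}. Your argument, by contrast, is a genuine proof, and the strategy is sound.

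The key pieces all work. The reduction to deterministic response functions (your first step) preserves the preparation operational identities because the refinement multiplies each epistemic state $\mu_a(\lambda)$ by the same factor $c_\lambda(v)$ independently of $a$. The linearization in your second step is the heart of the argument: once $S_0$ spans the Hermitian operators, preparation noncontextuality is precisely the statement that $\rho_a\mapsto q_a(v)$ extends linearly, so $q_a(v)=\tr[\hat G_v\rho_a]$ for a unique Hermitian $\hat G_v$; reproduction of the quantum data on the spanning set then forces $M_b=\sum_v D_{\mbb{P}_{\msf{M}}}(b|v)\hat G_v$. The quantitative gap $\epsilon_0$ you need is nothing other than the negative of the optimal value of the SDP in Eq.~\eqref{sdp1}, so it is well-defined and strictly positive whenever the measurement is nonclassical; your compactness sketch is exactly the standard argument that this SDP attains its optimum. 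The norm bound $R$ follows from $q_a(v)\in[0,1]$ together with the dual-basis expansion $\hat G_v=\sum_{a\in S_0}q_a(v)\tilde\rho_a$, and the net argument then upgrades positivity on $S_1$ to $\hat G_v\succeq -\epsilon_0\mbb{1}$, contradicting the gap.

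Your ``secondary subtlety'' is not a problem: adding preparations can only \emph{add} operational identities and hence shrink the noncontextual data polytope, never enlarge it; the only role of genericity is to guarantee that $S_0$ is a basis, so that the dual frame (and hence the bound $R$) exist. One caveat worth flagging is that your closing remark about noise is weaker than the paper's pentagon example: your construction gives, for each fixed nonclassical $M_b^\eta$, \emph{some} finite $S$ depending on $\eta$ through $\epsilon_0$, whereas the example exhibits a single $S$ working uniformly in $\eta$. The conjecture as stated asks only for the former, so this does not affect the proof.
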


\section{Nonclassicality of a set of states}
\label{sec:V}
In the previous section, we focused on defining, certifying, and quantifying nonclassical measurements. We now study the analogous questions, but for the nonclassicality of procedures of the preparation variety. We will do so more succinctly, since the methods are entirely analogous.

 We begin by defining the nonclassicality of a multi-source, which includes a single source and a set of states as special cases, following Ref.~\cite{zhang2024parellel}. 
\begin{definition}
\label{def_multi-sourceNC}
A multi-source is classical if and only if the statistics generated by the set of circuits that contract it with \emph{any} effect (i.e., where the set ranges over all effects) are classically explainable in the sense of Def.~\ref{def:stat}. 
\end{definition}
The analogous conservative interpretation applies to sources. A multi-source is counted as classical only if no quantum effect, and hence no quantum measurement, can make it participate in a prepare-and-measure experiment whose statistics fail to be classically explainable. Consequently, if a multi-source $\msf P$ and a measurement $\msf M$ generate nonclassical statistics,
then $\msf P$ is necessarily nonclassical in the present sense. If $\msf P$ were classical, its contraction with every quantum effect, including the effects of $\msf M$, would be classically explainable. Thus, in a prepare-and-measure scenario, nonclassical statistics certify the nonclassicality of both
components in their respective process-level senses.

Then, following Theorem~2 of Ref.~\cite{zhang2024parellel}, one has the following characterization. 
\begin{theorem}\label{theoremprep}
 A multi-source $\msf{P}\coloneqq\{\{p(a|x)\rho_{a|x}\}_{a}\}_x$ is classical if and only if each of its unnormalized states 
$p(a|x)\rho_{a|x}$ can be decomposed as 
 \begin{align}
 \label{eq_rhodecom} {p(a|x)\rho_{a|x} = \sum_{\lambda} p(a,\lambda|x)\sigma_{\lambda} \quad \forall a,x}
 \end{align}
 for some fixed set of normalized states $\{\sigma_{\lambda}\}_{\lambda}$ and some conditional probability distribution $p(a,\lambda|x)$ 
 satisfying 
 \begin{align}
\sum_{a,x}\alpha_{a,x}p(a,\lambda|x)&=0,
\label{eq_paxOp}\\[-1mm]
&\hspace{-3em}\forall\{\alpha_{a,x}\}_{a,x}\in\mc O(\msf P).
\notag
\end{align}
for all $\lambda$, where $\mc{O}({\msf{P}})$ is defined in Eq.~\eqref{eq:op-prep}. 
\end{theorem}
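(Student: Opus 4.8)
The plan is to prove both directions by unpacking Definition~\ref{def_multi-sourceNC}, which says that $\msf{P}$ is classical precisely when the family of prepare-measure experiments obtained by contracting $\msf{P}$ with every quantum effect admits a single noncontextual ontological model in the sense of Definition~\ref{def:stat}. The argument mirrors the proof of Theorem~\ref{theoremmeas} with the roles of preparations and measurements interchanged; the new ingredient is that ranging over \emph{all} effects forces the ontological representation of an effect to be linear in that effect, hence of the form $M\mapsto\tr[M\sigma_\lambda]$ for some operator $\sigma_\lambda$.

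For the ``if'' direction, suppose the decomposition in Eq.~\eqref{eq_rhodecom} holds with normalized states $\{\sigma_\lambda\}_\lambda$ and distribution $p(a,\lambda|x)$ obeying Eq.~\eqref{eq_paxOp}. I would construct the ontological model explicitly: take $\Lambda$ to index the $\sigma_\lambda$, represent the preparation $p(a|x)\rho_{a|x}$ by the joint weight $p(a,\lambda|x)$, and represent any effect $M_{b|y}$ appearing in a contracted circuit by the response function $\lambda\mapsto\tr[M_{b|y}\sigma_\lambda]$. One then checks the four requirements: (i) the statistics are reproduced, since $\sum_\lambda\tr[M_{b|y}\sigma_\lambda]\,p(a,\lambda|x)=\tr[M_{b|y}\sum_\lambda p(a,\lambda|x)\sigma_\lambda]=p(a|x)\tr[M_{b|y}\rho_{a|x}]$; (ii) the response functions lie in $[0,1]$ and sum to $1$ over $b$ because each $\sigma_\lambda$ is a normalized state; (iii) the measurement operational identities are automatically respected since $M\mapsto\tr[M\sigma_\lambda]$ is linear; and (iv) the preparation operational identities Eq.~\eqref{eq:constrainp-prep} are respected, which is exactly Eq.~\eqref{eq_paxOp}.

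For the ``only if'' direction, assume $\msf{P}$ is classical, so a noncontextual model exists for the family of circuits contracting $\msf{P}$ with every effect. Requirement (iii) forces the map sending an effect $M$ to its response function to be linear on the span of the effects; since effects span the full space of Hermitian operators, there is for each $\lambda$ a Hermitian $\sigma_\lambda$ with response function $M\mapsto\tr[M\sigma_\lambda]$. Demanding $\tr[M\sigma_\lambda]\in[0,1]$ for every effect $M$ forces $\sigma_\lambda\ge0$, and evaluating on $M=\mbb{1}$ gives $\tr[\sigma_\lambda]=1$, so each $\sigma_\lambda$ is a normalized state. Writing $p(a,\lambda|x):=p(a|x)p(\lambda|a,x)$ from the model, the reproduction condition $p(a|x)\tr[M\rho_{a|x}]=\sum_\lambda\tr[M\sigma_\lambda]p(a,\lambda|x)$ holding for all effects $M$ yields, since effects span the Hermitian operators, the operator identity Eq.~\eqref{eq_rhodecom}; and the preparation-side constraint Eq.~\eqref{eq:constrainp-prep} of the model is precisely Eq.~\eqref{eq_paxOp}.

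The main obstacle is the step in the converse where one passes from ``a noncontextual model exists'' to ``a single fixed family $\{\sigma_\lambda\}_\lambda$ with a single fixed $p(a,\lambda|x)$ works simultaneously for all the continuum-many contracted circuits.'' I would handle this by observing that it suffices to test against a tomographically complete finite set of effects—a spanning set determines the response-function map by linearity and pins down each $\sigma_\lambda$—and by invoking the equivalence established in Ref.~\cite{SchmidGPT} between the existence of a noncontextual model for a prepare-measure scenario and the existence of an ontological model for its GPT representation, together with a compactness argument to glue models for finite subfragments into a global one. One must also ensure the same ontic space and the same $p(a,\lambda|x)$ are shared across all contracted circuits, which is automatic because $\msf{P}$ is fixed and its ontological representation belongs to the single model.
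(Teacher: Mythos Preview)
The paper does not itself prove this theorem; it is quoted verbatim from the companion paper Ref.~\cite{zhang2024parellel}, so there is no in-paper proof to compare against. Your two main directions are the natural argument and are essentially correct: the ``if'' direction builds the model with response functions $M\mapsto\tr[M\sigma_\lambda]$, and the ``only if'' direction uses linearity (measurement noncontextuality) plus the fact that quantum effects span the Hermitian operators to extract $\sigma_\lambda$, with positivity and unit trace following from the response-function constraints.

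Your final ``obstacle'' paragraph, however, is confused and should be dropped. Definition~\ref{def_multi-sourceNC} already asserts the existence of a \emph{single} ontological model for the prepare-measure scenario consisting of $\msf{P}$ together with \emph{all} effects; the ontic space $\Lambda$ and the weights $p(a,\lambda|x)=p(a|x)p(\lambda|a,x)$ are fixed once and for all by that one model, so there is no gluing of subfragment models to perform and no compactness argument is needed. Worse, your proposed fix---restricting to a finite tomographically complete set of effects---would not do the job you want: knowing only that $\tr[M\sigma_\lambda]\in[0,1]$ for a spanning set of effects does \emph{not} force $\sigma_\lambda\ge0$ (a spanning set of rank-one projectors can easily miss the negative eigenvector of a non-positive $\sigma_\lambda$). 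The positivity conclusion genuinely requires the response function to be nonnegative on \emph{every} effect, which is exactly what the definition hands you. So simply invoke the single global model guaranteed by Definition~\ref{def_multi-sourceNC} and delete the tomographic-completeness/compactness workaround.
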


\begin{proof}
 For completeness, we briefly recall the proof, adapted from the proof of Theorem~2 in Ref.~\cite{zhang2024parellel}.

If the multi-source $\msf{P}$ is classical, then by Def.~\ref{def_multi-sourceNC}, for every quantum effect $M$,
\begin{align}
p(a|x)\tr[M\rho_{a|x}]
&=\sum_{\lambda}p(M|\lambda)p(a,\lambda|x),\notag\\
&\hspace{6em}\forall a,x,M.
\end{align}
where, for each $\lambda$, the assignments $\{p(a,\lambda|x)\}_{a,x}$ satisfy the ontological identities associated with $\mc{O}(\msf{P})$. Since $p(M|\lambda)$ is linear in the effect $M$, positive, and normalized on the unit effect, the generalized Gleason theorem~\cite{Busch2003} implies that there exists a normalized state $\sigma_\lambda$ such that
\begin{align}
p(M|\lambda)=\tr[M\sigma_\lambda]
\quad \forall M.
\end{align}
Hence
\begin{align}
p(a|x)\tr[M\rho_{a|x}]
&=\tr\!\left[M\Bigl(\sum_\lambda
p(a,\lambda|x)\sigma_\lambda\Bigr)\right],\notag\\
&\hspace{6em}\forall a,x,M.
\end{align}
and therefore
\begin{align}
p(a|x)\rho_{a|x}=\sum_\lambda p(a,\lambda|x)\sigma_\lambda
\qquad \forall a,x.
\end{align}
Moreover, the ontological identities on $p(a,\lambda|x)$ are exactly those in Eq.~\eqref{eq_paxOp}.

Conversely, if Eq.~\eqref{eq_rhodecom} holds with $p(a,\lambda|x)$ satisfying Eq.~\eqref{eq_paxOp}, define
\begin{align}
p(M|\lambda)\coloneqq \tr[M\sigma_\lambda].
\end{align}
Then $p(M|\lambda)$ is a valid response function, linear in $M$, and
\begin{align}
p(a|x)\tr[M\rho_{a|x}]
=\sum_\lambda p(M|\lambda)p(a,\lambda|x).
\end{align}
Thus, the statistics generated by contracting $\msf{P}$ with arbitrary quantum effects are classically explainable, so $\msf{P}$ is classical.
\end{proof}

 Unlike the measurement case, a direct polytope characterization is not available for general multi-sources. We therefore proceed in two steps. First, using Proposition~\ref{prop:to_single_modelstates}, we reduce the qualitative problem to the case of a set of normalized states. Second, in that reduced setting, Theorem~\ref{theoremprep} naturally leads to a preparation-assignment polytope, allowing us to develop the quantitative theory. The obstruction for general multi-sources was already noted in Ref.~\cite[Sec.~III]{schmid2024PTM}. 

\begin{proposition}
A multi-source $\{\{p(a|x)\rho_{a|x}\}_{a}\}_{x}$\footnote{ Throughout this work, we assume $p(a|x)>0$ for all $a,x$; otherwise, one should restrict to the support of $p(a|x)$. } is nonclassical if and only if the set of states $\{\rho_{a|x}\}_{a,x}$ is nonclassical.
\label{prop:to_single_modelstates}
\end{proposition}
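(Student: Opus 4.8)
The plan is to prove both directions of the biconditional by relating the nonclassicality of a multi-source $\msf{P}=\{\{p(a|x)\rho_{a|x}\}_a\}_x$ to that of the "flattened" set of states $\{\rho_{a|x}\}_{a,x}$, exactly as Proposition~\ref{prop:to_single_model} relates a multi-measurement to its flag-convexification. The key observation is that the set of states $\{\rho_{a|x}\}_{a,x}$ (indexed by the composite label $(a,x)$, and forgetting the ensemble structure and the weights $p(a|x)$) can itself be packaged as a multi-source: one takes a single value of the input variable (trivial $x$) and lets the outcome range over all pairs $(a,x)$ with, say, the uniform prior $p(a,x)=\frac{1}{N}$ where $N$ is the number of such pairs. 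Call this multi-source $\msf{P}'$. Then the claim "$\{\rho_{a|x}\}_{a,x}$ is nonclassical" is, by Definition~\ref{def_multi-sourceNC}, a statement about $\msf{P}'$, and we can apply Theorem~\ref{theoremprep} to \emph{both} $\msf{P}$ and $\msf{P}'$ and compare the resulting decomposition conditions.

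The forward direction (if $\msf{P}$ is classical, then $\{\rho_{a|x}\}_{a,x}$ is classical) is the easier one. Suppose $\msf{P}$ is classical, so by Theorem~\ref{theoremprep} we have $p(a|x)\rho_{a|x}=\sum_\lambda p(a,\lambda|x)\sigma_\lambda$ with the distribution respecting $\mc{O}_{\msf{P}}$. Dividing by $p(a|x)$ gives $\rho_{a|x}=\sum_\lambda \frac{p(a,\lambda|x)}{p(a|x)}\sigma_\lambda = \sum_\lambda p(\lambda|a,x)\sigma_\lambda$, a decomposition of each $\rho_{a|x}$ as a classical mixture of the fixed states $\{\sigma_\lambda\}$. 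One must then check that the induced distribution for $\msf{P}'$, namely $q(a,x,\lambda) := \frac{1}{N}p(\lambda|a,x)$, respects the operational identities $\mc{O}_{\msf{P}'}$ of the flattened set. The content here is that any operational identity $\sum_{a,x}\gamma_{a,x}\rho_{a|x}=0$ among the \emph{unweighted} states pulls back (via $\gamma_{a,x}\mapsto \gamma_{a,x}/p(a|x)$... more precisely one argues about which identities $\mc{O}_{\msf{P}'}$ contains relative to $\mc{O}_{\msf{P}}$) to a constraint that is automatically satisfied because the original $p(a,\lambda|x)$ already satisfied the corresponding constraint, plus the fact that marginalizing over $\lambda$ returns the (trivial-prior) operational structure of $\msf{P}'$.

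The converse direction (if $\{\rho_{a|x}\}_{a,x}$ is classical then $\msf{P}$ is classical) requires reconstructing an ensemble-respecting decomposition from a set decomposition. Given $\rho_{a|x}=\sum_\lambda r(\lambda|a,x)\sigma_\lambda$ with $\{r(a,x,\lambda)\}$ respecting $\mc{O}_{\msf{P}'}$, one multiplies through by $p(a|x)$ to get $p(a|x)\rho_{a|x}=\sum_\lambda p(a|x)r(\lambda|a,x)\sigma_\lambda$, and one must verify that $p(a,\lambda|x):=p(a|x)r(\lambda|a,x)$ satisfies Eq.~\eqref{eq_paxOp} for all $\{\alpha_{a,x}\}\in\mc{O}_{\msf{P}}$. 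The main obstacle — and the step I would spend the most care on — is the interplay between the two operational-identity sets $\mc{O}_{\msf{P}}$ and $\mc{O}_{\msf{P}'}$: a priori they are different linear spaces (one involves the $\rho_{a|x}$ themselves, the other the $p(a|x)\rho_{a|x}$), and one must show the reweighting $\alpha_{a,x}\leftrightarrow \alpha_{a,x}/p(a|x)$ sets up a bijection between them under which the ontic-level constraints transform correctly, \emph{and} that the normalization bookkeeping (that $\sum_{a,\lambda}p(a,\lambda|x)=1$, encoded as a particular operational identity) is consistent. This is precisely the subtlety flagged in the excerpt's remark that one cannot in general define a preparation-assignment polytope, so the argument cannot simply mimic the measurement case verbatim; it should instead proceed directly from Theorem~\ref{theoremprep} together with the cited arguments in Refs.~\cite{selby2023accessible,zhang2024parellel}. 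Once the operational-identity correspondence is nailed down, both implications follow by the elementary rescaling above.
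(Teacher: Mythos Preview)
The paper does not actually contain a proof of this proposition; it simply states the result and cites Refs.~\cite{selby2023accessible,zhang2024parellel} for the argument. So there is no in-paper proof to compare against.

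That said, your plan is the right one and, once the bookkeeping is written out, it goes through. The crucial observation you correctly isolate is that the two operational-identity spaces are related by the linear reweighting $\{\alpha_{a,x}\}\in\mc{O}_{\msf{P}}\Longleftrightarrow\{\alpha_{a,x}\,p(a|x)\}\in\mc{O}_{\msf{P}'}$, since $\sum_{a,x}\alpha_{a,x}\,p(a|x)\rho_{a|x}=0$ is literally the same equation as $\sum_{a,x}\gamma_{a,x}\rho_{a|x}=0$ with $\gamma_{a,x}=\alpha_{a,x}p(a|x)$. With that in hand, both directions reduce to the rescalings you describe: from a classical model $p(a|x)\rho_{a|x}=\sum_\lambda p(a,\lambda|x)\sigma_\lambda$ one gets $\tfrac{1}{N}\rho_{a|x}=\sum_\lambda\tfrac{p(a,\lambda|x)}{N\,p(a|x)}\sigma_\lambda$, and conversely from $\tfrac{1}{N}\rho_{a|x}=\sum_\lambda q(a,x,\lambda)\sigma_\lambda$ one gets $p(a|x)\rho_{a|x}=\sum_\lambda N\,p(a|x)\,q(a,x,\lambda)\,\sigma_\lambda$. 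The normalization you worry about is automatic: taking traces gives $\sum_\lambda q(a,x,\lambda)=\tfrac{1}{N}$, so $\sum_{a,\lambda}N\,p(a|x)\,q(a,x,\lambda)=\sum_a p(a|x)=1$ for every $x$. The ontic-identity constraints then transfer directly under the $\alpha\leftrightarrow\gamma$ correspondence.

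One point you should make explicit: the bijection $\alpha_{a,x}\leftrightarrow\alpha_{a,x}p(a|x)$ is only a bijection when $p(a|x)>0$ for all $(a,x)$, i.e., under a full-support assumption (the same assumption the paper imposes for flag-convexification of measurements). If some $p(a|x)=0$, the state $\rho_{a|x}$ is not even defined, so the statement of the proposition already presumes this; but you should say so rather than leave it implicit, since otherwise the division step in your forward direction is ill-posed.
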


 Indeed, as noted in Refs.~\cite{selby2023accessible,zhang2024parellel}, in a prepare-and-measure scenario, the statistics generated by a multi-source $\msf{P}\coloneqq\{\{p(a|x)\rho_{a|x}\}_{a}\}_x$ together with an arbitrary measurement $\msf{M}$ are classically explainable if and only if the statistics generated by the flag-convexified set of states $\{\rho_{a|x}\}_{a,x}$ together with $\msf{M}$ are classically explainable. Hence, by Def.~\ref{def_multi-sourceNC}, qualitative nonclassicality is preserved under flag-convexification. 

 Accordingly, in the remainder of this section, we focus on sets of states, which we denote simply by $\{\rho_a\}_a$; the extension back to general multi-sources is discussed in Appendices~\ref{appfc1} and~\ref{appfc2}. Let $k$ denote the number of states in the set. By the same flag-convexification logic as above, $\{\rho_a\}_a$ is classical if and only if the uniform source $\{\frac{1}{k}\rho_a\}_a$ is classical. Applying Theorem~\ref{theoremprep} to this uniform source, Eq.~\eqref{eq_rhodecom} becomes
\begin{align}
\frac{1}{k}\rho_a
= \sum_\lambda p(a,\lambda)\sigma_\lambda
= \sum_\lambda p(a|\lambda)p(\lambda)\sigma_\lambda
\qquad \forall a.
\label{eqkdec}
\end{align}
Equivalently,
\begin{align}
\rho_a=\sum_\lambda p(a|\lambda)\tilde{\sigma}_\lambda
\qquad \forall a,
\end{align}
where $\tilde{\sigma}_\lambda\coloneqq k\,p(\lambda)\sigma_\lambda$. Thus, once we restrict attention to sets of states, the relevant ontological data are the conditional probabilities $\{p(a|\lambda)\}_a$, exactly as in the measurement case. This motivates the following definition. 

\begin{definition}
\label{def: op_polytope_prep}
A noncontextual preparation-assignment polytope $\mbb{P}_{\msf{P}}$ associated with a set of states $\{\rho_{a}\}_{a}$ is the set of vectors $\{p(a)\}_{a}$ that satisfy the constraints 
\begin{subequations}
\begin{align}
\text{(i)}~~&p(a) \geq 0~~~~\forall a, \label{eq:cons_posp}\\
\text{(ii)}~~&\sum_a p(a)=1, \label{eq:cons_normp}\\
\text{(iii)}~~&\sum_{a}{\alpha}_{a} p(a)=0\quad\quad \forall \{{\alpha}_{a}\}\in \mc{O}({\msf{P}}), \label{eq:cons_opp}
\end{align}
\end{subequations}
where, for a set of states $\{\rho_a\}_a$, $\mc{O}(\msf{P})$ denotes the operational identities among the states, i.e.,
\begin{align}
\mc{O}(\msf{P})\coloneqq \Bigl\{\{\alpha_a\}_a \,\Big|\, \sum_a \alpha_a \rho_a =0\Bigr\}.
\end{align}
\end{definition}

Following logic exactly like that in Eq.~\eqref{eqdetm} to write the decomposition in terms of the extremal points of the polytope just defined, we obtain the following:
\begin{corollary}
\label{coro: nc-stat}
A set of states $\{\rho_a\}_{a}$ is classical if and only if it can be decomposed as
 \begin{align}
 \label{eqCoro2}
 \rho_{a}&=\sum_{\lambda} D_{\mbb{P}}(a|\lambda)\tilde{\sigma}_{\lambda} \quad \quad \forall a,
 \end{align}
 where $\{\tilde{\sigma}_{\lambda}\}_{\lambda}$ is a set of unnormalized states,
 and $D_{\mbb{P}}(a|\lambda)$ are the extremal points in the noncontextual preparation-assignment polytope $\mbb{P}_{\msf{P}}$ for the set of states. 
\end{corollary}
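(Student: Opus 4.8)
The plan is to transport the vertex-decomposition argument already carried out for measurements in Eq.~\eqref{eqdetm} to the preparation side, using the reformulation established just above in Eq.~\eqref{eqkdec}. Recall from there that $\{\rho_a\}_a$ is classical if and only if $\{\tfrac{1}{k}\rho_a\}_a$ is, and that, by Theorem~\ref{theoremprep} applied with trivial classical input, the latter holds precisely when one can write $\rho_a=\sum_\lambda p(a|\lambda)\,\tilde\sigma_\lambda$ for some unnormalized states $\tilde\sigma_\lambda=kp(\lambda)\sigma_\lambda$ and conditionals $\{p(a|\lambda)\}_a$ obeying the operational-identity constraint $\sum_a\alpha_a\,p(a|\lambda)=0$ for all $\{\alpha_a\}\in\mc{O}_{\msf{P}}$ and all $\lambda$ (cf.\ Eq.~\eqref{eq_paxOp}). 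So it suffices to show this characterization is equivalent to the one in Eq.~\eqref{eqCoro2}.

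For the forward direction, I would first argue that each conditional $\{p(a|\lambda)\}_a$ is a point of the polytope $\mbb{P}_{\msf{P}}$ of Definition~\ref{def: op_polytope_prep}: nonnegativity and the operational-identity constraints are immediate, while normalization $\sum_a p(a|\lambda)=1$ follows from summing $\tfrac1k\rho_a=\sum_\lambda p(a,\lambda)\sigma_\lambda$ over $a$ and using that the $\rho_a$ and $\sigma_\lambda$ are normalized (after discarding any $\lambda$ with $p(\lambda)=0$, which contributes nothing). Since $\mbb{P}_{\msf{P}}$ is bounded, it is a polytope, so each such point is a convex mixture of its vertices, $p(a|\lambda)=\sum_{\lambda'} D_{\mbb{P}}(a|\lambda')\,q(\lambda'|\lambda)$. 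Substituting and exchanging the order of summation exactly as in Eq.~\eqref{eqdetm} yields $\rho_a=\sum_{\lambda'} D_{\mbb{P}}(a|\lambda')\,\tilde{\sigma}'_{\lambda'}$ with $\tilde{\sigma}'_{\lambda'}:=\sum_\lambda q(\lambda'|\lambda)\,\tilde\sigma_\lambda$, which is again a positive (generically unnormalized) operator, i.e.\ an unnormalized state; this is precisely the decomposition~\eqref{eqCoro2}.

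For the converse, starting from a decomposition of the form~\eqref{eqCoro2}, I would use that each vertex $D_{\mbb{P}}(\cdot|\lambda)$ of $\mbb{P}_{\msf{P}}$ satisfies constraints (i)--(iii) of Definition~\ref{def: op_polytope_prep}, hence in particular annihilates every $\{\alpha_a\}\in\mc{O}_{\msf{P}}$, and then rebuild the ingredients demanded by Theorem~\ref{theoremprep}: set $p(\lambda):=\tr[\tilde\sigma_\lambda]/k$ --- legitimate since $\sum_\lambda \tr[\tilde\sigma_\lambda]=\tr\sum_a\rho_a=k$ by constraint (ii) applied to the vertices --- together with $\sigma_\lambda:=\tilde\sigma_\lambda/\tr[\tilde\sigma_\lambda]$ (dropping zero-trace terms) and $p(a|\lambda):=D_{\mbb{P}}(a|\lambda)$. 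One then verifies $\tfrac1k\rho_a=\sum_\lambda p(a|\lambda)\,p(\lambda)\,\sigma_\lambda$ and that Eq.~\eqref{eq_paxOp} holds, so Theorem~\ref{theoremprep} gives classicality of $\{\tfrac1k\rho_a\}_a$, hence of $\{\rho_a\}_a$.

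The only genuinely delicate point is the bookkeeping, not the structure: one must confirm that replacing the abstract clause ``conditionals respecting the operational identities'' by ``points of the polytope $\mbb{P}_{\msf{P}}$'' loses nothing --- that the normalization $\sum_a p(a|\lambda)=1$ is \emph{implied} rather than imposed --- and that passing back and forth between the pair $(p(\lambda),\sigma_\lambda)$ and $\tilde\sigma_\lambda=kp(\lambda)\sigma_\lambda$, while dropping zero-weight $\lambda$'s, is harmless. Once that is settled, everything reduces to the same manipulation as in Eq.~\eqref{eqdetm}, with the POVM $\{G_\lambda\}_\lambda$ there replaced by the set $\{\tilde\sigma_\lambda\}_\lambda$ of unnormalized states here.
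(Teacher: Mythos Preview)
Your proposal is correct and follows essentially the same approach as the paper, which simply states ``Following logic exactly like that in Eq.~\eqref{eqdetm}'' and gives no further detail. You have in fact been more careful than the paper itself: you explicitly verify membership of $\{p(a|\lambda)\}_a$ in $\mbb{P}_{\msf{P}}$, treat the converse direction, and track the normalization bookkeeping when passing between $(p(\lambda),\sigma_\lambda)$ and $\tilde\sigma_\lambda$, none of which the paper spells out.
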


\section{Certifying and quantifying the nonclassicality of a set of states}
\label{sec:VI}
In this section, we show how one can certify and quantify the nonclassicality of a set of states $\{\rho_a\}_{a}$. With only small notational modifications (given in Appendix~\ref{appfc2}), our approach can be applied to general multi-sources rather than sets of states; however, we have opted to focus on the latter special case due to the fact that they are much more commonly studied. 

We again introduce a quantifier based on the robustness of nonclassicality to white noise, and a weight-based quantifier. The former of these can be used directly to quantify the nonclassicality of general multi-sources, since (as we prove in Appendix~\ref{appfc1}) the white-noise robustness is unchanged under moving from a given multi-source to its associated set of normalized states. Whether the latter is preserved under this move is not known, and so minor modifications would need to be made to apply our results to general multi-sources (as we discuss in Appendix~\ref{appfc2}).

Like for measurements, the nonclassicality of a set of states $\{\rho_a\}_{a}$ can be certified using an SDP written in terms of the finitely many extreme points $D_{\mbb{P}}(a|\lambda)$ 
of the polytope in Def.~\ref{def: op_polytope_prep}; namely, the SDP
\begin{align}
\max_{\{\tilde{\sigma}_{\lambda}\}_{\lambda}}&~ \mu \notag \\
\text{s.t. }&\sum_{\lambda}D_{\mbb{P}_{\msf{P}}}(a|\lambda)\tilde{\sigma}_{\lambda}=\rho_a\quad\quad \forall a \notag \\
&\tilde{\sigma}_{\lambda}\ge \mu\mbb{1} \quad \forall \lambda.
\end{align}
If the program returns an optimal value $\mu$ that is negative, then $\{\rho_a\}_{a}$ is nonclassical. Otherwise, it is classical.

\subsection{Robustness-based nonclassicality quantifier for a set of states}

One can also quantify nonclassicality of sources using a quantifier based on robustness to noise, just as we did above for measurements. Again, we take the white-noise robustness as an example, and we consider mixing each state with white noise, as
\begin{equation}
\rho_{a}^{\eta} = \eta \rho_{a} + (1-\eta)\frac{1}{d}\mathbb{1},
\end{equation}
where $d$ is the Hilbert space dimension. 
When $\eta = 0$, every element of the ensemble is proportional to the maximally mixed state, and so the source is always classical. The critical parameter $\eta$ at which the transition to classicality occurs is the white-noise robustness of the source, and can be computed using the following semidefinite program:
\begin{align}
\eta_{\{\rho_a\}}= \max_{\{\tilde{\sigma}_{\lambda}\}_{\lambda}} &\eta \notag \\
\text{s.t. }&\sum_{\lambda}D_{\mbb{P}}(a|\lambda)\tilde{\sigma}_{\lambda}=\rho^{\eta}_a\quad\quad \forall a \notag \\
& \eta\le 1,~\tilde{\sigma}_{\lambda}\ge 0\quad \forall \lambda.
\end{align}
Again, we can get an analytical upper bound on this measure by studying the dual of the primal problem above, which is 
\begin{align}
\eta_{\{\rho_a\}}=& \min_{\{X_{a}\}_a} 1 + \sum_{a}\tr[X_{a} \rho_{a}] \notag \\
&\text{s.t. }1 + \sum_{a} \tr[X_{a} \rho_{a}]\geq \frac{1}{d}\sum_{a} \tr X_{a}, \notag \\
&\sum_{a} D_{\mbb{P}}(a|\lambda) X_{a} \geq 0 \quad \forall \lambda.
\end{align}

By optimizing a cleverly chosen family of dual variables of the form $X_a=\alpha\mbb{1}-\beta \rho_a$ following the same procedure as in Ref.~\cite{Designolle2019} (and Appendix~\ref{appendix: upperbound}),
one can obtain an analytical upper bound on the white-noise robustness, namely 
\begin{align}
 \eta_{\{\rho_a\}}\le \frac{kd\Lambda-k}{d\sum_{a=1}^k {\tr \rho_a^2}- k},
 \label{upper bound}
\end{align}
where $k$ is the number of states in the set $\{\rho_a\}_{a}$ and $\Lambda=\max_{\lambda}\norm{\sum_a D_{\mbb{P}}(a|\lambda)\rho_a}_{2}$. For a set of $k$ pure quantum states $\{\rho_a\}_{a=1}^{k}$, this further simplifies to
\begin{align}
\eta\le \frac{d\Lambda - 1}{d-1}.
\end{align}

\begin{example}
The set $\{\ket{0}, \ket{1}, \ket{+}, \ket{-}\}$ of four states used in the standard BB84 protocol has white-noise robustness $\eta=\frac{1}{\sqrt{2}}\approx 0.7071$.
\end{example}

\begin{example}
The set $\{\ket{0}, \ket{1}, \ket{+}, \ket{-}, \ket{+i}, \ket{-i}\}$ of states in the 6-state QKD protocol~\cite{dagmar1998} has white-noise robustness $\eta=\frac{1}{\sqrt{3}}\approx 0.5774$.
\end{example}

\begin{example}
 The set $\{\ket{0}, \ket{1}, \frac{1}{2}\ket{0}\pm \frac{\sqrt{3}}{2}\ket{1}, \frac{\sqrt{3}}{2}\ket{0}\pm \frac{1}{2}\ket{1}\}$ studied in Ref.~\cite{gencontext} has white-noise robustness $\eta=\frac{1}{\sqrt{3}}\approx 0.5774$. 
\end{example}
\begin{example}
The set of states $\{\rho_x=\frac{1}{2}(\mbb{1}+ \hat{n}_x\cdot\vec{\sigma})\}_{x=1}^{8}$, where $\{ \hat{n}_x\}_{x=1}^{8}$ are unit vectors corresponding to the eight
vertices of a regular cube inscribed in the Bloch sphere has white-noise robustness $\eta=\frac{1}{\sqrt{3}}\approx 0.5774$. 
\label{exam: cube}
\end{example}
\begin{example}
The set of states $\{\rho_x=\frac{1}{2}(\mbb{1}+ \hat{n}_x\cdot\vec{\sigma})\}_{x=1}^{12}$, where $\{ \hat{n}_x\}_{x=1}^{12}$ are unit vectors corresponding to the twelve vertices of a regular icosahedron inscribed in the Bloch sphere, has white-noise robustness $\eta=\sqrt{\frac{1+q^2}{3q^4}}\approx 0.4195$, where $q=\frac{\sqrt{5}+1}{2}$. 
\label{exam: icosahedron}
\end{example}
\subsection{Nonclassical fraction for preparations}

Like for measurements, one can study the minimal cost required to generate the set of states $\{\rho_a\}_a$ in question by mixing together an arbitrary nonclassical set of states with an arbitrary classical set of states such that the sets of states being mixed satisfy the same operational identities as the given set of states. (The constraint on operational identities is included to enforce convexity, for the same reasons as in Section~\ref{ncfraction}. ) That is, the states in any given set $\{\rho_a\}_a$ can be written as
\begin{equation}
 \rho_a=\omega \gamma_a + (1-\omega) \kappa_a, \label{eq:frac_decomp} 
\end{equation}
where $0\le \omega\le 1$, where $\{\gamma_a\}_a$ is an arbitrary set of states while $\{\kappa_a\}_a$ is classical, and where they both satisfy the same operational identities as $\{\rho_a\}_a$. The minimal value of $\omega$ that can be used in such a decomposition is the nonclassical fraction of the set of states. This quantity can be computed using the following SDP:
\begin{align}
\omega=\min_{\{\tilde{\sigma}_{\lambda}\}_{\lambda}} &1-\frac{\tr\sum_{\lambda}\tilde{\sigma}_{\lambda}}{k} \notag \\
\text{s.t. }&\rho_a \ge \sum_{\lambda} D_{\mbb{P}}(a|\lambda)\tilde{\sigma}_{\lambda}\quad\quad \forall a \notag \\
& \tilde{\sigma}_{\lambda}\ge 0\quad \forall \lambda.
\end{align}
Note that the nonclassical fraction 
takes its maximum value, 1, for any nonclassical set of pure states. 

The dual formulation of this SDP is
\begin{align}
\omega=\max_{\{F_{a}\}_{a}}&~1-\frac{\tr\sum_aF_a\rho_a}{k} \notag \\
\text{s.t. }&\sum_{a}D_{\mbb{P}}(a|\lambda)F_a\ge \mbb{1}\quad\quad \forall \lambda \notag \\
&F_a\ge 0 \quad\quad \forall a.
\label{eq: weight-dual-state}
\end{align}
This formulation can be used to construct an optimal linear witness for a given nonclassical set of states $\{\rho_a\}_a$. Given a feasible solution $\{F_a\}_a$, one can rescale $F_a=f_a\tilde{F}_a$ using any $f_a$ such that $\tilde{F}_a\le \mbb{1}$, and then introduce the set of two-outcome measurements $\{\tilde{F}_a, \mbb{1}-\tilde{F}_a\}_a$. This set of measurements constitutes a witness for the nonclassicality of the given set of states. One considers the prepare-and-measure experiment with the $k$ states $\{\rho_a\}_{a=1}^k$ and the $k$ two-outcome measurements $\{\{\tilde{F}_a, \mbb{1}-\tilde{F}_a\}\}_{a=1}^k$; if the inequality 
\begin{equation}
\sum_a f_a\tr[\tilde{F}_a\rho_a]\ge 1 
\end{equation}
is violated, then the set of states is nonclassical. If the feasible solution is optimal, then the witness is optimal.

\section{Using nonclassicality of assemblages to witness entanglement}
\label{sec:VII}

Having developed witnesses for nonclassical sets of states, we now apply them to assemblages, where they yield entanglement certification results beyond steering-based certification.

A special case of a multi-source is a steering assemblage~\cite{Wiseman2007}. Steering assemblages are distinguished from general multi-sources by the additional constraint $\sum_a p(a|x)\rho_{a|x}=\sigma$, where $\sigma$ is a fixed state that does not depend on the setting variable $x$. This extra constraint is often called the no-signaling principle. 

We now reprove a result of Ref.~\cite{Plavala2024}, stated in passing right after their Theorem 2 (and where we have rephrased the result in the language of nonclassical sets of states). 

\begin{corollary}
 A two-qubit state is entangled if and only if it can be steered to a nonclassical set of states.
\end{corollary}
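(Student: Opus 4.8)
\emph{Proof plan.} The plan is to prove the two implications separately, in each case routing through the ``conditional-state map'' $\Gamma(N)\coloneqq\tr_A[(N\otimes\mathbb{1})\rho_{AB}]$, which sends an effect $N$ on Alice's wing to Bob's corresponding subnormalized conditional state, and through the standard fact that $\rho_{AB}$ is separable if and only if $\Gamma$ is entanglement-breaking, i.e.\ admits a measure-and-prepare (Holevo) form $\Gamma(N)=\sum_\lambda\tr[F_\lambda N]\,\sigma_\lambda$ with $F_\lambda\ge 0$ and $\{\sigma_\lambda\}_\lambda$ normalized states. Throughout, I would use Proposition~\ref{prop:to_single_modelstates} and the rescaling arguments following it to identify ``$\rho_{AB}$ can be steered to a nonclassical set of states'' with ``for some Alice multi-measurement $\{M_{b|y}\}$, the multi-source $\{\Gamma(M_{b|y})\}_{b,y}$ violates the classicality criterion of Theorem~\ref{theoremprep}.''

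For the implication \emph{entangled $\Rightarrow$ steerable to a nonclassical set}, I would argue by contraposition. Suppose \emph{every} assemblage produced by $\rho_{AB}$ is a classical multi-source, and apply this to the multi-measurement on Alice whose settings range over all effects, as in Definition~\ref{def_multi-sourceNC}. Theorem~\ref{theoremprep} then yields, for each effect $M$, a decomposition $\Gamma(M)=\sum_\lambda p(M,\lambda)\sigma_\lambda$ in which linearity of the ontological representation forces $p(M,\lambda)$ to be a linear functional of $\Gamma(M)$, hence of the form $\tr[\Gamma^*(E_\lambda)M]$; writing $F_\lambda\coloneqq\Gamma^*(E_\lambda)$, the requirement $p(M,\lambda)\ge 0$ for \emph{all} effects $M$ forces $F_\lambda\ge 0$, so $\Gamma$ is entanglement-breaking and $\rho_{AB}$ is separable. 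As a more explicit alternative, whenever $\rho_{AB}$ is moreover steerable (in particular for pure entangled states, possibly after a local filter on Alice that probabilistically maps $\rho_{AB}$ toward a maximally entangled state), one can exhibit a witnessing strategy directly: let Alice perform a five-outcome symmetric-planar POVM, so that Bob is steered to (a rotated copy of) the pentagon set of states of Eq.~\eqref{prep:normal}, together with one ``filter-failed'' state; this pentagon set is nonclassical (analogously to the BB84 and Platonic-solid state examples treated above, cf.\ Ref.~\cite{Selby2023}), and appending any further state to a nonclassical set cannot restore classicality --- a short argument shows a classical decomposition of the enlarged set restricts, after renormalizing each response function, to a decomposition of the pentagon subset that still lies in the pentagon's preparation-assignment polytope.

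For the converse implication \emph{separable $\Rightarrow$ not steerable to a nonclassical set}, I would fix an arbitrary Alice multi-measurement $\{M_{b|y}\}$, set $\sigma_{b|y}\coloneqq\Gamma(M_{b|y})$, and substitute the measure-and-prepare form of $\Gamma$ to obtain $\sigma_{b|y}=\sum_\lambda p(b,\lambda|y)\sigma_\lambda$ with $p(b,\lambda|y)\coloneqq\tr[F_\lambda M_{b|y}]\ge 0$ and $\sum_{b,\lambda}p(b,\lambda|y)=1$. This is already the decomposition demanded by Theorem~\ref{theoremprep}; the only remaining task is the operational-identity condition~\eqref{eq_paxOp}, namely that whenever $W\coloneqq\sum_{b,y}\beta_{b,y}M_{b|y}$ satisfies $\Gamma(W)=\sum_{b,y}\beta_{b,y}\sigma_{b|y}=0$, one has $\tr[F_\lambda W]=0$ for every $\lambda$. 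Since $\sum_\lambda\tr[F_\lambda W]\sigma_\lambda=\Gamma(W)=0$, this follows as soon as the Holevo form is chosen so that the active output states $\{\sigma_\lambda\}$ are linearly independent. Granting this, the multi-source is classical, so by Proposition~\ref{prop:to_single_modelstates} the associated set of states is classical; as $\{M_{b|y}\}$ was arbitrary, $\rho_{AB}$ cannot be steered to a nonclassical set.

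The step I expect to be the main obstacle is exactly this last point: guaranteeing, for an \emph{arbitrary} two-qubit separable state, a measure-and-prepare form of $\Gamma$ whose output states are linearly independent (equivalently, a product decomposition of $\rho_{AB}$ with linearly independent marginals on one side), including rank-deficient and geometrically degenerate cases in which the preparation-assignment polytope and the operational identities of the assemblage can vary discontinuously; here one would invoke the short-decomposition structure of separable two-qubit states, and, if a fully clean statement is unavailable, fall back on the companion result of Ref.~\cite{zhang2024parellel} (with the discussion of preparation ontological models in Ref.~\cite{schmid2024PTM} and the original argument of Ref.~\cite{Plavala2024}) that every assemblage arising from a separable state is a classical multi-source. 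As a consistency check, Werner states with $\tfrac13<p\le\tfrac12$ are entangled yet admit a local hidden state model, so being steerable to a nonclassical set must be strictly weaker than ordinary steerability --- which it is, precisely because the operational identities among the conditional states make ``classical multi-source'' strictly stronger than ``admits a local hidden state model'' --- so no contradiction arises.
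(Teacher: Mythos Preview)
Your route is quite different from the paper's, and the paper's is considerably simpler because it leans entirely on two external results rather than reconstructing anything from Theorem~\ref{theoremprep}.

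For the direction \emph{separable $\Rightarrow$ every steered set is classical}, the paper does not go through a measure-and-prepare form of $\Gamma$ at all. Instead it invokes the two-qubit steering-ellipsoid result of Refs.~\cite{Jevtic2014,Nguyen2016}: for any separable two-qubit state, the full set of Bob states Alice can steer to is contained in a \emph{tetrahedron} inside the Bloch ball. Since a tetrahedron is a simplex, Corollary~1 of Ref.~\cite{zhang2024parellel} immediately gives classicality of every steered set. This completely bypasses the obstacle you correctly flagged --- the need for a Holevo form with linearly independent $\{\sigma_\lambda\}$. That obstacle is genuine: an arbitrary separable decomposition need not have linearly independent $B$-marginals, and there is no obvious rewriting that both preserves positivity of the weights and enforces linear independence. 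The tetrahedron argument is exactly the two-qubit-specific fact that resolves this, and it is the key idea you are missing.

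For the direction \emph{entangled $\Rightarrow$ some steered set is nonclassical}, the paper simply cites Theorem~1 of Ref.~\cite{Plavala2024}: if the set of \emph{all} states to which $\rho_{AB}$ can be steered is classical (i.e.\ admits a noncontextual model against all effects), then $\rho_{AB}$ is separable. Your contrapositive sketch is in the same spirit, but the step ``linearity of the ontological representation forces $p(M,\lambda)$ to be a linear functional of $\Gamma(M)$, hence $\tr[F_\lambda M]$ with $F_\lambda\ge 0$'' packs in nontrivial work (in particular, one must argue that a single $\lambda$-space and a single family $\{\sigma_\lambda\}$ serve for the entire continuum of effects, and then that positivity on all effects forces $F_\lambda\ge 0$). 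This is essentially what Ref.~\cite{Plavala2024} proves, so your fallback to that reference is the honest endpoint of your argument --- and is precisely what the paper does. Your alternative pentagon-via-filtering construction is attractive but, as you note, only covers steerable (or filterable-to-steerable) states and so does not by itself reach all entangled two-qubit states.

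In short: your plan for the ``entangled $\Rightarrow$'' direction converges on the same citation the paper uses, while for the ``separable $\Rightarrow$'' direction you are missing the decisive geometric input (tetrahedron containment of the steering ellipsoid), which is what makes the two-qubit case clean and avoids the linear-independence issue you identified.
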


\begin{proof}
It was shown in Ref.~\cite{Jevtic2014, Nguyen2016} that for any separable two-qubit state, the set of states one can steer to on either side can be embedded in a tetrahedron inside the Bloch ball; this set of states is classical, as they fit inside a simplex inside the quantum state space (see Corollary 1 in Ref.~\cite{zhang2024parellel}). So, 
separable states can only be steered to classical sets of states.

The other direction is given by Theorem 1 in Ref.~\cite{Plavala2024}, which states that a bipartite state is separable if there is a noncontextual model for the prepare-measure scenario involving the set of all states to which the other subsystem can be steered, together with all quantum effects---that is, if the set of all states to which the other subsystem can be steered is classical.
\end{proof}
This corollary should be contrasted with the well-known fact that entanglement is necessary but {\em not} sufficient for being steered to an assemblage that is nonfree in the resource theory of LOSR nonclassicality~\cite{Schmid2020typeindependent,Zjawin2023quantifyingepr,Zjawin2023resourcetheoryof}. The difference arises because
nonclassicality of an assemblage in the sense of LOSR
(namely, steerability) is not implied by nonclassicality in the sense defined here. 

It follows that even for two-qubit states whose entanglement cannot be certified by witnessing the steerability of the assemblages it can generate, one can always certify its entanglement by studying the nonclassicality (in the sense defined in this work and in Ref.~\cite{zhang2024parellel}) of the assemblages it can generate. We now give an explicit example.

\begin{example}
\label{exam: ineq}
 Consider the family of noisy isotropic states given by $\rho_{\text{Iso}}^{\eta}=\eta\op{\Psi^+}{\Psi^+}+(1-\eta)\frac{1}{4}\mbb{1}$. It has recently been proven that for $\eta\le 1/2$, the state is local and unsteerable~\cite{zhang2024}. However, if one performs a measurement with effects forming the vertices of a regular icosahedron, namely $N_{\pm|x}=\frac{1}{2}(\mbb{1}\pm \hat{n}_x\cdot\vec{\sigma})$ where $\{\pm \hat{n}_x\}_{x=1}^6$, then the corresponding normalized states in the resulting assemblage are the noisy version of that in Example~\ref{exam: icosahedron}. Hence, by Prop.~\ref{prop:to_single_modelstates}, the resulting assemblage is nonclassical for $\eta>\sqrt{\frac{5-2\sqrt{5}}{3}}\approx 0.4195$. One can also certify the nonclassicality of this bipartite state in a theory-independent manner using (for example) the inequality (and quantum violation) that we give in Appendix~\ref{examineqapp}.

\end{example}
Beyond the two-qubit case, however, one cannot in general guarantee that a state is entangled simply by verifying that it can be steered to a nonclassical set of states (as already noted in Ref.~\cite{Plavala2024} and articulated clearly in Ref.~\cite{zhang2025all}). Further investigation of these ideas is left for future work.

Theorem 2 of Ref.~\cite{Tavakoli2020} states that `an assemblage is unsteerable if and only if its statistics admits a preparation and measurement noncontextual model for all measurements'. However, our work shows that this claim is not accurate. For example, the assemblage $\{p(a|x)\rho_{a|x}:=\tr_A[(N_{a|x}\otimes \mbb{1}) \rho_{\text{Iso}}^{\eta}] \}_{a,x}$ in Example~\ref{exam: ineq} for $0.4195\approx\sqrt{\frac{5-2\sqrt{5}}{3}}<\eta<1/2$~\cite{zhang2024,Renner2024} is unsteerable but nonclassical, and so there is no preparation and measurement noncontextual model for its statistics. The mistake arises because the purported proof of this theorem does not take into account all possible operational identities in the scenario it considers, but rather only those derived from the no-signaling principle. 
But other operational identities typically arise in such scenarios; we give an exhaustive list in Eqs.~42(a)-42(e) of Ref.~\cite{zhang2024parellel}.

\section{Discussion}

We introduced two measures of nonclassicality: one that quantifies how robust a given process is to white noise, and a weight-based quantifier. A natural next step would be to see if an entire resource theory can be developed to systematically and exhaustively quantify the nonclassicality of any given process.

Robustness-based quantifiers are often linked to discrimination tasks~\cite{Designolle2019}. Consequently, another natural question is whether the measures we introduced here are closely linked to any interesting information-theoretic tasks for which nonclassicality of a given multi-measurement or multi-source offers a quantum advantage.

A final interesting question for future work is to consider `liftings' of noncontextual inequalities, analogous to the lifting of Bell inequalities~\cite{Pironio2005}.

\section*{Author contributions}
All authors contributed to the conceptual development, technical analysis, and preparation of the manuscript. Large language models were used for language editing, organizational suggestions, critical feedback on the exposition, and assistance with reference checking. All scientific claims, derivations, numerical results, and the final text were independently verified and approved by the authors.

\section*{Acknowledgements}
We acknowledge valuable discussions with Robert Spekkens. YZ thanks Elie Wolfe and Ravi Kunjwal for many useful discussions. YZ is supported by the Natural Sciences and Engineering Research Council of Canada (NSERC) and the Canada Foundation for Innovation (CFI). YZ is grateful for the hospitality of Perimeter Institute, where this work was carried out. DS and YY were supported by Perimeter Institute for Theoretical Physics. Research at Perimeter Institute is supported in part by the Government of Canada through the Department of Innovation, Science and Economic Development and by the Province of Ontario through the Ministry of Colleges and Universities. YY was also supported by the Natural Sciences and Engineering Research Council of Canada (Grant No. RGPIN-2024-04419).

\appendix
\section{Flag-convexification preserves white-noise robustness}\label{appfc1}

Proposition~\ref{prop:to_single_model} states that the qualitative nonclassicality of a set of measurements coincides with that of its flag-convexification. We now show that flag-convexification also does not change the white-noise robustness of sets of quantum measurements. (We expect that general measures of nonclassicality will not remain invariant under flag-convexification, but it remains an open question.) 
\begin{proposition}
The white-noise robustness $\eta_{\{M_{b|y}\}}$ for a set of measurements $\{\{M_{b|y}\}_{b}\}_{y}$ coincides with the white-noise robustness $\eta_{\{\tilde{M}_{b,y}\}}$ of its flag-convexification $\{\tilde{M}_{b,y}\coloneqq\frac{1}{|Y|}M_{b|y}\}_{b,y}$.
\label{prop:to_single_modelquantitative}
\end{proposition}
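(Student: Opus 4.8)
The plan is to show that the semidefinite program computing $\eta_{\{\tilde M_{b,y}\}}$ is \emph{literally} the same optimization problem as the one computing $\eta_{\{M_{b|y}\}}$, so that the two optimal values must coincide. Recall that the white-noise robustness of the multi-measurement $\msf{M}=\{\{M_{b|y}\}_b\}_y$ is the largest $\eta\le 1$ for which the noisy multi-measurement $\{\{M_{b|y}^\eta\}_b\}_y$, with $M_{b|y}^\eta\coloneqq\eta M_{b|y}+(1-\eta)\frac{\tr[M_{b|y}]}{d}\mathbb{1}$, is classical in the sense of Theorem~\ref{theoremmeas}; by the reasoning that produced Eq.~\eqref{usefulrechar} and Eq.~\eqref{pri:SDP}, this equals the optimum of $\max\,\eta$ subject to $\sum_\lambda D_{\mbb{P}_{\msf{M}}}(b|y,\lambda)G_\lambda=M_{b|y}^\eta$ for all $b,y$, together with $\eta\le1$ and $G_\lambda\ge 0$, where the $D_{\mbb{P}_{\msf{M}}}(b|y,\lambda)$ are the extreme points of the polytope $\mbb{P}_{\msf{M}}$ of Definition~\ref{def: op_polytope}. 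The flag-convexification $\tilde{\msf{M}}=\{\tilde M_{b,y}\coloneqq\frac1{|Y|}M_{b|y}\}_{b,y}$ is a single measurement with outcomes labelled by the pairs $(b,y)$, and its white-noise robustness is given by the analogous SDP built from its own polytope $\mbb{P}_{\tilde{\msf{M}}}$.

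\emph{Step 1 (matching the polytopes).} Since $\sum_{b,y}\beta_{b,y}M_{b|y}=0$ if and only if $\sum_{b,y}\beta_{b,y}\frac1{|Y|}M_{b|y}=0$, the operational identities of $\tilde{\msf{M}}$ coincide with $\mc{O}_{\msf{M}}$ of Eq.~\eqref{eq:op-meas}. In particular $\mc{O}_{\msf{M}}$ contains every coefficient vector enforcing $\sum_b M_{b|y}-\sum_b M_{b|y'}=0$ (both sides equal $\mathbb{1}$), so constraint~(iii) of $\mbb{P}_{\tilde{\msf{M}}}$ forces $\sum_b p(b,y)$ to be independent of $y$; combined with the single normalization $\sum_{b,y}p(b,y)=1$ this gives $\sum_b p(b,y)=\frac1{|Y|}$ for every $y$. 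Hence the linear map $p(b,y)\mapsto q(b|y)\coloneqq|Y|\,p(b,y)$ is a bijection of $\mbb{P}_{\tilde{\msf{M}}}$ onto $\mbb{P}_{\msf{M}}$; being affine and invertible it sends vertices to vertices, so the two vertex sets may be indexed by a common label $\lambda$ with $D_{\mbb{P}_{\tilde{\msf{M}}}}(b,y|\lambda)=\frac1{|Y|}D_{\mbb{P}_{\msf{M}}}(b|y,\lambda)$.

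\emph{Step 2 (matching the noisy targets and concluding).} A direct computation gives $\tilde M_{b,y}^\eta=\eta\frac1{|Y|}M_{b|y}+(1-\eta)\frac{\tr[\tilde M_{b,y}]}{d}\mathbb{1}=\frac1{|Y|}M_{b|y}^\eta$: the noisy flag-convexification is the flag-convexification of the noisy multi-measurement, with the \emph{same} noise parameter. Substituting this together with the vertex identity of Step~1 into the SDP for $\eta_{\{\tilde M_{b,y}\}}$, its equality constraint $\sum_\lambda D_{\mbb{P}_{\tilde{\msf{M}}}}(b,y|\lambda)G_\lambda=\tilde M_{b,y}^\eta$ becomes $\frac1{|Y|}\sum_\lambda D_{\mbb{P}_{\msf{M}}}(b|y,\lambda)G_\lambda=\frac1{|Y|}M_{b|y}^\eta$, i.e.\ exactly $\sum_\lambda D_{\mbb{P}_{\msf{M}}}(b|y,\lambda)G_\lambda=M_{b|y}^\eta$, the constraint of the SDP for $\eta_{\{M_{b|y}\}}$. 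The objective ($\eta$) and the remaining constraints ($\eta\le1$, $G_\lambda\ge0$) are unchanged, so the two programs have the same feasible set and objective and therefore the same optimum: $\eta_{\{\tilde M_{b,y}\}}=\eta_{\{M_{b|y}\}}$. The argument carries over verbatim to a non-uniform full-support weighting $\tilde M_{b,y}=q(y)M_{b|y}$, replacing $|Y|^{-1}$ by $q(y)$ throughout (the map $p(b,y)\mapsto p(b,y)/q(y)$ remains an affine bijection of the two polytopes).

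\emph{Expected obstacle.} The only point requiring care is Step~1: one must check that the \emph{single} normalization of $\mbb{P}_{\tilde{\msf{M}}}$ plus the operational identities of $\tilde{\msf{M}}$ reconstruct the \emph{per-setting} normalizations defining $\mbb{P}_{\msf{M}}$, which rests on the small but essential fact that $\sum_b M_{b|y}=\mathbb{1}$ for every $y$ supplies operational identities linking distinct settings. Everything else is bookkeeping about affine maps and the fact that they carry extreme points to extreme points.
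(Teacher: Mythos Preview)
Your proof is correct, but it takes a different route from the paper's. The paper simply observes that the flag-convexification of the $\eta$-noisy multi-measurement equals the $\eta$-noisy version of the flag-convexified measurement (your Step~2), and then invokes Proposition~\ref{prop:to_single_model} as a black box: since flag-convexification preserves the classical/nonclassical distinction for \emph{every} fixed $\eta$, the critical value must coincide. This is a two-line argument that defers all the work to the cited result of Refs.~\cite{Selby2023,selby2023accessible}.

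You instead open up the SDPs and show they are literally identical, which requires your Step~1: establishing the affine bijection $p(b,y)\mapsto |Y|\,p(b,y)$ between $\mbb{P}_{\tilde{\msf{M}}}$ and $\mbb{P}_{\msf{M}}$ and hence between their vertex sets. This is more work but is fully self-contained---you never need Proposition~\ref{prop:to_single_model}, and in effect your polytope bijection re-derives it (at least for the purposes of this SDP) from first principles. The paper's approach is shorter if one is willing to cite the qualitative result; yours is more transparent about \emph{why} the two quantifiers agree, and your explicit identification of the vertex sets would be directly reusable for other quantifiers built on the same polytopes.
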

\begin{proof}
As a consequence of Proposition~\ref{prop:to_single_model}, any $\eta$-noisy multi-measurement $\{\{
\eta M_{b|y}+(1-\eta)\frac{\tr[M_{b|y}]}{d}\mbb{1} \}_{b}\}_{y}$ is nonclassical if and only if $
\{\eta\tilde{M}_{b,y}+(1-\eta)\frac{\tr[\tilde{M}_{b,y}]}{d}\mbb{1}\}_{b,y}$ is nonclassical, which is just the flag-convexified version of the $\eta$-noisy multi-measurement. Since this equivalence holds for every $\eta$, the white-noise robustness of any set of measurements remains unchanged under flag-convexification.
\end{proof}

Similarly, by using Proposition~\ref{prop:to_single_modelstates}, we can show that flag-convexification also does not change the white-noise robustness of sets of quantum preparations.

\begin{proposition}
\label{prop:robustate}
The white-noise robustness $\eta_{\{p(a|x)\rho_{a|x}\}}$ for a multi-source $\{p(a|x)\rho_{a|x}\}$ coincides with the white-noise robustness $\eta_{\{\rho_{a|x}\}}$ of the corresponding set of normalized states $\{\rho_{a|x}\}$.
\end{proposition}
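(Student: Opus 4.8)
The plan is to prove Proposition~\ref{prop:robustate} by reducing it to Proposition~\ref{prop:to_single_modelstates}, in exactly the same way that Proposition~\ref{prop:to_single_modelquantitative} reduces to Proposition~\ref{prop:to_single_model}. The key observation is that a multi-source $\{p(a|x)\rho_{a|x}\}_{a,x}$ and its associated set of normalized states $\{\rho_{a|x}\}_{a,x}$ are related by rescaling each unnormalized state by the scalar $p(a|x)$, and this rescaling commutes with the white-noise admixture in the appropriate sense. First I would make precise what ``the white-noise robustness of a multi-source'' means: one mixes each element $p(a|x)\rho_{a|x}$ of the ensemble with its white-noise counterpart, giving $\eta\, p(a|x)\rho_{a|x} + (1-\eta) p(a|x)\tfrac{1}{d}\mbb{1} = p(a|x)\big(\eta\rho_{a|x} + (1-\eta)\tfrac{1}{d}\mbb{1}\big)$, so the $\eta$-noisy multi-source is just $\{p(a|x)\rho_{a|x}^\eta\}_{a,x}$, i.e. the multi-source whose normalized states are precisely the $\eta$-noisy states $\rho_{a|x}^\eta$.

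Second, I would invoke Proposition~\ref{prop:to_single_modelstates} applied to the $\eta$-noisy multi-source: the multi-source $\{p(a|x)\rho_{a|x}^\eta\}_{a,x}$ is nonclassical if and only if the set of normalized states $\{\rho_{a|x}^\eta\}_{a,x}$ is nonclassical. But the latter is exactly the $\eta$-noisy version of the set of states $\{\rho_{a|x}\}_{a,x}$, since white noise acts identically on a normalized state regardless of whether it sits inside an ensemble. Hence for every $\eta$, the $\eta$-noisy multi-source is nonclassical iff the $\eta$-noisy set of states is nonclassical. Third, since this equivalence holds for all $\eta\in[0,1]$, the infimum of $\eta$ for which nonclassicality persists is the same on both sides; equivalently the critical (supremal) $\eta$ values coincide, which is the claim $\eta_{\{p(a|x)\rho_{a|x}\}} = \eta_{\{\rho_{a|x}\}}$.

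One subtlety worth checking explicitly is that the operational identities $\mc{O}_{\msf{P}}$ entering Theorem~\ref{theoremprep} (or Definition~\ref{def: op_polytope_prep}) are the same for the noisy multi-source as for the noiseless one when phrased in terms of the normalized states — adding white noise $\tfrac{1}{d}\mbb{1}$ to every $\rho_{a|x}$ preserves all linear identities of the form $\sum_{a,x}\alpha_{a,x}p(a|x)\rho_{a|x}=0$ precisely when $\sum_{a,x}\alpha_{a,x}p(a|x)=0$, which is automatically among the identities (it is the statement that each $\{p(a|x)\}_a$ normalizes to one, combined over $x$), so the relevant operational-identity structure is stable under the noise map. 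This bookkeeping is the only place where anything could go wrong, but it is routine given the framework already set up; the main conceptual content is entirely carried by Proposition~\ref{prop:to_single_modelstates}, so there is no real obstacle — the proof is a two-line corollary once the definitions are unpacked.
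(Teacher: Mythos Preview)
Your proposal is correct and follows essentially the same approach as the paper: apply Proposition~\ref{prop:to_single_modelstates} to the $\eta$-noisy multi-source $\{p(a|x)\rho_{a|x}^\eta\}_{a,x}$ for each $\eta$, conclude that it is nonclassical iff the set $\{\rho_{a|x}^\eta\}_{a,x}$ is, and hence the critical $\eta$'s agree. Your additional paragraph on the stability of the operational identities under the noise map is not needed for the argument (Proposition~\ref{prop:to_single_modelstates} is a statement about classicality of a multi-source versus its associated set of states, and applies directly to the noisy multi-source without any comparison to the noiseless one), but it is harmless and correct as stated.
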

\begin{proof}
This follows directly from Proposition~\ref{prop:to_single_modelstates}: the multi-source $\{\{p(a|x)\rho^{\eta}_{a|x}\}_a\}_x$ is nonclassical if and only if the set of states $\{\rho^{\eta}_{a|x}\}_{a,x}$ is nonclassical for any $\eta$, where $\rho^{\eta}_{a|x}=\eta\rho_{a|x}+(1-\eta)\frac{\mbb{1}}{d}$. Since this equivalence holds for every $\eta$, the claim follows.
\end{proof}

Similarly, the white-noise robustness of a multi-source is equivalent to that of the source obtained from it by flag-convexification. Indeed, one can by a similar proof show that the white-noise robustness of both multi-measurements and multi-sources is unchanged under flag-convexification by an {\em arbitrary} (non-uniform) full-support flag-convexification~\cite{selby2023accessible}. 

\section{Flag-convexification preserves the nonclassical fraction} \label{appfc2}
\begin{proposition}
\label{flag-conv}
The nonclassical fraction $\omega_{\{M_{b|y}\}}$ for a set of measurements $\{\{M_{b|y}\}_{b}\}_{y}$ coincides with the nonclassical fraction $\omega_{\{\tilde{M}_{b,y}\}}$ of its flag-convexification $
\{\tilde{M}_{b,y}
\coloneqq 
\frac{1}{|Y|}M_{b|y}\}_{b,y}$.
\end{proposition}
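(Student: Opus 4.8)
The plan is to establish a weight-preserving, two-way correspondence between the decompositions witnessing the nonclassical fraction of the multi-measurement $\{\{M_{b|y}\}_{b}\}_{y}$ and those witnessing the nonclassical fraction of its flag-convexification $\{\tilde{M}_{b,y}=\tfrac{1}{|Y|}M_{b|y}\}_{b,y}$; minimizing $\omega$ on each side then yields the claimed equality. Here the nonclassical fraction of the multi-measurement is taken to be the least $\omega$ admitting a decomposition $M_{b|y}=\omega N_{b|y}+(1-\omega)K_{b|y}$ into a multi-measurement $\{N_{b|y}\}$ and a classical multi-measurement $\{K_{b|y}\}$, both satisfying the operational identities $\mc{O}_{\msf{M}}$.

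The first step is to observe that the operational identities are literally unchanged by flag-convexification: since $\tilde{M}_{b,y}=\tfrac{1}{|Y|}M_{b|y}$ and the relations $\sum_{b,y}\beta_{b,y}(\cdot)=0$ are homogeneous, $\mc{O}_{\tilde{\msf{M}}}=\mc{O}_{\msf{M}}$, and the same rescaling argument shows that any family of operators obeying $\mc{O}_{\msf{M}}$ still does so after multiplication by a nonzero scalar. The key special feature to record is that, because $\{\{M_{b|y}\}_{b}\}_{y}$ is a genuine multi-measurement (each $\{M_{b|y}\}_{b}$ summing to $\mathbb{1}$), every coefficient vector of the form $\beta_{b,y}=\delta_{y,y_1}-\delta_{y,y_2}$ (constant in $b$) lies in $\mc{O}_{\msf{M}}$; I will call these the flag-structure identities.

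For the direction $\omega_{\{\tilde{M}_{b,y}\}}\le\omega_{\{M_{b|y}\}}$, I would take a decomposition $M_{b|y}=\omega N_{b|y}+(1-\omega)K_{b|y}$ as above and rescale it: $\tilde{N}_{b,y}\coloneqq\tfrac{1}{|Y|}N_{b|y}$ and $\tilde{K}_{b,y}\coloneqq\tfrac{1}{|Y|}K_{b|y}$ are valid POVMs (their grand sums equal $\mathbb{1}$ since every per-$y$ sum does), satisfy $\mc{O}_{\tilde{\msf{M}}}=\mc{O}_{\msf{M}}$ by homogeneity, and $\{\tilde{K}_{b,y}\}$ is classical because it is exactly the flag-convexification of the classical $\{K_{b|y}\}$, so Proposition~\ref{prop:to_single_model} applies; hence $\tilde{M}_{b,y}=\omega\tilde{N}_{b,y}+(1-\omega)\tilde{K}_{b,y}$ is an admissible decomposition with the same $\omega$. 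For the converse $\omega_{\{M_{b|y}\}}\le\omega_{\{\tilde{M}_{b,y}\}}$, I would de-flag an admissible decomposition $\tilde{M}_{b,y}=\omega\tilde{N}_{b,y}+(1-\omega)\tilde{K}_{b,y}$ of the single measurement by setting $N_{b|y}\coloneqq|Y|\tilde{N}_{b,y}$ and $K_{b|y}\coloneqq|Y|\tilde{K}_{b,y}$. The step I expect to be the main obstacle is verifying that $\{N_{b|y}\}$ (and likewise $\{K_{b|y}\}$) is genuinely a multi-measurement, i.e. that $\sum_b N_{b|y}=\mathbb{1}$ holds for each $y$ separately and not merely in aggregate; this is precisely where the flag-structure identities enter, since $\{\tilde{N}_{b,y}\}$ obeying $\mc{O}_{\tilde{\msf{M}}}=\mc{O}_{\msf{M}}$ forces $\sum_b\tilde{N}_{b,y}$ to be $y$-independent, and combining this with $\sum_{b,y}\tilde{N}_{b,y}=\mathbb{1}$ pins it to $\tfrac{1}{|Y|}\mathbb{1}$. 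The remaining checks---that $\{N_{b|y}\}$ and $\{K_{b|y}\}$ inherit $\mc{O}_{\msf{M}}$ (immediate from homogeneity and $\mc{O}_{\tilde{\msf{M}}}=\mc{O}_{\msf{M}}$), and that $\{K_{b|y}\}$ is classical (Proposition~\ref{prop:to_single_model} again, since $\{\tilde{K}_{b,y}\}$ is its flag-convexification)---are routine, and the two minima then agree. An equivalent and perhaps slicker route, which I would also note, is that the affine bijection $p(b,y)\mapsto|Y|p(b,y)=p(b|y)$ identifies the polytope $\mbb{P}_{\tilde{\msf{M}}}$ with $\mbb{P}_{\msf{M}}$ and thus their vertices, $D_{\mbb{P}_{\tilde{\msf{M}}}}((b,y)|\lambda)=\tfrac{1}{|Y|}D_{\mbb{P}_{\msf{M}}}(b|y,\lambda)$, so that the SDP computing the nonclassical fraction of $\{\tilde{M}_{b,y}\}$ reduces term by term to the one computing it for $\{M_{b|y}\}$.
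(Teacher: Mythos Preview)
Your proposal is correct and follows essentially the same approach as the paper: rescale decompositions by $|Y|^{\pm 1}$ in each direction, invoke Proposition~\ref{prop:to_single_model} to transfer classicality, and use the operational identities to ensure per-$y$ normalization when de-flagging. You are in fact more explicit than the paper on the one nontrivial point---the paper simply asserts that $\sum_b \tilde{N}_{b,y}=\sum_b \tilde{K}_{b,y}=\tfrac{1}{|Y|}\mathbb{1}$ follows from the operational identities, while you spell out that this comes from the flag-structure identities $\beta_{b,y}=\delta_{y,y_1}-\delta_{y,y_2}\in\mc{O}_{\msf{M}}$ forcing $y$-independence of the partial sums; your alternative SDP/polytope-bijection route is an additional observation not in the paper.
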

\begin{proof}
This follows from the fact that if there exists a set of measurements $\{\{N_{b|y}\}_b\}_y$ and a classical set of measurements $\{\{K_{b|y}\}_b\}_y$ (both respecting the same operational identities as $\{\{M_{b|y}\}_{b}\}_{y}$) such that 
$$M_{b|y}=\omega_{\{M_{b|y}\}} N_{b|y}+(1-\omega_{\{M_{b|y}\}})K_{b|y}$$
One can directly define the corresponding flag-convexified measurements $\{\tilde{N}_{b,y}\}_{b,y}\coloneqq\{\frac{1}{|Y|}N_{b|y}\}_{b,y}$ and $\{\tilde{K}_{b,y}\}_{b,y}\coloneqq\{\frac{1}{|Y|}K_{b|y}\}_{b,y}$, which have the same classicality/nonclassicality, respectively, according to Proposition~\ref{prop:to_single_model}; these evidently satisfy the same operational identities as $\{\tilde{M}_{b,y}\}$, as well as
$$\tilde{M}_{b,y}=\omega_{\{M_{b|y}\}} \tilde{N}_{b,y}+(1-\omega_{\{M_{b|y}\}})\tilde{K}_{b,y}$$
Thus, $\omega_{\{\tilde{M}_{b,y}\}}\le \omega_{\{M_{b|y}\}}$. (This is just an upper bound rather than an equality, since we have constructed one specific decomposition of $\tilde{M}_{b,y}$ from the given decomposition of ${M}_{b|y}$, but have not necessarily found the optimal decomposition).\par 
For the other direction, since any $\{\tilde{N}_{b,y}\}$ and $\{\tilde{K}_{b,y}\}$ must necessarily satisfy $\sum_b \tilde{N}_{b,y}=\sum_b\tilde{K}_{b,y}=\frac{1}{|Y|}\mbb{1}$ as they respect the same operational identity as $\{\tilde{M}_{b,y}\}$. By similar logic, one could always construct $\{\{{N}_{b|y}=|Y|\tilde{N}_{b,y}\}_b\}_y$ and $\{\{{K}_{b|y}=|Y|\tilde{K}_{b,y}\}_b\}_y$ as decomposition of $\{\{M_{b|y}\}_b\}_y$. One can thus show that $\omega_{\{\tilde{M}_{b,y}\}}\ge \omega_{\{M_{b|y}\}}$, and therefore we have $\omega_{\{\tilde{M}_{b,y}\}}= \omega_{\{M_{b|y}\}}$.
\end{proof}

One can similarly prove the analogous result for flag-convexification of sources.

The nonclassical fraction of a source $\{\{p(a|x)\rho_{a|x}\}_a\}_x$ can be defined analogously to the definition we gave for sets of states; namely, it is the smallest number $0\le \omega\le 1$, such that one can write
\begin{equation}
p(a|x)\rho_{a|x}=\omega p'(a|x)\gamma_{a|x} + (1-\omega) p''(a|x)\kappa_{a|x}, \label{eq:frac_decomp1} 
\end{equation}
where $\{\{p'(a|x)\gamma_{a|x} \}_a\}_x$ is an arbitrary multi-source, $\{\{p''(a|x)\kappa_{a|x}\}_a\}_x$ is a classical multi-source, and both of these satisfy the same operational identities as $\{\{p(a|x)\rho_{a|x}\}_a\}_x$. 

\begin{proposition}
The nonclassical fraction $\omega_{\{p(a|x)\rho_{a|x}\}}$ for a multi-source $\{\{p(a|x)\rho_{a|x}\}_a\}_{x}$ coincides with the nonclassical fraction $\omega_{\{p(a,x)\tilde{\rho}_{a|x}\}}$ of its flag-convexification $\{p(a,x)\tilde{\rho}_{a|x}=\frac{1}{|X|}p(a|x)\rho_{a|x}\}_{a,x}$.
\end{proposition}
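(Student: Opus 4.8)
The plan is to mirror the proof of the preceding proposition (the measurement case), replacing Proposition~\ref{prop:to_single_model} by Proposition~\ref{prop:to_single_modelstates} (a multi-source is classical if and only if its associated set of normalized states is classical, and hence classicality is preserved under flag-convexification of sources). As there, I would prove the two inequalities $\omega_{\{p(a,x)\tilde\rho_{a|x}\}}\le\omega_{\{p(a|x)\rho_{a|x}\}}$ and $\omega_{\{p(a,x)\tilde\rho_{a|x}\}}\ge\omega_{\{p(a|x)\rho_{a|x}\}}$ separately. A preliminary observation that I would record is that, since an overall rescaling by $\tfrac1{|X|}$ does not alter any linear span relations among a collection of unnormalized states, the operational identities of the flag-convexified source coincide exactly with $\mc{O}_{\msf{P}}$, the operational identities of the original multi-source (cf.\ Eq.~\eqref{eq:op-prep}).

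For the first inequality, I would take an optimal decomposition of the multi-source, $p(a|x)\rho_{a|x}=\omega^\star\,\gamma_{a|x}+(1-\omega^\star)\,\kappa_{a|x}$ for all $a,x$, where $\{\{\gamma_{a|x}\}_a\}_x$ is an arbitrary multi-source, $\{\{\kappa_{a|x}\}_a\}_x$ is classical (here $\gamma_{a|x},\kappa_{a|x}$ denote unnormalized states), both respecting $\mc{O}_{\msf{P}}$, with $\omega^\star=\omega_{\{p(a|x)\rho_{a|x}\}}$. Multiplying through by $\tfrac1{|X|}$ and re-indexing by the compound label $(a,x)$ gives $p(a,x)\tilde\rho_{a|x}=\omega^\star\,\tfrac1{|X|}\gamma_{a|x}+(1-\omega^\star)\,\tfrac1{|X|}\kappa_{a|x}$, in which the first summand is (up to the overall factor) the flag-convexification of the arbitrary multi-source and hence an arbitrary source, the second is the flag-convexification of a classical multi-source and hence classical by Proposition~\ref{prop:to_single_modelstates}, and, by the preliminary observation, both respect the operational identities of the flag source. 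This is a feasible decomposition of the flag source of weight $\omega^\star$, so $\omega_{\{p(a,x)\tilde\rho_{a|x}\}}\le\omega^\star$; as in the measurement proof, this is only an inequality because we exhibited one decomposition, not necessarily the optimal one.

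For the converse, I would take an optimal decomposition $p(a,x)\tilde\rho_{a|x}=\omega'\,\tilde\gamma_{a,x}+(1-\omega')\,\tilde\kappa_{a,x}$ of the flag source into an arbitrary source $\{\tilde\gamma_{a,x}\}$ and a classical one $\{\tilde\kappa_{a,x}\}$, both respecting the flag source's operational identities, with $\omega'=\omega_{\{p(a,x)\tilde\rho_{a|x}\}}$. The crucial step — exactly analogous to the observation in the measurement proof that any flag-convexified measurement must obey $\sum_b\tilde N_{b,y}=\tfrac1{|Y|}\mbb{1}$ — is to deduce from the operational identities of the flag source that $\tr\!\left[\sum_a\tilde\gamma_{a,x}\right]=\tr\!\left[\sum_a\tilde\kappa_{a,x}\right]=\tfrac1{|X|}$ for every $x$, so that $\{\{|X|\tilde\gamma_{a,x}\}_a\}_x$ and $\{\{|X|\tilde\kappa_{a,x}\}_a\}_x$ are genuine multi-sources. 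Granting this, rescaling by $|X|$ and regrouping by $x$ yields a decomposition $p(a|x)\rho_{a|x}=\omega'\,|X|\tilde\gamma_{a,x}+(1-\omega')\,|X|\tilde\kappa_{a,x}$ of the original multi-source into an arbitrary multi-source plus a classical one (classicality again preserved by Proposition~\ref{prop:to_single_modelstates}), both respecting $\mc{O}_{\msf{P}}$, whence $\omega_{\{p(a|x)\rho_{a|x}\}}\le\omega'$; combining the two inequalities gives equality.

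The main obstacle is precisely this per-setting normalization step in the converse direction. For no-signaling multi-sources (steering assemblages) it is immediate: there $\sum_a p(a|x)\rho_{a|x}$ is independent of $x$, so $\{\delta_{x,x_0}-\delta_{x,x_1}\}_{a,x}\in\mc{O}_{\msf{P}}$, forcing $\sum_a\tilde\gamma_{a,x}$ to be $x$-independent, and together with the overall normalization $\tr\!\left[\sum_{a,x}\tilde\gamma_{a,x}\right]=1$ this pins the per-$x$ trace to $\tfrac1{|X|}$; the same holds for $\tilde\kappa$. For a fully general (possibly signaling) multi-source one must argue more carefully from the complete set of operational identities of the flag source (or, failing that, restrict the statement to the no-signaling case), and this is the one place where the source version is genuinely more delicate than the measurement version.
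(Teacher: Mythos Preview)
Your approach is exactly the one the paper intends: its entire proof is the sentence ``One can similarly prove the analogous result for flag-convexification of sources,'' pointing back to the measurement-case argument, and you have faithfully transcribed that argument to the preparation side.

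Your caution about the converse direction is well placed and in fact goes beyond what the paper provides. In the measurement proof, the per-setting constraint $\sum_b\tilde N_{b,y}=\tfrac{1}{|Y|}\mathbb{1}$ is an \emph{operator} identity inherited from $\sum_b M_{b|y}=\mathbb{1}$, which holds for every multi-measurement. The preparation analogue $\sum_a p(a|x)\rho_{a|x}=\sum_a p(a|x')\rho_{a|x'}$ is precisely the no-signaling condition, and for a generic (signaling) multi-source it is absent from $\mc{O}_{\msf{P}}$; the trace equality $\tr\bigl[\sum_a p(a|x)\rho_{a|x}\bigr]=1$ is not an operational identity in the sense of Eq.~\eqref{eq:op-prep}, so it does not propagate to $\tilde\gamma$ and $\tilde\kappa$. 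Thus your observation that the rescaling step $\{\tilde\gamma_{a,x}\}\mapsto\{|X|\tilde\gamma_{a,x}\}$ may fail to yield a bona fide multi-source in the signaling case is a genuine gap in the ``similarly'' argument --- one the paper does not close. Your proof is complete for no-signaling multi-sources (in particular, for steering assemblages, the case of primary interest in the paper), and that is as much as the paper's own sketch delivers.
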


The proof is identical to that of Prop.~\ref{flag-conv}, replacing the measurement label $y$ by the source-setting label $x$.

We note that $p(a|x)$, $p'(a|x)$, and $p''(a|x)$ need not be the same. Therefore, it is not obvious that the nonclassical fraction defined for a general multi-source coincides with the nonclassical fraction for the set of renormalized states associated with it. In other words, flag-convexification preserves the nonclassical fraction of a multi-source, but this does not imply equality with the nonclassical fraction of the associated normalized-state set, because renormalization changes operational identities.

\section{Nonclassicality of measurements in mutually unbiased bases}
\label{appendix:MUB}

In a complex Hilbert space of dimension $d$, a set of projective measurements
$\{\{M_{b|y}\}_{b=1}^d\}_y$ is called \textit{mutually unbiased} if
\begin{equation}
 \tr[M_{b|y} M_{b'|y'}] = \frac{1}{d}
\end{equation}
for all $b,b'$ whenever $y\neq y'$.
The robustness of incompatibility of mutually unbiased bases under noise has been studied extensively~\cite{Bavaresco2017,Designolle2019}.

We now show that, for measurements obtained from mutually unbiased bases by adding white noise,
nonclassicality coincides with incompatibility. More precisely, consider the noisy effects
\begin{equation}
\widetilde M_{b|y}^{(\eta)} \coloneqq \eta M_{b|y} + (1-\eta)\frac{\mbb{1}}{d},
\qquad 0\le \eta \le 1.
\label{eq:MUB-noisy}
\end{equation}
We will show that for $0<\eta\le 1$, the noncontextual measurement-assignment polytope
$\mbb{P}_{\msf{M}}$ associated with the noisy MUB measurements is simply $\mbb{P}_{\msf{M}}=\bigoplus_{y}\Delta_d$, where $\Delta_d$ is the $d$-simplex defined only by positivity and normalization.
In that case, Eq.~\eqref{eq_Edecom2} imposes no restrictions beyond stochasticity, and Eq.~\eqref{eq_Edecom1} reduces exactly to the standard condition for joint measurability. Hence, for $0<\eta\le 1$, these noisy MUB measurements are nonclassical if and only if they are incompatible.

The key point is that the only linear dependence relations among the effects of mutually unbiased bases are those implied by POVM normalization, namely
\begin{equation}
\sum_{b=1}^d M_{b|y}=\sum_{b'=1}^d M_{b'|y'}
\qquad \forall y,y'.
\label{eq:MUB-normalization}
\end{equation}
These relations imply no constraints on the response functions beyond normalization $\sum_b p(b|y,\lambda)=1$, $\forall y,\lambda$.

We first establish the corresponding statement for the sharp MUB measurements $\{M_{b|y}\}$, and then transfer it to the noisy ones $\{\widetilde M_{b|y}^{(\eta)}\}$.

\begin{lemma}
Let $\{\{M_{b|y}\}_{b=1}^d\}_{y=1}^n$ be a collection of mutually unbiased bases in dimension $d$. Then the set
\begin{equation}
\{M_{b|1}\}_{b=1}^d \cup \bigcup_{y=2}^n \{M_{b|y}\}_{b=1}^{d-1}
\label{eq:MUB-key-set}
\end{equation}
is linearly independent.
\end{lemma}

\begin{proof}
Suppose, for contradiction, that there exist coefficients $\alpha_{b,y}$, not all zero, such that
\begin{equation}
0=\sum_{b=1}^d \alpha_{b,1} M_{b|1}
+\sum_{y=2}^n \sum_{b=1}^{d-1} \alpha_{b,y} M_{b|y}.
\label{eq:MUB-key-dependence}
\end{equation}
Taking the Hilbert--Schmidt inner product of Eq.~\eqref{eq:MUB-key-dependence} with any included effect $M_{b'|y'}$ gives
\begin{align}
\alpha_{b',y'}+\sum_{y\neq y'}\frac{1}{d}\sum_b \alpha_{b,y}=0,
\label{eq:MUB-key-inner}
\end{align}
where the inner summation runs over $b=1,\dots,d$ for $y=1$ and over $b=1,\dots,d-1$ for $y\ge 2$.

Since the second term is independent of $b'$, it follows that for each fixed $y'$, the coefficient $\alpha_{b',y'}$ is independent of $b'$. Hence there exist scalars $\alpha_{y'}$ such that
\begin{equation}
\alpha_{b',y'}=\alpha_{y'}
\end{equation}
for all included effects in basis $y'$.

Equation~\eqref{eq:MUB-key-dependence} therefore reduces to a linear system
\begin{equation}
A\alpha=0,
\end{equation}
where
\begin{equation}
A=
\begin{bmatrix}
1 & \frac{d-1}{d} & \frac{d-1}{d} & \cdots & \frac{d-1}{d} \\
1 & 1 & \frac{d-1}{d} & \cdots & \frac{d-1}{d} \\
1 & \frac{d-1}{d} & 1 & \cdots & \frac{d-1}{d} \\
\vdots & \vdots & \vdots & \ddots & \vdots \\
1 & \frac{d-1}{d} & \frac{d-1}{d} & \cdots & 1
\end{bmatrix},
\end{equation}
Subtracting the first row from each of the remaining rows yields an upper-triangular matrix with diagonal entries $[1,\frac{1}{d},\dots,\frac{1}{d}]$, all of which are nonzero. Hence $A$ is invertible, so $\alpha=0$. This contradicts the assumption of a nontrivial linear dependence. Therefore the set in Eq.~\eqref{eq:MUB-key-set} is linearly independent.
\end{proof}

We can now determine all linear dependence relations among the full family $\{\{M_{b|y}\}_{b=1}^d\}_y$. If there are $n$ bases, then the full family contains $nd$ effects, while the linearly independent set in Eq.~\eqref{eq:MUB-key-set} contains
\begin{equation}
d+(n-1)(d-1)=nd-(n-1)
\end{equation}
effects. Therefore the space of linear relations among the full family has dimension at most $n-1$.

On the other hand, the $n-1$ normalization relations
\begin{equation}
\sum_{b=1}^d M_{b|y}-\sum_{b=1}^d M_{b|1}=0,
\qquad y=2,\dots,n,
\label{eq:MUB-normalization-independent}
\end{equation}
are clearly linearly independent since each relation is the unique one involving the effect $M_{d|y}$ for a given $y\ge 2$. Hence these $n-1$ relations exhaust all linear dependence relations among the effects. Therefore the only operational identities among the MUB effects are those implied by normalization, and so $\mbb{P}_{\msf{M}}=\bigoplus_y\Delta_d$. It follows from Theorem~\ref{theoremmeas} that, for sharp mutually unbiased bases, nonclassicality is equivalent to incompatibility.

Finally, consider the noisy MUB effects in Eq.~\eqref{eq:MUB-noisy} with $0<\eta\le 1$. Since
\begin{equation}
\sum_b \widetilde M_{b|y}^{(\eta)}=\mbb{1},
\end{equation}
the identity operator lies in the span of the noisy effects. Therefore, for every $b,y$,
\begin{equation}
M_{b|y}
=
\frac{1}{\eta}\widetilde M_{b|y}^{(\eta)}
-\frac{1-\eta}{\eta d}\mbb{1},
\end{equation}
so each sharp effect lies in the span of the noisy ones. The converse inclusion is immediate from Eq.~\eqref{eq:MUB-noisy}, and hence the noisy and sharp families have the same linear span. Consequently they satisfy exactly the same operational identities, and therefore the associated noncontextual measurement-assignment polytope is again
$\mbb{P}_{\widetilde{\msf{M}}}=\bigoplus_y \Delta_d$.
Thus, for noisy mutually unbiased bases with $0<\eta\le 1$, nonclassicality is equivalent to incompatibility as well.

More generally, the same conclusion holds whenever the linear spans of the effects of two measurements intersect only in $\operatorname{span}\{\mbb{1}\}$. In that case, there are no operational identities relating the two measurements beyond those implied by normalization, and the notion of nonclassicality studied here again reduces to incompatibility.

\section{Upper bounding the white-noise robustness of nonclassical measurements}
\label{appendix: upperbound}
The dual SDP for the robustness test we discussed in the main text is
\begin{align}
\eta_{\{M_b\}}
&=\min_{\{X_b\}_b}\left(1+\sum_b\tr[X_bM_b]\right)\notag\\
\text{s.t.}\quad
&1+\sum_b\tr[X_bM_b]\notag\\
&\qquad\geq\frac{1}{d}\sum_b\tr X_b\,\tr M_b,\notag\\
&\sum_bD_{\mbb P}(b|\lambda)X_b\geq0
\quad\forall\lambda.
\end{align}
By weak duality, the value of any feasible dual solution upper-bounds the primal optimum; moreover, all feasible solutions $\{X_b\}_b$ to the above dual problem provide an upper bound on the white-noise robustness $\eta_{\{M_b\}}$. Consider a family of dual variables $X_b=\alpha\mbb{1}-\beta M_b$~\cite{Designolle2019}. Such an $X_b$ is feasible if the two constraints in the dual SDP are satisfied, so
\begin{align}
&1-\beta\sum_b [\tr M_b^2- \frac{1}{d}(\tr M_b)^2]\ge 0\notag\\
&\alpha\mbb{1}-\beta \sum_b D_{\mbb{P}}(b|\lambda)M_b\ge 0, \forall \lambda.
\end{align}
Define $\Lambda\coloneqq\max_{\lambda}\norm{\sum_b D_{\mbb{P}}(b|\lambda)M_b}_{2}$, where $\norm{X}_{2}$ is the spectral norm defined as the largest absolute value of the eigenvalues of $X$. When the above two inequalities are saturated, we obtain
\begin{subequations}
\begin{align}
 &\beta=\frac{1}{\sum_b [\tr M_b^2- \frac{1}{d}(\tr M_b)^2]}\\
&\alpha=\max_{\lambda}\norm{\sum_b D_{\mbb{P}}(b|\lambda)M_b}_{2}\beta=\beta\Lambda, 
\end{align}
\end{subequations} 
so the optimal dual variable in this class is
\begin{align}
 X_b=\frac{\Lambda\mbb{1}-M_b}{\sum_b [\tr M_b^2- \frac{1}{d}(\tr M_b)^2]}.
\end{align}
Plugging this into the objective function, we can obtain a nontrivial upper bound on the robustness:
\begin{align}
 \eta\le \frac{d^2\Lambda-\sum_b (\tr{M_b})^2}{\sum_b [d{\tr M_b^2}- ({\tr M_b})^2]}
\end{align}
For a rank-1 POVM $\{M_b\}_b$ in a $d$-dimensional space with $k$ elements having equal trace $\tr [M_b]=\frac{d}{k}$, this further simplifies to
\begin{align}
\eta\le \frac{k\Lambda - 1}{d-1}.
\end{align}

\section{Explicit demonstration that nonclassical measurement witnesses are theory-dependent}\label{appendix:GPTmodel}

As noted in the main text, the nonclassicality witnesses we introduced in Section~\ref{ncfraction} are theory-dependent. That is, a violation of those inequalities {\em only} constitutes a proof of nonclassicality of one's measurement {\em assuming that the states in one's experiment are genuinely the specific quantum states $\{\rho_b^F\}_b$ assumed in the above derivation.} Moreover, it is not the case that violations of these inequalities are impossible to generate in any noncontextual ontological model. We now demonstrate this by giving an example of a nonclassical measurement, a nonclassicality witness that certifies its nonclassicality, and a noncontextual ontological model that reproduces the statistics of the prepare-measure scenario defined by the witness together with the measurement.

Take the BB84 measurement $\msf{M}_4=\{M_b\}_{b=1}^4$ as an example: 
\begin{equation}
 M_b=\frac{1}{4}[\mbb{1}+\cos\theta_b\sigma_x+\sin\theta_b\sigma_z]~~~\theta_b=\frac{\pi b}{2}.
\end{equation}
The corresponding optimal dual variables $\{F_b\}_b$, obtained by solving the dual SDP in Eq.~\eqref{eq: weight-dual}, form a set of unnormalized quantum states with Bloch vectors anti-parallel to the corresponding measurements $\{M_b\}_b$, i.e., 
\begin{align}
F_b=\frac{2+\sqrt{2}}{4}[\mbb{1}-\cos\theta_b\sigma_x-\sin\theta_b\sigma_z]~~~~\theta_b=\frac{\pi b}{2}.
\end{align}
By definition, any classical measurement $K_b$ (that has the same operational identity as $\msf{M}_4$) must obey
\begin{align*}
 \sum_b \tr[F_bK_b] \ge 1.
\end{align*}
However, for the noisy BB84 measurement $\msf{M}^{\eta}_4=\{M^{\eta}_b\}_{b=1}^4$ with $\eta>\frac{1}{\sqrt{2}}$, (which is nonclassical as mentioned in Example~\ref{bb84}), we have
\begin{equation} 
 \sum_b \tr[F_bM^{\eta}_b]<1.
\end{equation}

By defining a set of states $\{\rho_b\}_b$ with $\rho_b:=\frac{2}{2+\sqrt{2}}F_b$, we obtain a nonclassicality witness of the form discussed in Section~\ref{ncfraction}. In particular, for any classical set of measurement $\{K_b\}_{b=1}^4 $, the statistics one can generate in a prepare-and-measure experiment whose states are taken to be $\{\rho_b\}_b$ must satisfy
\begin{align}
 \sum_b p(b|\rho_b):=\sum_b \tr[\rho_bK_b]\ge 2-\sqrt{2}.
 \label{eq: linear witness}
\end{align}
However, the BB84 measurement can achieve 
\begin{equation}\label{violation1}
 \sum_b p(b|\rho_b)=0,
\end{equation} 
and so is nonclassical. The violation of this inequality does not imply the impossibility of a noncontextual model for such a prepare-measure experiment, however; indeed, we will now give a noncontextual model for the BB84 measurement together with this set of states. The quantum statistics in such a prepare-measure scenario are
\begin{align}
P(b|\rho_1)&=\{0,\frac{1}{4},\frac{1}{2},\frac{1}{4}\},~P(b|\rho_2)=\{\frac{1}{4},0,\frac{1}{4}, \frac{1}{2}\} \notag \\ 
P(b|\rho_3)&=\{\frac{1}{2}, \frac{1}{4},0,\frac{1}{4}\},~P(b|\rho_4)=\{\frac{1}{4},\frac{1}{2},\frac{1}{4}, 0\}, 
\end{align}
where the four numbers in each set correspond to the probabilities for $b=1,2,3,4$, respectively.
Plugging in the relevant probabilities, one can verify that quantum theory achieves Eq.~\eqref{violation1} and so violates the inequality in Eq.~\eqref{eq: linear witness}.

One can reproduce the quantum predictions for this scenario in the following noncontextual model (which is simply the Spekkens toy theory in Ref.~\cite{Spekken2007}).
The epistemic states are defined as
\begin{align}
&\mu(\lambda_i|\rho_1)=\{0,\frac{1}{2},\frac{1}{2},0\},~\mu(\lambda_i|\rho_2)=\{\frac{1}{2},0, \frac{1}{2},0\}\notag \\
&\mu(\lambda_i|\rho_3)=\{\frac{1}{2},0,0, \frac{1}{2}\},~\mu(\lambda_i|\rho_4)=\{0, \frac{1}{2},0, \frac{1}{2}\},
\end{align}
and the response functions for the measurement are defined as
\begin{align}
&p(b|\lambda_1)=\{\frac{1}{2},0 ,0,\frac{1}{2}\},~p(b|\lambda_2)=\{0,\frac{1}{2},\frac{1}{2},0\}\notag \\
&p(b|\lambda_3)=\{0,0,\frac{1}{2},\frac{1}{2}\},~p(b|\lambda_4)=\{\frac{1}{2},\frac{1}{2},0,0\}.
\end{align}

One can check directly that this model reproduces the quantum predictions and also the operational identities in the scenario. 

This shows explicitly that the witness in question is theory-dependent, as there exist noncontextual theories that can violate the bound in question. Only if one assumes the correctness of quantum theory (and in particular, has an exact characterization of one's states) does a violation of this bound certify the nonclassicality of the measurement in question.

\section{Noncontextuality inequality for Example~\ref{exam: ineq}}
\label{examineqapp}

Consider the prepare-and-measure scenario obtained by steering the noisy isotropic state from Example~\ref{exam: ineq}. The corresponding statistics are
\begin{align}
p(ab|xy)&=p(a|x)\tr[M_{b|y}\rho_{a|x}^{\eta}], \label{eq:appF-pmstats}\\
p(a|x)\rho_{a|x}^{\eta}&=\tr_A\!\big[(N_{a|x}\otimes\mbb{1})\rho_{\mathrm{Iso}}^{\eta}\big],
\label{eq:appF-assemblage}
\end{align}
where
\begin{align}
N_{\pm|x}=\frac{1}{2}(\mbb{1}\pm \hat{n}_x\cdot\vec{\sigma}),
~~
M_{\pm|y}=\frac{1}{2}(\mbb{1}\pm \hat{m}_y\cdot\vec{\sigma})^T.
\end{align}
Here, $\{\pm \hat{n}_x\}_{x=0}^5$ and $\{\pm \hat{m}_y\}_{y=0}^{9}$ correspond, respectively, to the vertices of a regular icosahedron and its dual dodecahedron. 
It can be verified directly, for instance using the linear program of Ref.~\cite{Selby2024}, that the resulting behavior $p(ab|xy)$ is nonclassical whenever
\begin{equation}
\eta>\sqrt{\frac{1+q^2}{3q^4}}\approx 0.4195,
\end{equation}
where $q=\frac{\sqrt{5}+1}{2}$.

Using Farkas' lemma together with the tools of Ref.~\cite{Schmid2018}, we derived the following inequality, which is satisfied by every noncontextual behavior obeying the same operational identities as those induced by $\{p(a|x)\rho_{a|x}^{\eta}\}$ and $\{M_{b|y}\}$:
\begin{align}
&p(00|02)+p(00|12)+p(00|22)-p(00|01)\notag\\
&\qquad -p(00|10)-p(00|23)\le \frac{1}{q^2}\approx 0.3820.
\label{eq:ineq-app}
\end{align}

To exhibit a quantum violation, take
\begin{align}
\hat n_0&=\frac{1}{\sqrt{1+q^2}}[-1,-q,0],\notag\\
\hat n_1&=\frac{1}{\sqrt{1+q^2}}[-q,0,1],\notag\\
\hat n_2&=\frac{1}{\sqrt{1+q^2}}[-q,0,-1],\notag\\
\hat m_0&=\frac{1}{\sqrt3}[-1,-1,-1],\notag\\
\hat m_1&=\frac{1}{\sqrt3}[-q,q^{-1},0],\notag\\
\hat m_2&=\frac{1}{\sqrt3}[-q,-q^{-1},0],\notag\\
\hat m_3&=\frac{1}{\sqrt3}[-1,-1,1].
\end{align}
For these choices, the left-hand side of Eq.~\eqref{eq:ineq-app} evaluates to $\frac{3\eta}{\sqrt{3(1+q^2)}}$. Hence Eq.~\eqref{eq:ineq-app} is violated precisely when
\begin{align}
\frac{3\eta}{\sqrt{3(1+q^2)}}&>\frac{1}{q^2}
\notag\\
&\Longleftrightarrow\quad
\eta>\sqrt{\frac{1+q^2}{3q^4}}\approx0.4195.
\end{align}
At $\eta=1$, the corresponding quantum value is
\begin{equation}
\frac{3}{\sqrt{3(1+q^2)}}\approx 0.9106.
\end{equation}

\bibliographystyle{quantum}
\bibliography{ref}

@article{wagner2024inequalities,
url={https://journals.aps.org/pra/abstract/10.1103/PhysRevA.109.032220},
  title={Inequalities witnessing coherence, nonlocality, and contextuality},
  author={Wagner, Rafael and Barbosa, Rui Soares and Galv{\~a}o, Ernesto F},
  journal={Physical Review A},
  volume={109},
  number={3},
  pages={032220},
  year={2024},
  publisher={APS},
  doi = {10.1103/PhysRevA.109.032220},
}

@article{Tavakoli2020,
  title = {Measurement incompatibility and steering are necessary and sufficient for operational contextuality},
  author = {Tavakoli, Armin and Uola, Roope},
  journal = {Phys. Rev. Res.},
  volume = {2},
  issue = {1},
  pages = {013011},
  numpages = {7},
  year = {2020},
  month = {Jan},
  publisher = {American Physical Society},
  doi = {10.1103/PhysRevResearch.2.013011},
  url = {https://link.aps.org/doi/10.1103/PhysRevResearch.2.013011}
}

@article{Selby2023,
  title = {Contextuality without Incompatibility},
  author = {Selby, John H. and Schmid, David and Wolfe, Elie and Sainz, Ana Bel\'en and Kunjwal, Ravi and Spekkens, Robert W.},
  journal = {Phys. Rev. Lett.},
  volume = {130},
  issue = {23},
  pages = {230201},
  numpages = {7},
  year = {2023},
  month = {Jun},
  publisher = {American Physical Society},
  doi = {10.1103/PhysRevLett.130.230201},
  url = {https://link.aps.org/doi/10.1103/PhysRevLett.130.230201}
}

@article{gencontext,
  title = {Contextuality for preparations, transformations, and unsharp measurements},
  author = {Spekkens, R. W.},
  journal = {Phys. Rev. A},
  volume = {71},
  issue = {5},
  pages = {052108},
  numpages = {17},
  year = {2005},
  month = {May},
  publisher = {American Physical Society},
  doi = {10.1103/PhysRevA.71.052108},
  url = {https://link.aps.org/doi/10.1103/PhysRevA.71.052108}
}

@article{Designolle2019,
  title = {Quantifying Measurement Incompatibility of Mutually Unbiased Bases},
  author = {Designolle, S\'ebastien and Skrzypczyk, Paul and Fr\"owis, Florian and Brunner, Nicolas},
  journal = {Phys. Rev. Lett.},
  volume = {122},
  issue = {5},
  pages = {050402},
  numpages = {6},
  year = {2019},
  month = {Feb},
  publisher = {American Physical Society},
  doi = {10.1103/PhysRevLett.122.050402},
  url = {https://link.aps.org/doi/10.1103/PhysRevLett.122.050402}
}

@article{Kuroiwa2024,
  title = {Every Quantum Helps: Operational Advantage of Quantum Resources beyond Convexity},
  author = {Kuroiwa, Kohdai and Takagi, Ryuji and Adesso, Gerardo and Yamasaki, Hayata},
  journal = {Phys. Rev. Lett.},
  volume = {132},
  issue = {15},
  pages = {150201},
  numpages = {8},
  year = {2024},
  month = {Apr},
  publisher = {American Physical Society},
  doi = {10.1103/PhysRevLett.132.150201},
  url = {https://link.aps.org/doi/10.1103/PhysRevLett.132.150201}
}

@article{Brunner2014,
  title = {Bell nonlocality},
  author = {Brunner, Nicolas and Cavalcanti, Daniel and Pironio, Stefano and Scarani, Valerio and Wehner, Stephanie},
  journal = {Rev. Mod. Phys.},
  volume = {86},
  issue = {2},
  pages = {419--478},
  numpages = {60},
  year = {2014},
  month = {Apr},
  publisher = {American Physical Society},
  doi = {10.1103/RevModPhys.86.419},
  url = {https://link.aps.org/doi/10.1103/RevModPhys.86.419}
}

@article{Wiseman2007,
  title = {Steering, Entanglement, Nonlocality, and the Einstein-Podolsky-Rosen Paradox},
  author = {Wiseman, H. M. and Jones, S. J. and Doherty, A. C.},
  journal = {Phys. Rev. Lett.},
  volume = {98},
  issue = {14},
  pages = {140402},
  numpages = {4},
  year = {2007},
  month = {Apr},
  publisher = {American Physical Society},
  doi = {10.1103/PhysRevLett.98.140402},
  url = {https://link.aps.org/doi/10.1103/PhysRevLett.98.140402}
}

@article{Cavalcanti2017,
doi = {10.1088/1361-6633/80/2/024001},
url = {https://dx.doi.org/10.1088/1361-6633/80/2/024001},
year = {2016},
month = {dec},
publisher = {IOP Publishing},
volume = {80},
number = {2},
pages = {024001},
author = {D Cavalcanti and P Skrzypczyk},
title = {Quantum steering: a review with focus on semidefinite programming},
journal = {Reports on Progress in Physics}
}

@misc{Christof1997,
  author = {Christof, T. and Lobel, A.},
  title = {PORTA software},
  year = {1997},
  url = {https://porta.zib.de/}
}

@article{Avis1997,
title = {How good are convex hull algorithms?},
journal = {Computational Geometry},
volume = {7},
number = {5},
pages = {265-301},
year = {1997},
note = {11th ACM Symposium on Computational Geometry},
issn = {0925-7721},
doi = {https://doi.org/10.1016/S0925-7721(96)00023-5},
url = {https://www.sciencedirect.com/science/article/pii/S0925772196000235},
author = {David Avis and David Bremner and Raimund Seidel},
keywords = {Convex hulls, Vertex enumeration, Convex polytopes, Triangulation complexity, Lattice complexity}
}

@article{Guhne2009,
title = {Entanglement detection},
journal = {Physics Reports},
volume = {474},
number = {1},
pages = {1-75},
year = {2009},
issn = {0370-1573},
doi = {https://doi.org/10.1016/j.physrep.2009.02.004},
url = {https://www.sciencedirect.com/science/article/pii/S0370157309000623},
author = {Otfried Gühne and Géza Tóth},
keywords = {Entanglement detection, Separability criteria, Genuine multipartite entanglement}
}

@article{Chitambar2019,
  title = {Quantum resource theories},
  author = {Chitambar, Eric and Gour, Gilad},
  journal = {Rev. Mod. Phys.},
  volume = {91},
  issue = {2},
  pages = {025001},
  numpages = {48},
  year = {2019},
  month = {Apr},
  publisher = {American Physical Society},
  doi = {10.1103/RevModPhys.91.025001},
  url = {https://link.aps.org/doi/10.1103/RevModPhys.91.025001}
}

@article{Lewenstein1998,
  title = {Separability and Entanglement of Composite Quantum Systems},
  author = {Lewenstein, Maciej and Sanpera, Anna},
  journal = {Phys. Rev. Lett.},
  volume = {80},
  issue = {11},
  pages = {2261--2264},
  numpages = {0},
  year = {1998},
  month = {Mar},
  publisher = {American Physical Society},
  doi = {10.1103/PhysRevLett.80.2261},
  url = {https://link.aps.org/doi/10.1103/PhysRevLett.80.2261}
}

@article{Spekken2007,
  title = {Evidence for the epistemic view of quantum states: A toy theory},
  author = {Spekkens, Robert W.},
  journal = {Phys. Rev. A},
  volume = {75},
  issue = {3},
  pages = {032110},
  numpages = {30},
  year = {2007},
  month = {Mar},
  publisher = {American Physical Society},
  doi = {10.1103/PhysRevA.75.032110},
  url = {https://link.aps.org/doi/10.1103/PhysRevA.75.032110}
}

@article{Bavaresco2017,
  title = {Most incompatible measurements for robust steering tests},
  author = {Bavaresco, Jessica and Quintino, Marco T\'ulio and Guerini, Leonardo and Maciel, Thiago O. and Cavalcanti, Daniel and Cunha, Marcelo Terra},
  journal = {Phys. Rev. A},
  volume = {96},
  issue = {2},
  pages = {022110},
  numpages = {11},
  year = {2017},
  month = {Aug},
  publisher = {American Physical Society},
  doi = {10.1103/PhysRevA.96.022110},
  url = {https://link.aps.org/doi/10.1103/PhysRevA.96.022110}
}

@article{schmid2018contextual,
  title = {Contextual Advantage for State Discrimination},
  author = {Schmid, David and Spekkens, Robert W.},
  journal = {Phys. Rev. X},
  volume = {8},
  issue = {1},
  pages = {011015},
  numpages = {20},
  year = {2018},
  month = {Feb},
  publisher = {American Physical Society},
  doi = {10.1103/PhysRevX.8.011015},
  url = {https://link.aps.org/doi/10.1103/PhysRevX.8.011015}
}

@article{Selby2024,
  title = {Linear Program for Testing Nonclassicality and an Open-Source Implementation},
  author = {Selby, John H. and Wolfe, Elie and Schmid, David and Sainz, Ana Bel\'en and Rossi, Vinicius P.},
  journal = {Phys. Rev. Lett.},
  volume = {132},
  issue = {5},
  pages = {050202},
  numpages = {7},
  year = {2024},
  month = {Jan},
  publisher = {American Physical Society},
  doi = {10.1103/PhysRevLett.132.050202},
  url = {https://link.aps.org/doi/10.1103/PhysRevLett.132.050202}
}

@article{Pironio2005,
    author = {Pironio, Stefano},
    title = "{Lifting Bell inequalities}",
    journal = {Journal of Mathematical Physics},
    volume = {46},
    number = {6},
    pages = {062112},
    year = {2005},
    month = {06},
    issn = {0022-2488},
    doi = {10.1063/1.1928727},
    url = {https://doi.org/10.1063/1.1928727},
}

@article{Busch2003,
  title = {Quantum States and Generalized Observables: A Simple Proof of Gleason's Theorem},
  author = {Busch, P.},
  journal = {Phys. Rev. Lett.},
  volume = {91},
  issue = {12},
  pages = {120403},
  numpages = {4},
  year = {2003},
  month = {Sep},
  publisher = {American Physical Society},
  doi = {10.1103/PhysRevLett.91.120403},
  url = {https://link.aps.org/doi/10.1103/PhysRevLett.91.120403}
}

@article{schmid2024addressing,
  title = {Addressing some common objections to generalized noncontextuality},
  author = {Schmid, David and Selby, John H. and Spekkens, Robert W.},
  journal = {Phys. Rev. A},
  volume = {109},
  issue = {2},
  pages = {022228},
  numpages = {18},
  year = {2024},
  month = {Feb},
  publisher = {American Physical Society},
  doi = {10.1103/PhysRevA.109.022228},
  url = {https://link.aps.org/doi/10.1103/PhysRevA.109.022228}
}

@article{negativity,
  title = {Negativity and Contextuality are Equivalent Notions of Nonclassicality},
  author = {Spekkens, Robert W.},
  journal = {Phys. Rev. Lett.},
  volume = {101},
  issue = {2},
  pages = {020401},
  numpages = {4},
  year = {2008},
  month = {Jul},
  publisher = {American Physical Society},
  doi = {10.1103/PhysRevLett.101.020401},
  url = {https://link.aps.org/doi/10.1103/PhysRevLett.101.020401}
}

@article{giordani2023experimental,
url={https://www.science.org/doi/10.1126/sciadv.adj4249},
  title={Experimental certification of contextuality, coherence, and dimension in a programmable universal photonic processor},
  author={Giordani, Taira and Wagner, Rafael and Esposito, Chiara and Camillini, Anita and Hoch, Francesco and Carvacho, Gonzalo and Pentangelo, Ciro and Ceccarelli, Francesco and Piacentini, Simone and Crespi, Andrea and others},
  journal={Science Advances},
  volume={9},
  number={44},
  pages={eadj4249},
  year={2023},
  publisher={American Association for the Advancement of Science},
  doi = {10.1126/sciadv.adj4249},
}

@article{Heinosaari2015,
  title = {Noise robustness of the incompatibility of quantum measurements},
  author = {Heinosaari, Teiko and Kiukas, Jukka and Reitzner, Daniel},
  journal = {Phys. Rev. A},
  volume = {92},
  issue = {2},
  pages = {022115},
  numpages = {12},
  year = {2015},
  month = {Aug},
  publisher = {American Physical Society},
  doi = {10.1103/PhysRevA.92.022115},
  url = {https://link.aps.org/doi/10.1103/PhysRevA.92.022115}
}

@article{Plavala2024,
  title = {Contextuality as a Precondition for Quantum Entanglement},
  author = {Pl\'avala, Martin and G\"uhne, Otfried},
  journal = {Phys. Rev. Lett.},
  volume = {132},
  issue = {10},
  pages = {100201},
  numpages = {7},
  year = {2024},
  month = {Mar},
  publisher = {American Physical Society},
  doi = {10.1103/PhysRevLett.132.100201},
  url = {https://link.aps.org/doi/10.1103/PhysRevLett.132.100201}
}

@article{Jevtic2014,
  title = {Quantum Steering Ellipsoids},
  author = {Jevtic, Sania and Pusey, Matthew and Jennings, David and Rudolph, Terry},
  journal = {Phys. Rev. Lett.},
  volume = {113},
  issue = {2},
  pages = {020402},
  numpages = {5},
  year = {2014},
  month = {Jul},
  publisher = {American Physical Society},
  doi = {10.1103/PhysRevLett.113.020402},
  url = {https://link.aps.org/doi/10.1103/PhysRevLett.113.020402}
}

@article{Nguyen2016,
  title = {Nonseparability and steerability of two-qubit states from the geometry of steering outcomes},
  author = {Nguyen, H. Chau and Vu, Thanh},
  journal = {Phys. Rev. A},
  volume = {94},
  issue = {1},
  pages = {012114},
  numpages = {6},
  year = {2016},
  month = {Jul},
  publisher = {American Physical Society},
  doi = {10.1103/PhysRevA.94.012114},
  url = {https://link.aps.org/doi/10.1103/PhysRevA.94.012114}
}

@article{zhang2024,
  title = {Exact Steering Bound for Two-Qubit Werner States},
  author = {Zhang, Yujie and Chitambar, Eric},
  journal = {Phys. Rev. Lett.},
  volume = {132},
  issue = {25},
  pages = {250201},
  numpages = {7},
  year = {2024},
  month = {Jun},
  publisher = {American Physical Society},
  doi = {10.1103/PhysRevLett.132.250201},
  url = {https://link.aps.org/doi/10.1103/PhysRevLett.132.250201}
}

@repository{zhang2024gita,
  author = {Yujie Zhang},
  title  = {Numerical appendix: contextual measurement},
  year   = {2024},
  code   = {https://github.com/yujie4phy/Contextualmeasurement}
}

@article{Zjawin2023quantifyingepr,
  doi = {10.22331/q-2023-02-16-926},
  url = {https://doi.org/10.22331/q-2023-02-16-926},
  title = {Quantifying {EPR}: the resource theory of nonclassicality of common-cause assemblages},
  author = {Zjawin, Beata and Schmid, David and Hoban, Matty J. and Sainz, Ana Bel{\'{e}}n},
  journal = {{Quantum}},
  issn = {2521-327X},
  publisher = {{Verein zur F{\"{o}}rderung des Open Access Publizierens in den Quantenwissenschaften}},
  volume = {7},
  pages = {926},
  month = feb,
  year = {2023}
}

@article{zhang2025all,
  title = {Entanglement certification using noncontextuality inequalities},
  author = {Zhang, Yujie and Spodek, Jonah and Schmid, David and Reid, Carter and Morrison, Liam J. and Jennewein, Thomas and Resch, Kevin J. and Spekkens, Robert W.},
  journal = {Phys. Rev. X},
  pages = {},
  year = {2026},
  month = {Jul},
  publisher = {American Physical Society},
  doi = {10.1103/dxpr-wp6l},
  url = {https://link.aps.org/doi/10.1103/dxpr-wp6l},
   note    = {Accepted for publication}
}

@article{Schmid2024shadows,
  doi = {10.22331/q-2025-10-13-1880},
  url = {https://doi.org/10.22331/q-2025-10-13-1880},
  title = {Shadows and subsystems of generalized probabilistic theories: when tomographic incompleteness is not a loophole for contextuality proofs},
  author = {Schmid, David and Selby, John H. and Rossi, Vinicius P. and Baldij{\~{a}}o, Roberto D. and Sainz, Ana Bel{\'{e}}n},
  journal = {{Quantum}},
  issn = {2521-327X},
  publisher = {{Verein zur F{\"{o}}rderung des Open Access Publizierens in den Quantenwissenschaften}},
  volume = {9},
  pages = {1880},
  month = oct,
  year = {2025}
}

@article{zhang2024parellel,
  title = {Reassessing the Boundary between Classical and Nonclassical for Individual Quantum Processes},
  author = {Zhang, Yujie and Schmid, David and Y\ifmmode \bar{\imath}\else \={\i}\fi{}ng, Y\`{\i}l\`e and Spekkens, Robert W.},
  journal = {Phys. Rev. X},
  volume = {16},
  issue = {2},
  pages = {021050},
  numpages = {54},
  year = {2026},
  month = {Jun},
  publisher = {American Physical Society},
  doi = {10.1103/vqfz-wzjg},
  url = {https://link.aps.org/doi/10.1103/vqfz-wzjg}
}

@article{grabowecky2021experimentally,
  title = {Experimentally bounding deviations from quantum theory for a photonic three-level system using theory-agnostic tomography},
  author = {Grabowecky, Michael J. and Pollack, Christopher A. J. and Cameron, Andrew R. and Spekkens, Robert W. and Resch, Kevin J.},
  journal = {Phys. Rev. A},
  volume = {105},
  issue = {3},
  pages = {032204},
  numpages = {16},
  year = {2022},
  month = {Mar},
  publisher = {American Physical Society},
  doi = {10.1103/PhysRevA.105.032204},
  url = {https://link.aps.org/doi/10.1103/PhysRevA.105.032204}
}

@article{Wright2023invertible,
  title = {Invertible Map between Bell Nonlocal and Contextuality Scenarios},
  author = {Wright, Victoria J. and Farkas, M\'at\'e},
  journal = {Phys. Rev. Lett.},
  volume = {131},
  issue = {22},
  pages = {220202},
  numpages = {6},
  year = {2023},
  month = {Nov},
  publisher = {American Physical Society},
  doi = {10.1103/PhysRevLett.131.220202},
  url = {https://link.aps.org/doi/10.1103/PhysRevLett.131.220202}
}

@article{selby2023accessible,
  title = {Accessible fragments of generalized probabilistic theories, cone equivalence, and applications to witnessing nonclassicality},
  author = {Selby, John H. and Schmid, David and Wolfe, Elie and Sainz, Ana Bel\'en and Kunjwal, Ravi and Spekkens, Robert W.},
  journal = {Phys. Rev. A},
  volume = {107},
  issue = {6},
  pages = {062203},
  numpages = {21},
  year = {2023},
  month = {Jun},
  publisher = {American Physical Society},
  doi = {10.1103/PhysRevA.107.062203},
  url = {https://link.aps.org/doi/10.1103/PhysRevA.107.062203}
}

@article{contextmetrology,
  title = {{Certifying Quantum Signatures in Thermodynamics and Metrology via Contextuality of Quantum Linear Response}},
  author = {Lostaglio, Matteo},
  journal = {Phys. Rev. Lett.},
  volume = {125},
  issue = {23},
  pages = {230603},
  numpages = {6},
  year = {2020},
  month = {Dec},
  publisher = {American Physical Society},
  doi = {10.1103/PhysRevLett.125.230603},
  url = {https://link.aps.org/doi/10.1103/PhysRevLett.125.230603}
}

@article{PuseydelRio,
	title        = {{Contextuality without access to a tomographically complete set}},
	author       = {{Pusey}, Matthew F. and {del Rio}, L{\'\i}dia and {Meyer}, Bettina},
	year         = 2019,
	month        = apr,
	journal      = {arXiv:1904.08699},
	url          = {https://arxiv.org/abs/1904.08699},
	keywords     = {Quantum Physics},
	eid.hide     = {arXiv:1904.08699},
	pages.hide   = {arXiv:1904.08699},
	archiveprefix = {arXiv},
	eprint.hide  = {1904.08699},
	primaryclass = {quant-ph},
	adsurl       = {https://ui.adsabs.harvard.edu/abs/2019arXiv190408699P},
	adsnote      = {Provided by the SAO/NASA Astrophysics Data System}
}

@article{POM,
	title        = {{Preparation Contextuality Powers Parity-Oblivious Multiplexing}},
	author       = {Spekkens, Robert W. and Buzacott, D. H. and Keehn, A. J. and Toner, Ben and Pryde, G. J.},
	year         = 2009,
	month        = jan,
	journal      = {Phys. Rev. Lett.},
	publisher    = {American Physical Society},
	volume       = 102,
	pages        = {010401},
	doi          = {10.1103/PhysRevLett.102.010401},
	url          = {http://link.aps.org/doi/10.1103/PhysRevLett.102.010401},
	issue        = 1,
	numpages     = 4
}

@article{BB84,
title = {Quantum cryptography: Public key distribution and coin tossing},
journal = {Theoretical Computer Science},
volume = {560},
pages = {7-11},
year = {2014},
note = {Theoretical Aspects of Quantum Cryptography – celebrating 30 years of BB84},
issn = {0304-3975},
doi = {https://doi.org/10.1016/j.tcs.2014.05.025},
url = {https://www.sciencedirect.com/science/article/pii/S0304397514004241},
author = {Charles H. Bennett and Gilles Brassard}
}

@article{Leifer,
	title        = {{Maximally Epistemic Interpretations of the Quantum State and Contextuality}},
	author       = {Leifer, M. S. and Maroney, O. J. E.},
	year         = 2013,
	month        = mar,
	journal      = {Phys. Rev. Lett.},
	publisher    = {American Physical Society},
	volume       = 110,
	pages        = 120401,
	url          = {http://link.aps.org/doi/10.1103/PhysRevLett.110.120401},
	issue        = 12,
	numpages     = 5,
	doi     = {10.1103/PhysRevLett.110.120401}
}

@misc{Leibniz,
	title        = {{The ontological identity of empirical indiscernibles: Leibniz's methodological principle and its significance in the work of Einstein}},
	author       = {{Spekkens}, Robert W.},
      year={2019},
      eprint={1909.04628},
      archivePrefix={arXiv},
      primaryClass={physics.hist-ph}
}

@article{AWV,
	title        = {{Anomalous Weak Values Are Proofs of Contextuality}},
	author       = {Pusey, Matthew F.},
	year         = 2014,
	month        = nov,
	journal      = {Phys. Rev. Lett.},
	publisher    = {American Physical Society},
	volume       = 113,
	pages        = 200401,
	url          = {http://link.aps.org/doi/10.1103/PhysRevLett.113.200401},
	issue        = 20,
	numpages     = 5,
	doi     = {10.1103/PhysRevLett.113.200401}
}

@article{RAC,
	title        = {Optimal bounds for parity-oblivious random access codes},
	author       = {Chailloux, Andre and Kerenidis, Iordanis  and Kundu, Srijita  and Sikora, Jamie},
	year         = {2016},
	journal      = {New J. Phys.},
	volume       = {18},
	number       = {4},
	pages        = {045003},
	url          = {http://stacks.iop.org/1367-2630/18/i=4/a=045003},
    doi = {10.1088/1367-2630/18/4/045003},
}

@article{Saha_2019,
	title        = {State independent contextuality advances one-way communication},
	author       = {Debashis Saha and Pawe{\l} Horodecki and Marcin Paw{\l}owski},
	year         = 2019,
	month        = sep,
	journal      = {New J. Phys.},
	publisher    = {{IOP} Publishing},
	volume       = 21,
	number       = 9,
	pages        = {093057},
	doi          = {10.1088/1367-2630/ab4149},
	url          = {https://doi.org/10.1088/2F1367-2630/2Fab4149}
}

@article{GPT_Barrett,
  title = {Information processing in generalized probabilistic theories},
  author = {Barrett, Jonathan},
  journal = {Phys. Rev. A},
  volume = {75},
  issue = {3},
  pages = {032304},
  numpages = {21},
  year = {2007},
  month = {Mar},
  publisher = {American Physical Society},
  doi = {10.1103/PhysRevA.75.032304},
  url = {https://link.aps.org/doi/10.1103/PhysRevA.75.032304}
}

@article{cloningcontext,
  doi = {10.22331/q-2020-04-27-258},
  url = {https://doi.org/10.22331/q-2020-04-27-258},
  year = {2020},
  month = apr,
  publisher = {Verein zur Forderung des Open Access Publizierens in den Quantenwissenschaften},
  volume = {4},
  pages = {258},
  author = {Matteo Lostaglio and Gabriel Senno},
  title = {Contextual advantage for state-dependent cloning},
  journal = {Quantum}
}

@article{Chaturvedi2021characterising,
  doi = {10.22331/q-2021-06-29-484},
  url = {https://doi.org/10.22331/q-2021-06-29-484},
  title = {Characterising and bounding the set of quantum behaviours in contextuality scenarios},
  author = {Chaturvedi, Anubhav and Farkas, M{\'{a}}t{\'{e}} and Wright, Victoria J},
  journal = {{Quantum}},
  issn = {2521-327X},
  publisher = {{Verein zur F{\"{o}}rderung des Open Access Publizierens in den Quantenwissenschaften}},
  volume = {5},
  pages = {484},
  month = jun,
  year = {2021}
}

@article{Schmid2018,
	title        = {All the noncontextuality inequalities for arbitrary prepare-and-measure experiments with respect to any fixed set of operational equivalences},
	author       = {Schmid, David and Spekkens, Robert W. and Wolfe, Elie},
	year         = 2018,
	month        = jun,
	journal      = {Phys. Rev. A},
	publisher    = {American Physical Society},
	volume       = 97,
	pages        = {062103},
	doi          = {10.1103/PhysRevA.97.062103},
	url          = {https://link.aps.org/doi/10.1103/PhysRevA.97.062103},
	issue        = 6,
	numpages     = 12
}

@article{PP1,
	title        = {{Pre- and Post-Selection Paradoxes and Contextuality in Quantum Mechanics}},
	author       = {Leifer, M. S. and Spekkens, Robert W.},
	year         = 2005,
	month        = nov,
	journal      = {Phys. Rev. Lett.},
	publisher    = {American Physical Society},
	volume       = 95,
	pages        = 200405,
	url          = {http://link.aps.org/doi/10.1103/PhysRevLett.95.200405},
	issue        = 20,
	numpages     = 4,
	doi     = {10.1103/PhysRevLett.95.200405}
}

@article{NCFraction,
	title        = {{Contextual Fraction as a Measure of Contextuality}},
	author       = {Abramsky, Samson and Barbosa, Rui Soares and Mansfield, Shane},
	year         = 2017,
	month        = aug,
	journal      = {Phys. Rev. Lett.},
	publisher    = {American Physical Society},
	volume       = 119,
	pages        = {050504},
	url          = {https://link.aps.org/doi/10.1103/PhysRevLett.119.050504},
	issue        = 5,
	numpages     = 6,
	doi     = {10.1103/PhysRevLett.119.050504}
}

@article{KLP19,
	title        = {{Anomalous weak values and contextuality: Robustness, tightness, and imaginary parts}},
	author       = {Kunjwal, Ravi and Lostaglio, Matteo and Pusey, Matthew F.},
	year         = 2019,
	month        = oct,
	journal      = {Phys. Rev. A},
	publisher    = {American Physical Society},
	volume       = 100,
	pages        = {042116},
	doi          = {10.1103/PhysRevA.100.042116},
	url          = {https://doi.org/10.1103/PhysRevA.100.042116},
	issue        = 4,
	numpages     = 19
}

@article{Kunjwal19,
	title        = {{Beyond the {C}abello-{S}everini-{W}inter framework: {M}aking sense of contextuality without sharpness of measurements}},
	author       = {Kunjwal, Ravi},
	year         = 2019,
	month        = sep,
	journal      = {{Quantum}},
	publisher    = {{Verein zur F{\"{o}}rderung des Open Access Publizierens in den Quantenwissenschaften}},
	volume       = 3,
	pages        = 184,
	doi          = {10.22331/q-2019-09-09-184},
	issn         = {2521-327X},
	url          = {https://doi.org/10.22331/q-2019-09-09-184}
}

@article{Kunjwal20,
	title        = {Hypergraph framework for irreducible noncontextuality inequalities from logical proofs of the {K}ochen-{S}pecker theorem},
	author       = {Kunjwal, Ravi},
	year         = 2020,
	month        = jan,
	journal      = {{Quantum}},
	publisher    = {{Verein zur F{\"{o}}rderung des Open Access Publizierens in den Quantenwissenschaften}},
	volume       = 4,
	pages        = 219,
	doi          = {10.22331/q-2020-01-10-219},
	issn         = {2521-327X},
	url          = {https://doi.org/10.22331/q-2020-01-10-219}
}

@article{Kunjwal16,
	title        = {{Contextuality beyond the Kochen-Specker theorem}},
	author       = {Kunjwal, Ravi},
	year         = 2016,
	journal      = {arXiv:1612.07250},
	url          = {https://arxiv.org/abs/1612.07250}
}

@article{Roch2021,
  title = {Quantum vs Noncontextual Semi-Device-Independent Randomness Certification},
  author = {Roch i Carceller, Carles and Flatt, Kieran and Lee, Hanwool and Bae, Joonwoo and Brask, Jonatan Bohr},
  journal = {Phys. Rev. Lett.},
  volume = {129},
  issue = {5},
  pages = {050501},
  numpages = {6},
  year = {2022},
  month = {Jul},
  publisher = {American Physical Society},
  doi = {10.1103/PhysRevLett.129.050501},
  url = {https://link.aps.org/doi/10.1103/PhysRevLett.129.050501}
}

@article{Yadavalli2020,
  doi = {10.22331/q-2022-10-13-839},
  url = {https://doi.org/10.22331/q-2022-10-13-839},
  title = {Contextuality in entanglement-assisted one-shot classical communication},
  author = {Yadavalli, Shiv Akshar and Kunjwal, Ravi},
  journal = {{Quantum}},
  issn = {2521-327X},
  publisher = {{Verein zur F{\"{o}}rderung des Open Access Publizierens in den Quantenwissenschaften}},
  volume = {6},
  pages = {839},
  month = oct,
  year = {2022}
}

@article{fonseca2024robustness,
  title = {Robustness of contextuality under different types of noise as quantifiers for parity-oblivious multiplexing tasks},
  author = {Fonseca, Amanda M. and Rossi, Vinicius P. and Baldij\~ao, Roberto D. and Selby, John H. and Sainz, Ana Bel\'en},
  journal = {Phys. Rev. A},
  volume = {111},
  issue = {2},
  pages = {022217},
  numpages = {12},
  year = {2025},
  month = {Feb},
  publisher = {American Physical Society},
  doi = {10.1103/PhysRevA.111.022217},
  url = {https://link.aps.org/doi/10.1103/PhysRevA.111.022217}
}

@article{Hardy,
	title        = {{Quantum Theory From Five Reasonable Axioms}},
	author       = {{Hardy}, L.},
	year         = 2001,
	month        = jan,
	journal      = {arXiv:quant-ph/0101012},
	eprint.hide       = {quant-ph/0101012},
	url = {https://arxiv.org/abs/quant-ph/0101012},
	keywords     = {Quantum Physics, High Energy Physics - Theory},
	adsurl       = {http://adsabs.harvard.edu/abs/2001quant.ph..1012H},
	adsnote      = {Provided by the SAO/NASA Astrophysics Data System}
}

@article{specker,
	title        = {Fine's theorem, noncontextuality, and correlations in Specker's scenario},
	author       = {Kunjwal, Ravi},
	year         = 2015,
	month        = feb,
	journal      = {Phys. Rev. A},
	publisher    = {American Physical Society},
	volume       = 91,
	pages        = {022108},
	doi          = {10.1103/PhysRevA.91.022108},
	url          = {https://link.aps.org/doi/10.1103/PhysRevA.91.022108},
	issue        = 2,
	numpages     = 11
}

@article{Gonda2018almostquantum,
  doi = {10.22331/q-2018-08-27-87},
  url = {https://doi.org/10.22331/q-2018-08-27-87},
  title = {Almost {Q}uantum {C}orrelations are {I}nconsistent with {S}pecker's {P}rinciple},
  author = {Gonda, Tom{\'{a}}{\v{s}} and Kunjwal, Ravi and Schmid, David and Wolfe, Elie and Sainz, Ana Bel{\'{e}}n},
  journal = {{Quantum}},
  issn = {2521-327X},
  publisher = {{Verein zur F{\"{o}}rderung des Open Access Publizierens in den Quantenwissenschaften}},
  volume = {2},
  pages = {87},
  month = aug,
  year = {2018}
}

@article{rossi2023contextuality,
  title = {Contextuality with vanishing coherence and maximal robustness to dephasing},
  author = {Rossi, Vinicius P. and Schmid, David and Selby, John H. and Sainz, Ana Bel\'en},
  journal = {Phys. Rev. A},
  volume = {108},
  issue = {3},
  pages = {032213},
  numpages = {10},
  year = {2023},
  month = {Sep},
  publisher = {American Physical Society},
  doi = {10.1103/PhysRevA.108.032213},
  url = {https://link.aps.org/doi/10.1103/PhysRevA.108.032213}
}

@Article{Shin2021,
AUTHOR = {Shin, Jaehee and Ha, Donghoon and Kwon, Younghun},
TITLE = {Quantum Contextual Advantage Depending on Nonzero Prior Probabilities in State Discrimination of Mixed Qubit States},
JOURNAL = {Entropy},
VOLUME = {23},
YEAR = {2021},
NUMBER = {12},
ARTICLE-NUMBER = {1583},
URL = {https://www.mdpi.com/1099-4300/23/12/1583},
ISSN = {1099-4300},
DOI = {10.3390/e23121583}
}

@article{mukherjee2021discriminating,
  title = {Discriminating three mirror-symmetric states with a restricted contextual advantage},
  author = {Mukherjee, Sumit and Naonit, Shivam and Pan, A. K.},
  journal = {Phys. Rev. A},
  volume = {106},
  issue = {1},
  pages = {012216},
  numpages = {9},
  year = {2022},
  month = {Jul},
  publisher = {American Physical Society},
  doi = {10.1103/PhysRevA.106.012216},
  url = {https://link.aps.org/doi/10.1103/PhysRevA.106.012216}
}

@article{flatt2021contextual,
  title = {Contextual Advantages and Certification for Maximum-Confidence Discrimination},
  author = {Flatt, Kieran and Lee, Hanwool and Carceller, Carles Roch I and Brask, Jonatan Bohr and Bae, Joonwoo},
  journal = {PRX Quantum},
  volume = {3},
  issue = {3},
  pages = {030337},
  numpages = {19},
  year = {2022},
  month = {Sep},
  publisher = {American Physical Society},
  doi = {10.1103/PRXQuantum.3.030337},
  url = {https://link.aps.org/doi/10.1103/PRXQuantum.3.030337}
}

@article{lostaglio2018,
  title = {Quantum Fluctuation Theorems, Contextuality, and Work Quasiprobabilities},
  author = {Lostaglio, Matteo},
  journal = {Phys. Rev. Lett.},
  volume = {120},
  issue = {4},
  pages = {040602},
  numpages = {6},
  year = {2018},
  month = {Jan},
  publisher = {American Physical Society},
  doi = {10.1103/PhysRevLett.120.040602},
  url = {https://link.aps.org/doi/10.1103/PhysRevLett.120.040602}
}

@article{shahandeh2021quantum,
  title={Quantum computational advantage implies contextuality},
  author={Shahandeh, Farid},
  journal={arXiv:2112.00024},
  year={2021},
url = {https://arxiv.org/abs/2112.00024}
}

@article{baldijao2021noncontextuality,
	title        = {{Emergence of Noncontextuality under Quantum Darwinism}},
	author       = {Baldij\~ao, Roberto D. and Wagner, Rafael and Duarte, Cristhiano and Amaral, B\'arbara and Cunha, Marcelo Terra},
	year         = 2021,
	month        = {Sep},
	journal      = {PRX Quantum},
	publisher    = {American Physical Society},
	volume       = 2,
	pages        = {030351},
	doi          = {10.1103/PRXQuantum.2.030351},
	url          = {https://link.aps.org/doi/10.1103/PRXQuantum.2.030351},
	issue        = 3,
	numpages     = 17,
}

@article{SchmidGPT,
	title        = {{Characterization of Noncontextuality in the Framework of Generalized Probabilistic Theories}},
	author       = {Schmid, David and Selby, John H. and Wolfe, Elie and Kunjwal, Ravi and Spekkens, Robert W.},
	year         = 2021,
	month        = feb,
	journal      = {PRX Quantum},
	publisher    = {American Physical Society},
	volume       = 2,
	pages        = {010331},
	doi          = {10.1103/PRXQuantum.2.010331},
	url          = {https://link.aps.org/doi/10.1103/PRXQuantum.2.010331},
	issue        = 1,
	numpages     = 12
}

@article{catani2022nonclassical,
  title = {What is Nonclassical about Uncertainty Relations?},
  author = {Catani, Lorenzo and Leifer, Matthew and Scala, Giovanni and Schmid, David and Spekkens, Robert W.},
  journal = {Phys. Rev. Lett.},
  volume = {129},
  issue = {24},
  pages = {240401},
  numpages = {6},
  year = {2022},
  month = {Dec},
  publisher = {American Physical Society},
  doi = {10.1103/PhysRevLett.129.240401},
  url = {https://link.aps.org/doi/10.1103/PhysRevLett.129.240401}
}

@article{RAC2,
	title        = {Parity oblivious d-level random access codes and class of noncontextuality inequalities},
	author       = {Ambainis, Andris and Banik, Manik and Chaturvedi, Anubhav and Kravchenko, Dmitry and Rai, Ashutosh},
	year         = 2019,
	journal      = {Quantum Inf. Process.},
	publisher    = {Springer},
	volume       = 18,
	number       = 4,
	pages        = 111,
	url          = {https://link.springer.com/article/10.1007/s11128-019-2228-3},
    doi = {10.1007/s11128-019-2228-3},
}

@article{catani2022reply,
  title={Reply to ``Comment on `Why interference phenomena do not capture the essence of quantum theory'''},
  author={Catani, Lorenzo and Leifer, Matthew and Schmid, David and Spekkens, Robert W},
  journal={arXiv:2207.11791},
  year={2022},
  url={https://arxiv.org/abs/2207.11791}
}

@article{Schmid2023understanding,
  doi = {10.22331/q-2023-12-04-1194},
  url = {https://doi.org/10.22331/q-2023-12-04-1194},
  title = {Understanding the interplay of entanglement and nonlocality: motivating and developing a new branch of entanglement theory},
  author = {Schmid, David and Fraser, Thomas C. and Kunjwal, Ravi and Sainz, Ana Belen and Wolfe, Elie and Spekkens, Robert W.},
  journal = {{Quantum}},
  issn = {2521-327X},
  publisher = {{Verein zur F{\"{o}}rderung des Open Access Publizierens in den Quantenwissenschaften}},
  volume = {7},
  pages = {1194},
  month = dec,
  year = {2023}
}

@article{zhang2025cost,
  doi = {10.22331/q-2025-10-31-1902},
  url = {https://doi.org/10.22331/q-2025-10-31-1902},
  title = {Cost of {S}imulating {E}ntanglement in {S}teering {S}cenarios},
  author = {Zhang, Yujie and Zhang, Jiaxuan and Chitambar, Eric},
  journal = {{Quantum}},
  issn = {2521-327X},
  publisher = {{Verein zur F{\"{o}}rderung des Open Access Publizierens in den Quantenwissenschaften}},
  volume = {9},
  pages = {1902},
  month = oct,
  year = {2025}
}

@article{Schmid2020typeindependent,
	title        = {The type-independent resource theory of local operations and shared randomness},
	author       = {Schmid, David and Rosset, Denis and Buscemi, Francesco},
	year         = 2020,
	month        = apr,
	journal      = {{Quantum}},
	publisher    = {{Verein zur F{\"{o}}rderung des Open Access Publizierens in den Quantenwissenschaften}},
	volume       = 4,
	pages        = 262,
	doi          = {10.22331/q-2020-04-30-262},
	issn         = {2521-327X},
	url          = {https://doi.org/10.22331/q-2020-04-30-262}
}

@article{kunjwal2018from,
	title        = {From statistical proofs of the Kochen-Specker theorem to noise-robust noncontextuality inequalities},
	author       = {Kunjwal, Ravi and Spekkens, Robert W.},
	year         = 2018,
	month        = may,
	journal      = {Phys. Rev. A},
	publisher    = {American Physical Society},
	volume       = 97,
	pages        = {052110},
	doi          = {10.1103/PhysRevA.97.052110},
	url          = {https://link.aps.org/doi/10.1103/PhysRevA.97.052110},
	issue        = 5,
	numpages     = 13,
}

@article{Schmid2024structuretheorem,
  doi = {10.22331/q-2024-03-14-1283},
  url = {https://doi.org/10.22331/q-2024-03-14-1283},
  title = {A structure theorem for generalized-noncontextual ontological models},
  author = {Schmid, David and Selby, John H. and Pusey, Matthew F. and Spekkens, Robert W.},
  journal = {{Quantum}},
  issn = {2521-327X},
  publisher = {{Verein zur F{\"{o}}rderung des Open Access Publizierens in den Quantenwissenschaften}},
  volume = {8},
  pages = {1283},
  month = mar,
  year = {2024}
}

@article{Renner2024,
  title = {Compatibility of Generalized Noisy Qubit Measurements},
  author = {Renner, Martin J.},
  journal = {Phys. Rev. Lett.},
  volume = {132},
  issue = {25},
  pages = {250202},
  numpages = {8},
  year = {2024},
  month = {Jun},
  publisher = {American Physical Society},
  doi = {10.1103/PhysRevLett.132.250202},
  url = {https://link.aps.org/doi/10.1103/PhysRevLett.132.250202}
}

@article{Schmid2024reviewreformulation,
  doi = {10.22331/q-2024-01-03-1217},
  url = {https://doi.org/10.22331/q-2024-01-03-1217},
  title = {A review and reformulation of macroscopic realism: resolving its deficiencies using the framework of generalized probabilistic theories},
  author = {Schmid, David},
  journal = {{Quantum}},
  issn = {2521-327X},
  publisher = {{Verein zur F{\"{o}}rderung des Open Access Publizierens in den Quantenwissenschaften}},
  volume = {8},
  pages = {1217},
  month = jan,
  year = {2024}
}

@article{Catani2023whyinterference,
  doi = {10.22331/q-2023-09-25-1119},
  url = {https://doi.org/10.22331/q-2023-09-25-1119},
  title = {Why interference phenomena do not capture the essence of quantum theory},
  author = {Catani, Lorenzo and Leifer, Matthew and Schmid, David and Spekkens, Robert W.},
  journal = {{Quantum}},
  issn = {2521-327X},
  publisher = {{Verein zur F{\"{o}}rderung des Open Access Publizierens in den Quantenwissenschaften}},
  volume = {7},
  pages = {1119},
  month = sep,
  year = {2023}
}

@article{Wolfe2020quantifyingbell,
	title        = {Quantifying {B}ell: the {R}esource {T}heory of {N}onclassicality of {C}ommon-{C}ause {B}oxes},
	author       = {Wolfe, Elie and Schmid, David and Sainz, Ana Bel{\'{e}}n and Kunjwal, Ravi and Spekkens, Robert W.},
	year         = 2020,
	month        = jun,
	journal      = {{Quantum}},
	volume       = 4,
	pages        = 280,
	doi          = {10.22331/q-2020-06-08-280},
	issn         = {2521-327X},
	url          = {https://doi.org/10.22331/q-2020-06-08-280}
}

@article{sq,
	title        = {All Entangled Quantum States Are Nonlocal},
	author       = {Buscemi, Francesco},
	year         = 2012,
	month        = may,
	journal      = {Phys. Rev. Lett.},
	publisher    = {American Physical Society},
	volume       = 108,
	pages        = 200401,
	doi          = {10.1103/PhysRevLett.108.200401},
	url          = {https://link.aps.org/doi/10.1103/PhysRevLett.108.200401},
	issue        = 20,
	numpages     = 5
}

@article{Uola2020review,
  title = {Quantum steering},
  author = {Uola, Roope and Costa, Ana C. S. and Nguyen, H. Chau and G\"uhne, Otfried},
  journal = {Rev. Mod. Phys.},
  volume = {92},
  issue = {1},
  pages = {015001},
  numpages = {40},
  year = {2020},
  month = {Mar},
  publisher = {American Physical Society},
  doi = {10.1103/RevModPhys.92.015001},
  url = {https://link.aps.org/doi/10.1103/RevModPhys.92.015001}
}

@article{Schmid2022Stabilizer,
  title = {Uniqueness of Noncontextual Models for Stabilizer Subtheories},
  author = {Schmid, David and Du, Haoxing and Selby, John H. and Pusey, Matthew F.},
  journal = {Phys. Rev. Lett.},
  volume = {129},
  issue = {12},
  pages = {120403},
  numpages = {6},
  year = {2022},
  month = {Sep},
  publisher = {American Physical Society},
  doi = {10.1103/PhysRevLett.129.120403},
  url = {https://link.aps.org/doi/10.1103/PhysRevLett.129.120403}
}

@article{Skrzypczyk2014,
	title        = {Quantifying {{Einstein}}-{{Podolsky}}-{{Rosen Steering}}},
	author       = {Skrzypczyk, Paul and Navascu{\'e}s, Miguel and Cavalcanti, Daniel},
	year         = 2014,
	month        = may,
	journal      = {Physical Review Letters},
	volume       = 112,
	number       = 18,
	pages        = 180404,
	doi          = {10.1103/PhysRevLett.112.180404},
	abstract     = {Einstein-Podolsky-Rosen steering is a form of bipartite quantum correlation that is intermediate between entanglement and Bell nonlocality. It allows for entanglement certification when the measurements performed by one of the parties are not characterized (or are untrusted) and has applications in quantum key distribution. Despite its foundational and applied importance, Einstein-Podolsky-Rosen steering lacks a quantitative assessment. Here we propose a way of quantifying this phenomenon and use it to study the steerability of several quantum states. In particular, we show that every pure entangled state is maximally steerable and the projector onto the antisymmetric subspace is maximally steerable for all dimensions; we provide a new example of one-way steering and give strong support that states with positive-partial transposition are not steerable.}
}

@article{catani2023aspects,
  title = {Aspects of the phenomenology of interference that are genuinely nonclassical},
  author = {Catani, Lorenzo and Leifer, Matthew and Scala, Giovanni and Schmid, David and Spekkens, Robert W.},
  journal = {Phys. Rev. A},
  volume = {108},
  issue = {2},
  pages = {022207},
  numpages = {11},
  year = {2023},
  month = {Aug},
  publisher = {American Physical Society},
  doi = {10.1103/PhysRevA.108.022207},
  url = {https://link.aps.org/doi/10.1103/PhysRevA.108.022207}
}

@article{Wagner2024coherence,
  doi = {10.22331/q-2024-02-05-1240},
  url = {https://doi.org/10.22331/q-2024-02-05-1240},
  title = {Coherence and contextuality in a {M}ach-{Z}ehnder interferometer},
  author = {Wagner, Rafael and Camillini, Anita and Galv{\~{a}}o, Ernesto F.},
  journal = {{Quantum}},
  issn = {2521-327X},
  publisher = {{Verein zur F{\"{o}}rderung des Open Access Publizierens in den Quantenwissenschaften}},
  volume = {8},
  pages = {1240},
  month = feb,
  year = {2024}
}

@article{PhysRevLett.119.220402,
  title = {Communication Games Reveal Preparation Contextuality},
  author = {Hameedi, Alley and Tavakoli, Armin and Marques, Breno and Bourennane, Mohamed},
  journal = {Phys. Rev. Lett.},
  volume = {119},
  issue = {22},
  pages = {220402},
  numpages = {5},
  year = {2017},
  month = {Nov},
  publisher = {American Physical Society},
  doi = {10.1103/PhysRevLett.119.220402},
  url = {https://link.aps.org/doi/10.1103/PhysRevLett.119.220402}
}

@article{Zjawin2023resourcetheoryof,
  doi = {10.22331/q-2023-10-10-1134},
  url = {https://doi.org/10.22331/q-2023-10-10-1134},
  title = {The resource theory of nonclassicality of channel assemblages},
  author = {Zjawin, Beata and Schmid, David and Hoban, Matty J. and Sainz, Ana Bel{\'{e}}n},
  journal = {{Quantum}},
  issn = {2521-327X},
  publisher = {{Verein zur F{\"{o}}rderung des Open Access Publizierens in den Quantenwissenschaften}},
  volume = {7},
  pages = {1134},
  month = oct,
  year = {2023}
}

@article{dagmar1998,
  title = {Optimal Eavesdropping in Quantum Cryptography with Six States},
  author = {Bru\ss{}, Dagmar},
  journal = {Phys. Rev. Lett.},
  volume = {81},
  issue = {14},
  pages = {3018--3021},
  numpages = {0},
  year = {1998},
  month = {Oct},
  publisher = {American Physical Society},
  doi = {10.1103/PhysRevLett.81.3018},
  url = {https://link.aps.org/doi/10.1103/PhysRevLett.81.3018}
}

@article{operationalks,
  title = {{From the Kochen-Specker Theorem to Noncontextuality Inequalities without Assuming Determinism}},
  author = {Kunjwal, Ravi and Spekkens, Robert W.},
  journal = {Phys. Rev. Lett.},
  volume = {115},
  issue = {11},
  pages = {110403},
  numpages = {5},
  year = {2015},
  month = {Sep},
  publisher = {American Physical Society},
  doi = {10.1103/PhysRevLett.115.110403},
  url = {https://link.aps.org/doi/10.1103/PhysRevLett.115.110403}
}

@article{catani2024resource,
  doi = {10.22331/q-2026-04-21-2077},
  url = {https://doi.org/10.22331/q-2026-04-21-2077},
  title = {Resource-theoretic hierarchy of contextuality for general probabilistic theories},
  author = {Catani, Lorenzo and Galley, Thomas D. and Gonda, Tom{\'{a}}{\v{s}}},
  journal = {{Quantum}},
  issn = {2521-327X},
  publisher = {{Verein zur F{\"{o}}rderung des Open Access Publizierens in den Quantenwissenschaften}},
  volume = {10},
  pages = {2077},
  month = apr,
  year = {2026}
}

@article{schmid2024PTM,
      title={Noncontextuality inequalities for prepare-transform-measure scenarios}, 
      author={David Schmid and Roberto D. Baldijão and John H. Selby and Ana Belén Sainz and Robert W. Spekkens},
      year={2024},
      eprint={2407.09624},
      journal={arXiv},
      primaryClass={quant-ph},
      url={https://arxiv.org/abs/2407.09624}, 
}

@article{mazurek2016experimental,
  title={An experimental test of noncontextuality without unphysical idealizations},
  author={Mazurek, Michael D and Pusey, Matthew F and Kunjwal, Ravi and Resch, Kevin J and Spekkens, Robert W},
  journal={Nature communications},
  volume={7},
  number={1},
  pages={1--7},
  year={2016},
  publisher={Nature Publishing Group},
  url={https://www.nature.com/articles/ncomms11780},
  doi={10.1038/ncomms11780}
}

@misc{schmid2020unscrambling,
      title={Unscrambling the omelette of causation and inference: The framework of causal-inferential theories},
      author={David Schmid and John H. Selby and Robert W. Spekkens},
      year={2021},
      eprint={2009.03297},
      archivePrefix={arXiv},
      primaryClass={quant-ph}
}

@article{bowles2018device,
	title        = {Device-independent entanglement certification of all entangled states},
	author       = {Bowles, Joseph and {\v{S}}upi{\'c}, Ivan and Cavalcanti, Daniel and Ac{\'\i}n, Antonio},
	year         = 2018,
	journal      = {Phys. Rev. Lett.},
	publisher    = {APS},
	volume       = 121,
	number       = 18,
	pages        = 180503,
    doi     = {10.1103/PhysRevLett.121.180503}
}

@article{comar2024contextuality,
  title = {Contextuality in Anomalous Heat Flow},
  author = {Comar, Naim E. and Cius, Danilo and Santos, Luis F. and Wagner, Rafael and Amaral, B\'arbara},
  journal = {PRX Quantum},
  volume = {6},
  issue = {3},
  pages = {030359},
  numpages = {28},
  year = {2025},
  month = {Sep},
  publisher = {American Physical Society},
  doi = {10.1103/f68k-cjx4},
  url = {https://link.aps.org/doi/10.1103/f68k-cjx4}
}

@article{jokinen2024nobroadcasting,
  title = {No-Broadcasting Characterizes Operational Contextuality},
  author = {Jokinen, Pauli and Weilenmann, Mirjam and Pl\'avala, Martin and Pellonp\"a\"a, Juha-Pekka and Kiukas, Jukka and Uola, Roope},
  journal = {Phys. Rev. Lett.},
  volume = {133},
  issue = {24},
  pages = {240201},
  numpages = {7},
  year = {2024},
  month = {Dec},
  publisher = {American Physical Society},
  doi = {10.1103/PhysRevLett.133.240201},
  url = {https://link.aps.org/doi/10.1103/PhysRevLett.133.240201}
}

@article{Kuroiwa2024b,
  title = {Robustness- and weight-based resource measures without convexity restriction: Multicopy witness and operational advantage in static and dynamical quantum resource theories},
  author = {Kuroiwa, Kohdai and Takagi, Ryuji and Adesso, Gerardo and Yamasaki, Hayata},
  journal = {Phys. Rev. A},
  volume = {109},
  issue = {4},
  pages = {042403},
  numpages = {27},
  year = {2024},
  month = {Apr},
  publisher = {American Physical Society},
  doi = {10.1103/PhysRevA.109.042403},
  url = {https://link.aps.org/doi/10.1103/PhysRevA.109.042403}
}
\end{document}